
\newif\ifStable\Stabletrue

\documentclass[11pt,letterpaper]{article}
\usepackage[margin=1.25in,dvips]{geometry}
\usepackage{amsthm}
\usepackage{fullpage}

\usepackage{times}
\usepackage{graphicx}
\usepackage{amssymb}
\usepackage{amsmath}
\usepackage{float}
\usepackage{url}
\usepackage{enumerate}
\usepackage{subfig}
\usepackage[boxed,section]{algorithm}
\usepackage{algpseudocode}
\usepackage{color}
\usepackage[table]{xcolor}

\newtheorem{theorem}{Theorem}[section]
\newtheorem{lemma}[theorem]{Lemma}

\newtheorem{definition}[theorem]{Definition}
\newtheorem{claim}[theorem]{Claim}

\newcommand{\sabs}[1]{|#1|}
\newcommand{\abs}[1]{\left|#1\right|}
\newcommand{\floor}[1]{\left\lfloor#1\right\rfloor}

\newcommand{\norm}[2]{\lVert#2\rVert_{#1}}

\newcommand{\wh}{\widehat}

\newcommand{\hide}[1]{}

{\makeatletter
 \gdef\xxxmark{%
   \expandafter\ifx\csname @mpargs\endcsname\relax 
     \expandafter\ifx\csname @captype\endcsname\relax 
       \marginpar{xxx}
     \else
       xxx 
     \fi
   \else
     xxx 
   \fi}
 \gdef\xxx{\@ifnextchar[\xxx@lab\xxx@nolab}
 \long\gdef\xxx@lab[#1]#2{{\bf [\xxxmark #2 ---{\sc #1}]}}
 \long\gdef\xxx@nolab#1{{\bf [\xxxmark #1]}}
\long\gdef\xxx@lab[#1]#2{}\long\gdef\xxx@nolab#1{}%
}

\DeclareMathOperator*{\median}{median}
\DeclareMathOperator*{\mean}{mean}
\DeclareMathOperator{\supp}{supp}

\newcommand{\andop}{\text{ and }}
\DeclareMathOperator{\round}{round}

\DeclareMathOperator*{\E}{\mathbb{E}}
\DeclareMathOperator{\symdiff}{\triangle}
\newcommand\R{\mathbb{R}}
\newcommand{\C}{{\mathbb C}}

\newcommand\eps{\epsilon}

\newcommand{\xref}[1]{\S\ref{#1}}

\newcommand{\textred}[1]{\textcolor{red}{#1}}
\newenvironment{Itemize}%
{\begin{itemize}%
\setlength{\itemsep}{0pt}%
\setlength{\topsep}{0pt}%
\setlength{\partopsep}{0pt}%
\setlength{\parskip}{0pt}}%
{\end{itemize}}
\setlength{\leftmargini}{9pt}%

\newenvironment{Enumerate}%
{\begin{enumerate}%
\setlength{\itemsep}{0pt}%
\setlength{\topsep}{0pt}%
\setlength{\partopsep}{0pt}%
\setlength{\parskip}{0pt}}%
{\end{enumerate}}
\setlength{\leftmargini}{9pt}%

\newfont{\aufnt}{phvr8t at 12pt}
\newfont{\affaddr}{phvr8t at 10pt}

\begin{document}

\title{Sample-Optimal Average-Case Sparse Fourier Transform in Two Dimensions}

\author{Badih Ghazi \and Haitham Hassanieh \and Piotr Indyk \and Dina Katabi \and Eric Price \and Lixin Shi \\ Massachusetts Institute of Technology}
\date{{\tt \{badih,haithamh,indyk,dk,ecprice,lixshi\}@mit.edu}}

\maketitle
\begin{abstract}
  We present the first sample-optimal sublinear time algorithms for
  the sparse Discrete Fourier Transform over a two-dimensional
  $\sqrt{n} \times \sqrt{n}$ grid.  Our algorithms are analyzed for
  \emph{average case} signals.  For signals whose spectrum is
  exactly sparse, our algorithms use $O(k)$ samples and run in $O(k
  \log k)$ time, where $k$ is the expected sparsity of the signal.
  For signals whose spectrum is approximately sparse,
  our algorithm uses $O(k \log n)$ samples and runs in $O(k \log^2 n)$
  time; the latter algorithm works for $k=\Theta(\sqrt{n})$. The
  number of samples used by our algorithms matches the known lower
  bounds for the respective signal models.

  By a known reduction, our algorithms give similar results for the
  one-dimensional sparse Discrete Fourier Transform when $n$ is a
  power of a small composite number (e.g., $n = 6^t$).
\end{abstract}

\section{Introduction}\label{sec:intro}

The Discrete Fourier Transform (DFT) is a powerful tool used in many
 domains. Multimedia data sets, including video and
images, are typically processed in the frequency domain to compress
the data~\cite{jpeg,mpeg2,ImageVideoCompression}.
 Medicine and
biology rely on the Fourier transform to analyze the output of a
variety of tests and experiments including MRI~\cite{MRINishimura},
NMR~\cite{nmrSIFT} and ultrasound
imaging~\cite{Ultrasoundbook}. Other applications include astronomy and radar systems.

The fastest known algorithm for computing the DFT is the Fast Fourier
Transform (FFT). It computes the DFT of a signal with size $n$
in $O(n \log n)$ time. Although it is not known whether this algorithm is optimal, any general algorithm for computing the exact DFT must take time at least proportional to its output size,
i.e., $\Omega(n)$. In many applications, however, most of the Fourier coefficients of a
signal are small or equal to zero, i.e., the output of the DFT is
(approximately) {\em sparse}. This sparsity provides the rationale underlying compression
schemes for image and video signals such as JPEG and MPEG. In fact, all of the aforementioned applications involve sparse data. 

For sparse signals, the $\Omega(n)$ lower bound for the complexity of
DFT no longer applies.  If a signal has a small number $k$ of nonzero
Fourier coefficients---the {\em exactly $k$-sparse} case---the output
of the Fourier transform can be represented succinctly using only $k$
coefficients. Hence, for such signals, one may hope for a DFT
algorithm whose runtime is sublinear in the signal size $n$. Even
in the more general {\em approximately $k$-sparse} case,
it is possible in principle to find the large components of its Fourier transform in sublinear time. 

The past two decades have witnessed significant advances in sublinear
sparse Fourier algorithms.  The first such algorithm (for the Hadamard
transform) appeared in~\cite{KM} (building on~\cite{GL}). Since then,
several sublinear sparse Fourier algorithms for complex inputs have been
discovered \cite{Man,GGIMS,AGS,GMS,Iw,Ak,HIKP,HIKP2,LWC,BCGLS,HAKI}.
The most efficient of those algorithms\footnote{See the discussion in the Related Work section.}, given in~\cite{HIKP2}, offers the following performance guarantees:
   \begin{Itemize}
   \item For signals that are exactly $k$-sparse, the algorithm runs in $O(k \log n)$ time.
   \item For the approximately sparse signals, the algorithm runs in $O(k \log n \log(n/k))$ time.
 \end{Itemize}

 Although the aforementioned algorithms are very efficient, they
 nevertheless suffer from limitations.  Perhaps the main limitation is
 that their sample complexity bounds are equal to the their running
 times.  In particular, the sample complexity of the first algorithm
 (for the exactly $k$-sparse case) is $\Theta(k \log n)$, while the sample
 complexity of the second algorithm (approximately sparse) is $\Theta(k \log(n) \log(n/k))$.  The
 first bound is suboptimal by a logarithmic factor, as it is known
 that one can recover any signal with $k$ nonzero Fourier
 coefficients from $O(k)$ samples~\cite{reed-solomon}, albeit in
 super-linear time. The second bound is a logarithmic factor away from
 the lower bound of $\Omega(k \log(n/k))$~\cite{PW} established for
 non-adaptive algorithms\footnote{An algorithm is {\em adaptive} if it
   selects the samples based on the values of the previously sampled
   coordinates. If the positions of the samples are chosen in advance
   of the sampling process, the algorithm is called {\em
     non-adaptive}. All algorithms given in this paper are
   non-adaptive.}; a slightly weaker lower bound of $\Omega(k
 \log(n/k) /\log \log n)$ applies to adaptive algorithms as
 well \cite{HIKP2}. In most applications, low sample complexity is at
 least as important as efficient running time, as it implies reduced
 signal acquisition or communication cost.

 Another limitation of the prior algorithms is that most of them are
 designed for one-dimensional signals. This is unfortunate, since
 multi-dimensional instances of DFT are often particularly
 sparse. This situation is somewhat alleviated by the fact that the
 two- dimensional DFT over $p \times q$ grids can be reduced to the
 one- dimensional DFT over a signal of length
 $pq$~\cite{GMS,Iw-arxiv}. However, the reduction applies only if $p$
 and $q$ are relatively prime, which excludes the most typical case of $m
 \times m$ grids where $m$ is a power of $2$. The only prior algorithm
 that applies to general $m\times m$ grids, due to~\cite{GMS}, has
 $O(k \log^c n)$ sample and time complexity for a rather large value of
 $c$.  If $n$ is a power of $2$, a two-dimensional adaptation of
 the~\cite{HIKP} algorithm (outlined in the appendix) has roughly $O(k
 \log^3 n)$ time and sample complexity.

\paragraph{Our results} In this paper, we present the first sample-optimal sublinear time algorithms for the Discrete Fourier Transform over a two- dimensional $\sqrt{n} \times \sqrt{n}$ grid.
Unlike the aforementioned results, our algorithms 
are analyzed in the {\em average case}. Our input distributions are natural. For the exactly sparse case, we assume the Bernoulli model: each spectrum coordinate is nonzero with probability $k/n$, in which case the entry assumes an arbitrary value predetermined for that position\footnote{Note that this model subsumes the scenario where the values of the nonzero coordinates are chosen i.i.d. from some distribution.}.  For the approximately sparse case, we assume that the spectrum $\wh{x}$ of the signal is a sum of two vectors: the signal vector, chosen from the Bernoulli distribution, and the noise vector, chosen from the Gaussian distribution (see Section~\xref{sec:defs} Preliminaries for the complete definition). 
These or similar\footnote{A popular alternative is to use the hypergeometric distribution over the set of nonzero entries instead of the Bernoulli distribution. 
The advantage of the former is that it yields vectors of sparsity {\em exactly} equal to $k$.
In this paper we opted for the Bernoulli model since it is simpler to analyze. However, both models are quite similar. In particular,  for large enough $k$, the actual sparsity of vectors in the Bernoulli model is sharply concentrated around $k$.} distributions
 are often used as test cases for empirical evaluations of sparse Fourier Transform algorithms \cite{IGS,HIKP,LWC} or theoretical analysis of their performance~\cite{LWC}.
 
 The algorithms succeed with a constant probability. The notion of success depends on the scenario considered. For the exactly sparse case, an algorithm is successful if it recovers the spectrum exactly. 
 For the approximately sparse case, the algorithm is successful if it  reports a signal with spectrum $\wh{z}$ such
  that 
 \begin{align}
 \label{eq:guarantee}
 \norm{2}{\wh{z}-\wh{x}}^2 = O( \sigma^2 n) +\norm{2}{\wh{x}}^2/n^c
 \end{align}
 where $\sigma^2$ denotes the variance of the normal distributions
 defining each coordinate of the noise vector, and where $c$ is any constant. Note that any
 $k$-sparse approximation to $\wh{x}$ has error $\Omega(\sigma^2 n)$ with overwhelming probability, and that the second term in the bound  in Equation~\ref{eq:guarantee} is subsumed by the first term as long as the signal-to-noise ratio is at most polynomial, i.e., $\norm{2}{\wh{x}} \le n^{O(1)} \sigma$.
 See Section~\xref{sec:defs} for further discussion. 

The running time and sample complexity bounds are depicted in the following table. 
We assume that $\sqrt{n}$ is a power of $2$.

\begin{center}
\begin{tabular}{|c|c|c|c|c|}
\hline
Input & Samples & Time & Assumptions\\
\hline
Sparse & $k$ & $k \log k$ & $k=O(\sqrt{n})$ \\
Sparse & $k$ & $k \log k $ & \\
& &  $+ k (\log\log n)^{O(1)} $ & \\
Approx. sparse & $k \log n$ & $k \log^2 n$ & $k=\Theta(\sqrt{n})$ \\
\hline
\end{tabular}
\end{center}

The key feature of our algorithms is that their sample complexity bounds
are optimal, at least in the non-adaptive case.  For the exactly
sparse case, the lower bound of $\Omega(k)$ is immediate.  For the
approximately sparse case, we note that the $\Omega(k \log(n/k))$
lower bound of~\cite{PW} holds even if the spectrum is the sum of a
$k$-sparse signal vector in $\{0, 1, -1\}^n$ and Gaussian noise. The
latter is essentially a special case of the distributions handled by
our algorithm, and we give a full reduction in
Appendix~\ref{app:lower}.  From the running time perspective, our
algorithms are slightly faster than those in \cite{HIKP2}, with the
improvement occurring for low values of $k$.

An additional feature of the first algorithm is its simplicity and therefore its low ``big-Oh'' overhead. Our preliminary experiments on random sparse data indicate that the algorithm for exactly sparse case yields substantial improvement over 2D FFTW, a highly efficient implementation of 2D FFT. In particular, for $n=2^{22}$ (a $2048\times 2048$ signal) and $k=1024$, the algorithm is 100$\times$ faster than 2D FFTW. To the best of our knowledge, this is the first implementation of a 2D sparse FFT algorithm. For the same $n$ and $k$, the algorithm has a comparable running time (1.5$\times$ faster) to the 1D exactly sparse FFT in~\cite{HIKP2} while using 8$\times$ fewer samples. We expect that the algorithm or its variant will be efficient on non-random data as well, since the algorithm can randomize the positions of the coefficients using random two-dimensional affine transformations (cf. Appendix~\ref{app:inefficient_2D}).  Even though the resulting distribution is not fully random, it has been observed that random affine transformations work surprisingly well on real data~\cite{MV}.

\paragraph{Our techniques} 
Our first algorithm for $k$-sparse signals is based on the following idea. Recall that one way to compute the two-dimensional DFT of a signal $x$ is to apply the one-dimensional DFT to each column and then to each row. Suppose that $k=a \sqrt{n}$ for $a<1$. In this case, the expected number of nonzero entries in each row is less than $1$. If {\em every} row contained exactly one nonzero entry, then the  DFT could be computed via the following two step process. In the first step, we select  the first two columns of $x$, denoted by $u^{(0)}$ and $u^{(1)}$, and compute their DFTs $\wh{u}^{(0)}$ and $\wh{u}^{(1)}$. 
Let $j_i$ be the index of the unique nonzero entry in the $i$-th row of $\wh{x}$, and let $a$ be its value.  Observe that
 $\wh{u}^{(0)}_{i} = a$ and $\wh{u}^{(1)}_{i} = a \omega^{-j_i}$ (where $\omega$ is a primitive $\sqrt{n}$-th root of unity), as these are the first two entries of the inverse Fourier transform of a $1$-sparse signal $a e_{j_i}$. Thus, in the second step, we can retrieve the value of the nonzero entry, equal to $\wh{u}^{(0)}_{i}$, as well as the index $j_i$ from the phase of the ratio $\wh{u}^{(1)}_{i} /\wh{u}^{(0)}_{i}$ (this technique was introduced in~\cite{HIKP2,LWC} and was referred to as the ``OFDM trick''). The total time is dominated by the cost of the two DFTs of the columns, which is $O(\sqrt{n} \log n)$. Since the algorithm queries only a constant number of columns, its sample complexity is $O(\sqrt{n})$.
 
 In general, the distribution of the nonzero entries over the rows can be non-uniform. Thus, our actual algorithm alternates the above recovery process between the columns and rows (see Figure~\ref{fig:innerloop} for an illustration).  Since the OFDM trick works only on $1$-sparse columns/rows, we check the $1$-sparsity of each column/row by sampling a constant number of additional entries.  We  then show that, as long as the sparsity constant $a$ is small enough, this process recovers all entries in a logarithmic number steps with constant probability. The proof uses the fact that the probability of the existence of an ``obstructing configuration'' of nonzero entries which makes the process deadlocked (e.g., see Figure~\ref{fig:deadlock}) is upper bounded by a small constant.


\begin{figure*}
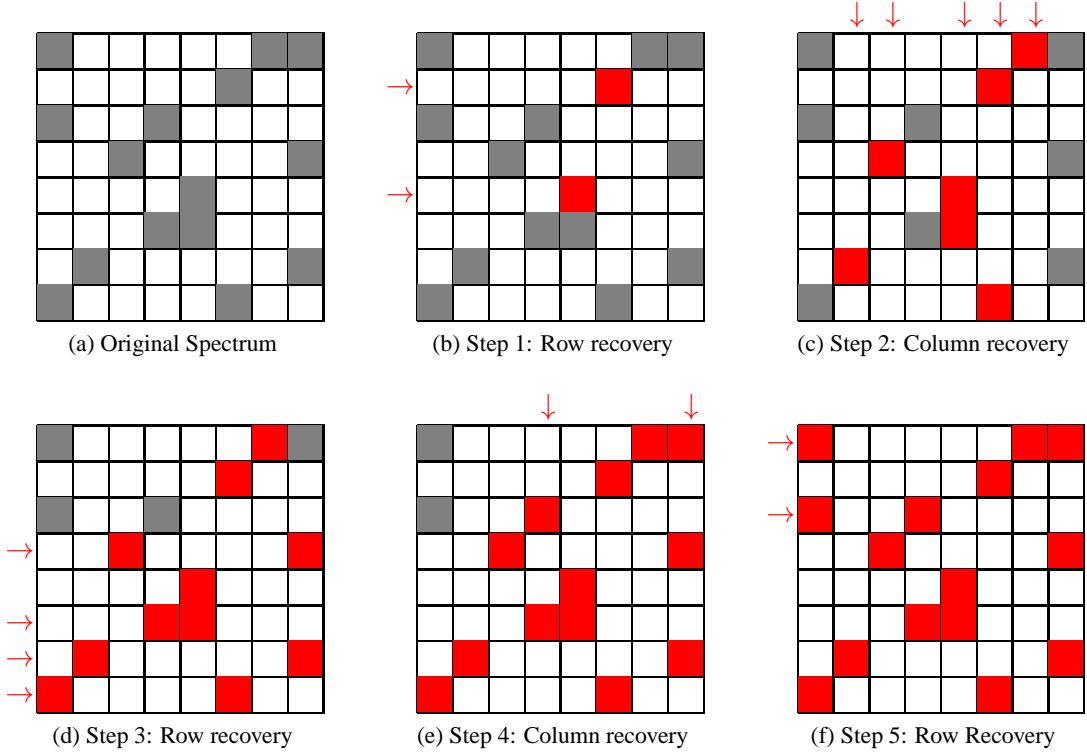

  \centering
  \subfloat[][Original Spectrum]{
    \begin{tabular}{p{0.2cm}|p{0.04cm}|p{0.04cm}|p{0.04cm}|p{0.04cm}|p{0.04cm}|p{0.04cm}|p{0.04cm}|p{0.04cm}|l}
        \multicolumn{10}{l}{} \\
        \cline{2-9}
         & \cellcolor{gray}  &  &  &  &  &  & \cellcolor{gray} & \cellcolor{gray} &  \\ \cline{2-9}
         & &  &  &  &  & \cellcolor{gray} &  & &  \\ \cline{2-9} 
         & \cellcolor{gray} &  & &\cellcolor{gray}   &  &  &  & &  \\ \cline{2-9}
         &  &  & \cellcolor{gray}  &  &  &  &  & \cellcolor{gray} &  \\ \cline{2-9} 
         & &  &  &  &  \cellcolor{gray} &  & &  &  \\ \cline{2-9} 
         & &  &  &\cellcolor{gray}  &  \cellcolor{gray} & &  & & \\ \cline{2-9} 
         & & \cellcolor{gray} &  &  &  &  &  &\cellcolor{gray} & \\ \cline{2-9} 
         & \cellcolor{gray}  &  &  &  &  & \cellcolor{gray} &  & & \\ \cline{2-9} 
    \end{tabular}
  }
  \subfloat[][Step 1: Row recovery]{
    \begin{tabular}{p{0.2cm}|p{0.04cm}|p{0.04cm}|p{0.04cm}|p{0.04cm}|p{0.04cm}|p{0.04cm}|p{0.04cm}|p{0.04cm}|l}
        \multicolumn{10}{l}{} \\
        \cline{2-9}
        & \cellcolor{gray} &  &  &  &  &  & \cellcolor{gray} & \cellcolor{gray} &  \\ \cline{2-9}
        \textred{$\rightarrow$} & &  &  &  &  & \cellcolor{red} &  & &  \\ \cline{2-9}
        & \cellcolor{gray} &  & &\cellcolor{gray}   &  &  &  &  &   \\ \cline{2-9}
        & &  & \cellcolor{gray}  &  &  &  &  & \cellcolor{gray} &    \\ \cline{2-9}
        \textred{$\rightarrow$}& &  &  &  &   \cellcolor{red} & & & & \\ \cline{2-9} 
        & &  &  &\cellcolor{gray}  &   \cellcolor{gray} & &  & &     \\ \cline{2-9}
        & & \cellcolor{gray} &  &  &  &  &  &\cellcolor{gray}  &    \\ \cline{2-9}
        & \cellcolor{gray}  &  &  &  &  & \cellcolor{gray} &  & & \\ \cline{2-9}
    \end{tabular}
  }
  \subfloat[][Step 2: Column recovery]{
    \begin{tabular}{p{0.2cm}|p{0.04cm}|p{0.04cm}|p{0.04cm}|p{0.04cm}|p{0.04cm}|p{0.04cm}|p{0.04cm}|p{0.04cm}|l}
        \multicolumn{2}{l}{} & \multicolumn{1}{p{0.04cm}}{\textred{$\downarrow$}} &\multicolumn{1}{p{0.04cm}}{\textred{$\downarrow$}} & \multicolumn{1}{l}{} &\multicolumn{1}{p{0.04cm}}{\textred{$\downarrow$}}  &\multicolumn{1}{p{0.04cm}}{\textred{$\downarrow$}} & \multicolumn{1}{p{0.04cm}}{\textred{$\downarrow$}} & \multicolumn{2}{l}{}   \\
        \cline{2-9}
        & \cellcolor{gray} &  &  &  &  &  & \cellcolor{red} & \cellcolor{gray} &\\ \cline{2-9}
        &  &  &  &  &  & \cellcolor{red} &  & &  \\ \cline{2-9}
        & \cellcolor{gray} &  & &\cellcolor{gray}   &  &  &  & &   \\ \cline{2-9}
        &  &  & \cellcolor{red}  &  &  &  &  & \cellcolor{gray} &   \\ \cline{2-9}
        &  &  &  &  &   \cellcolor{red} & & & &   \\ \cline{2-9}
        &  &  &  &\cellcolor{gray}  &   \cellcolor{red} & &  &  &    \\ \cline{2-9}
        &  & \cellcolor{red} &  &  &  &  &  &\cellcolor{gray} &    \\ \cline{2-9}
        & \cellcolor{gray}  &  &  &  &  & \cellcolor{red} &  & &  \\ \cline{2-9}
    \end{tabular}
  }
  \qquad
  \subfloat[][Step 3: Row recovery]{
    \begin{tabular}{p{0.2cm}|p{0.04cm}|p{0.04cm}|p{0.04cm}|p{0.04cm}|p{0.04cm}|p{0.04cm}|p{0.04cm}|p{0.04cm}|l}
        \multicolumn{10}{l}{} \\
        \cline{2-9}
        & \cellcolor{gray} &  &  &  &  &  & \cellcolor{red} & \cellcolor{gray} &\\ \cline{2-9}
        &  &  &  &  &  & \cellcolor{red} &  & &  \\ \cline{2-9}
        & \cellcolor{gray} &  & &\cellcolor{gray}   &  &  &  & &   \\ \cline{2-9}
       \textred{$\rightarrow$}  &  &  & \cellcolor{red}  &  &  &  &  & \cellcolor{red}  &   \\ \cline{2-9}
        &  &  &  &  &   \cellcolor{red} & & &  &   \\ \cline{2-9}
       \textred{$\rightarrow$}  &  &  &  &\cellcolor{red}  &   \cellcolor{red} & &  &  &    \\ \cline{2-9}
       \textred{$\rightarrow$}  &  & \cellcolor{red} &  &  &  &  &  &\cellcolor{red}  &    \\ \cline{2-9}
       \textred{$\rightarrow$}  & \cellcolor{red}  &  &  &  &  & \cellcolor{red} &  & &   \\ \cline{2-9}
    \end{tabular}
  }
  \subfloat[][Step 4: Column recovery]{
    \begin{tabular}{p{0.2cm}|p{0.04cm}|p{0.04cm}|p{0.04cm}|p{0.04cm}|p{0.04cm}|p{0.04cm}|p{0.04cm}|p{0.04cm}|l}
        \multicolumn{4}{l}{} &\multicolumn{1}{p{0.04cm}}{\textred{$\downarrow$}}  &\multicolumn{3}{l}{} & \multicolumn{1}{p{0.04cm}}{\textred{$\downarrow$}} & \multicolumn{1}{l}{}   \\
        \cline{2-9}
        & \cellcolor{gray} &  &  &  &  &  & \cellcolor{red} & \cellcolor{red} & \\ \cline{2-9}
        & &  &  &  &  & \cellcolor{red} &  &  & \\ \cline{2-9}
        & \cellcolor{gray} &  & &\cellcolor{red}   &  &  &  &  &  \\ \cline{2-9}
        & &  & \cellcolor{red}  &  &  &  &  & \cellcolor{red}  &  \\ \cline{2-9}
        & &  &  &  &   \cellcolor{red} & & & &   \\ \cline{2-9}
        & &  &  &\cellcolor{red}  &   \cellcolor{red} & &  &  &    \\ \cline{2-9}
        &  & \cellcolor{red} &  &  &  &  &  &\cellcolor{red}   &  \\ \cline{2-9}
        & \cellcolor{red}  &  &  &  &  & \cellcolor{red} &  &  & \\ \cline{2-9}
    \end{tabular}
  }
  \subfloat[][Step 5: Row Recovery]{
    \begin{tabular}{p{0.2cm}|p{0.04cm}|p{0.04cm}|p{0.04cm}|p{0.04cm}|p{0.04cm}|p{0.04cm}|p{0.04cm}|p{0.04cm}|l}
        \multicolumn{10}{l}{} \\
        \cline{2-9}
       \textred{$\rightarrow$}  & \cellcolor{red} &  &  &  &  &  & \cellcolor{red} & \cellcolor{red} &  \\ \cline{2-9}
         &   &  &  &  &  & \cellcolor{red} &  & &    \\ \cline{2-9}
       \textred{$\rightarrow$}  &  \cellcolor{red} &  & &\cellcolor{red}   &  &  &  & &     \\ \cline{2-9}
        &   &  & \cellcolor{red}  &  &  &  &  &   \cellcolor{red} &  \\ \cline{2-9}
        &   &  &  &  &   \cellcolor{red} & & & &     \\ \cline{2-9}
        &   &  &  &\cellcolor{red}  &   \cellcolor{red} & &  &  &      \\ \cline{2-9}
        &   & \cellcolor{red} &  &  &  &  &  &\cellcolor{red}&       \\ \cline{2-9}
        &  \cellcolor{red}  &  &  &  &  & \cellcolor{red} &  & &    \\ \cline{2-9}
    \end{tabular}
  }
  \caption{An illustration of the ``peeling'' recovery process on an $8\times8$ signal with 15 nonzero frequencies. In each step, the algorithm recovers all $1$-sparse columns and rows (the recovered entries are depicted in red). The process converges after a few steps.}
  \label{fig:innerloop}
\end{figure*}

\begin{figure*}
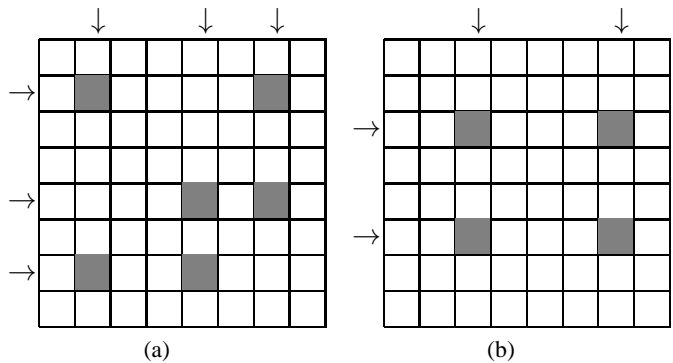

  \centering
    \subfloat[][]{
    \begin{tabular}{p{0.2cm}|p{0.04cm}|p{0.04cm}|p{0.04cm}|p{0.04cm}|p{0.04cm}|p{0.04cm}|p{0.04cm}|p{0.04cm}|l}
       \multicolumn{2}{l}{} &\multicolumn{1}{p{0.04cm}}{$\downarrow$}  &\multicolumn{2}{l}{} & \multicolumn{1}{p{0.04cm}}{$\downarrow$} & \multicolumn{1}{l}{} &\multicolumn{1}{p{0.04cm}}{$\downarrow$} & \multicolumn{2}{l}{} \\
        \cline{2-9}
         &  &  &  &  &  &  &   &   \\ \cline{2-9}
         $\rightarrow$ &  & \cellcolor{gray}   &  &  &  & & \cellcolor{gray} &  \\ \cline{2-9}
         &  &  &  &  &  &  & &  \\ \cline{2-9}
         &  & & &  &  &  &   &  \\ \cline{2-9}
         $\rightarrow$ &  &  &  &  & \cellcolor{gray} &  & \cellcolor{gray}  &   \\ \cline{2-9}
         &  &  &  & & & &  &\\ \cline{2-9}
         $\rightarrow$ &  & \cellcolor{gray}    &  &  & \cellcolor{gray} &  & &  \\ \cline{2-9}
         &  &  &  &  &  &  &   &   \\ \cline{2-9}
  \end{tabular}
  }
   \subfloat[][]{
    \begin{tabular}{p{0.2cm}|p{0.04cm}|p{0.04cm}|p{0.04cm}|p{0.04cm}|p{0.04cm}|p{0.04cm}|p{0.04cm}|p{0.04cm}|l}
       \multicolumn{3}{l}{} &\multicolumn{1}{p{0.04cm}}{$\downarrow$}  &\multicolumn{3}{l}{} &\multicolumn{1}{p{0.04cm}}{$\downarrow$} & \multicolumn{2}{l}{} \\
        \cline{2-9}
         &  &  &  &  &  &  &   &   \\ \cline{2-9}
         &  &  &  &  &  &  &   &   \\ \cline{2-9}
         $\rightarrow$ & &   & \cellcolor{gray}   &  &  & & \cellcolor{gray}    &    \\ \cline{2-9}
         &  &  &  &  &  &  & &  \\ \cline{2-9}
         &  & & &  &  &  &   &  \\ \cline{2-9}
         $\rightarrow$ &  &   & \cellcolor{gray}   &  &  & & \cellcolor{gray}   &     \\ \cline{2-9}
         &  &   &  &  & &  & &  \\ \cline{2-9}
         &  &  &  &  &  &  &   &   \\ \cline{2-9}

  \end{tabular}
  }
  \qquad

  \caption{Examples of obstructing sequences of nonzero entries. None of the remaining rows or columns has a sparsity of 1.}
  \label{fig:deadlock}
\end{figure*}

The algorithm is extended to the case of $k=o(\sqrt{n})$ via a reduction. 
Specifically, we subsample the signal $x$ by the reduction ratio
$R=\alpha\sqrt{n}/k$ for some small enough constant $\alpha$ in each dimension.
The subsampled signal $x'$ has dimension $\sqrt{m}\times\sqrt{m}$,
where $\sqrt{m} = \frac{k}{\alpha}$.  Since subsampling in time domain
corresponds to ``spectrum folding'', i.e., adding together all
frequencies with indices that are equal modulo $\sqrt{m}$, the nonzero
entries of $\wh{x}$ are mapped into the entries of $\wh{x'}$. It can
be seen that, with constant probability, the mapping is one-to-one. If
this is the case, we can use the earlier algorithm for sparse DFT to
compute the nonzero frequencies in $O(\sqrt{m} \log m) = O(\sqrt{k}
\log k)$ time, using $O(k)$ samples. We then use the OFDM trick to
identify the positions of those frequencies in $\wh{x}$.

Our second algorithm for the exactly sparse case works for all values
of $k$. The main idea behind it is to decode rows/columns with higher
sparsity than $1$.  First, we give a \emph{deterministic},
\emph{worst-case} algorithm for 1-dimensional sparse Fourier
transforms that takes $O(k^2 + k (\log \log n)^{O(1)})$ time.  This
algorithm uses the relationship between sparse recovery and
syndrome decoding of Reed-Solomon codes (due to
\cite{reed-solomon}). Although a simple application of the decoder
yields $O(n^2)$ decoding time, we show that by using appropriate
numerical subroutines one can in fact recover a $k$-sparse vector from
$O(k)$ samples in time $O(k^2 + k (\log\log n)^{O(1)})$\footnote{We
  note that, for $k = o(\log{n})$, this is the fastest known {\em
    worst-case} algorithm for the exactly sparse DFT.}.  In
particular, we use Berlekamp-Massey's algorithm for constructing the
error-locator polynomial and Pan's algorithm for finding its roots.
For our fast average-case, $2$-dimensional sparse Fourier transform
algorithm, we fold the spectrum into $B=\frac{k}{C \log k}$ bins for
some large constant $C$. Since the positions of the $k$ nonzero
frequencies are random, it follows that each bin receives
$t=\Theta(\log k)$ frequencies with high probability. We then take
$\Theta(t)$ samples of the time domain signal corresponding to each
bin, and recover the frequencies corresponding to those bins in $O(t^2
+ t (\log\log n)^{O(1)})$ time per bin, for a total time of $O(k \log
k + k (\log\log n)^{O(1)})$.

The above approach works as long as the number of nonzero coefficients
per column/row are highly concentrated. However, this is not the case
for $k \ll \sqrt{n} \log n$. We overcome this difficulty by replacing
a row by a sequence of rows. A technical difficulty is that the
process might lead to collisions of coefficients. We resolve this
issue by using a two level procedure, where the first level returns
the syndromes of colliding coefficients as opposed to the coefficients
themselves; the syndromes are then decoded at the second level.\xxx{We
  need this paragraph, since the algorithms are in the
  appendix. Hopefully the description makes sense.}

Our third algorithm works for \emph{approximately} sparse data, at
sparsity $\Theta(\sqrt{n})$.  Its general outline mimics that of the
first algorithm. Specifically, it alternates between decoding columns
and rows, assuming that they are $1$-sparse. The decoding subroutine
itself is similar to that of \cite{HIKP2} and uses $O(\log n)$
samples. The subroutine first checks whether the decoded entry is
large; if not, the spectrum is unlikely to contain any large entry,
and the subroutine terminates. The algorithm then subtracts the
decoded entry from the column and checks whether the resulting signal
contains no large entries in the spectrum (which would be the case if
the original spectrum was approximately $1$-sparse and the decoding
was successful). The check is done by sampling $O(\log n)$ coordinates
and checking whether their sum of squares is small. To prove that this
check works with high probability, we use the fact that a collection
of random rows of the Fourier matrix is likely to satisfy the
Restricted Isometry Property (RIP) of~\cite{CTao}.

A technical difficulty in the analysis of the algorithm is that the
noise accumulates in successive iterations.  This means that a
$1/\log^{O(1)} n$ fraction of the steps of the algorithm will fail.
However, we show that the dependencies are ``local'', which means that
our analysis still applies to a vast majority of the recovered
entries. We continue the iterative decoding for $\log \log n$ steps,
which ensures that all but a $1/\log^{O(1)} n$ fraction of the large
frequencies are correctly recovered. To recover the remaining
frequencies, we resort to algorithms with worst-case guarantees.

\paragraph{Extensions} Our algorithms have natural extensions to dimensions higher than $2$. We do not include them in this paper as the description and analysis are rather cumbersome. 

While no optimal result is known for the $1$-dimensional case, one can achieve
optimal sample complexity and efficient robust
recovery in the $\log n$-dimensional (Hadamard) case (\cite{Lev}, see also Appendix C.2 of~\cite{Gold}).
Our result demonstrates that even two dimensions give enough
flexibility for optimal sample complexity in the average case.  Due to
the equivalence between the two-dimensional case and the one-dimensional
case where $n$ is a product of different prime powers~\cite{GMS,Iw-arxiv}, our
algorithm also gives optimal sample complexity bounds for e.g., $n = 6^t$
in the average case.
\\
\\

\subsection{Related work}

As described in the introduction, currently the most efficient
algorithms for computing the sparse DFT are due to~\cite{HIKP2}.  For
signals that are exactly $k$-sparse, the first algorithm runs in $O(k
\log n)$ time.  For approximately sparse signals, the second algorithm
runs in $O(k \log n \log(n/k))$ time.  Formally, the latter algorithm
works for any signal $x$, and computes an approximation vector
$\wh{x}'$ that satisfies the $\ell_2/\ell_2$ approximation guarantee,
i.e., $\| \wh{x}-\wh{x}'\|_2 \le C \min_{k \text{-sparse } y }
\|\wh{x}-y\|_2$, where $C$ is some approximation factor and the
minimization is over $k$-sparse signals.  Note that this guarantee
generalizes that of Equation~(\ref{eq:guarantee}).

We also mention another efficient algorithm, due to~\cite{LWC},
designed for the exactly $k$-sparse model. The average case analysis
presented in that paper shows that the algorithm has $O(k)$ expected
sample complexity and runs in $O(k \log k)$ time. However, the
algorithm assumes that the input signal $x$ is specified as a {\em
  function} over an interval $[0,1]$ that can be sampled at arbitrary
positions, as opposed to a given discrete sequence of $n$ samples as
in our case. Thus, although very efficient, that algorithm does not
solve the Discrete Fourier Transform problem.

\newcommand{\hs}{h_{\sigma,b}}
\newcommand{\hsr}{h_{\sigma_r,b_r}}
\newcommand{\ps}{\pi_{\sigma,b}}
\newcommand{\os}{o_{\sigma,b}}
\newcommand{\comments}[1]{}

\section{Preliminaries}\label{sec:defs}
This section introduces the notation, assumptions and definitions
used in the rest of this paper.


\paragraph{Notation} Throughout the paper we assume that $\sqrt{n}$ is a power of $2$. We use $[m]$ to denote the set $\{ 0, \dotsc,
m-1\}$, and $[m]\times[m] = [m]^2$ to denote the $m\times m$ grid $\{(i,j) : i \in [m], j \in [m]\}$.
We define $\omega = e^{-2\pi \mathbf{i} / \sqrt{n}}$ to be a primitive $\sqrt{n}$-th
root of unity and $\omega' = e^{-2\pi \mathbf{i} / n}$ to be a primitive $n$-th
root of unity.  For any complex number $a$, we use $\phi(a) \in [0,
2\pi)$ to denote the {\em phase} of $a$.  \xxx{For a complex number $a$ and
a real positive number $b$, the expression $a \pm b$ denotes a complex
number $a'$ such that $\abs{a-a'} \le b$.}  For a 2D matrix $x \in \C^{\sqrt{n}\times \sqrt{n}}$,
its support is denoted by $\supp(x) \subseteq [\sqrt{n}]\times [\sqrt{n}]$.  We use
$\norm{0}{x}$ to denote $\abs{\supp(x)}$, the number of nonzero
coordinates of $x$.  Its 2D Fourier spectrum is denoted by $\wh{x}$, with
\[
\wh{x}_{i,j} = \frac{1}{\sqrt{n}}\sum_{l\in [\sqrt{n}]}{\sum_{m\in [\sqrt{n}]}{ \omega^{il+jm} x_{l,m}}}.
\]
Similarly, if $y$ is a frequency-domain signal, its inverse Fourier 
transform is denoted by $\check{y}$.

\paragraph{Definitions} 
The paper uses the comb filter used in~\cite{Iw,HIKP} (cf.~\cite{Man}). The filter can be generalized to $2$ dimensions as follows: 

Given $(\tau_r,\tau_c) \in [\sqrt{n}]\times[\sqrt{n}]$, and $B_r, B_c$
that divide $\sqrt{n}$, then for all $(i,j) \in[B_r]\times[B_c]$ set
\[
y_{i,j} = x_{i(\sqrt{n}/B_r) + \tau_r, j(\sqrt{n}/B_c) + \tau_c}.
\]
Then, compute the 2D DFT $\hat{y}$ of $y$. Observe that  $\hat{y}$ is a folded version of $\hat{x}$:
\[\hat{y}_{i,j} = \displaystyle \sum_{l \in [\sqrt{n}/B_r]}{\sum_{m \in [\sqrt{n}/B_c]}{ \hat{x}_{l B_r + i, m B_c + j} \omega^{-\tau_r(i + l B_r)-\tau_c(j + mB_c)}}}.
\]

\comments{
Thus,
$
\E_{\tau_r,\tau_c}[\sabs{\hat{y}_{i,j}}^2] =\displaystyle \sum_{l  \equiv j \bmod B_r}{\sum_{m \equiv j \bmod B_c}{\sabs{\hat{x}_{l,m}}^2}}.
$}

\paragraph{Distributions}
In the exactly sparse case,  we assume a Bernoulli model for the support of $\wh{x}$. This 
means that for all $(i,j) \in [\sqrt{n}]\times[\sqrt{n}]$, 
\[
\mbox{Pr}\{(i,j) \in \supp{(\wh{x})}\} = k/n 
\]
and thus $\E[\abs{\supp{(\wh{x})}}] = k$.
We assume an unknown predefined matrix $a_{i,j}$ of values in $\C$; if $\wh{x}_{i,j}$ is selected to be nonzero, its value is set to $a_{i,j}$.

\xxx{
For our second algorithm, we assume that $\wh{x}_{i,j} \in
  \{-L, \ldots, L\}$ for some precision parameter $L$. To simplify the
  bounds, we assume that $L =n^{O(1)}$; otherwise the $\log n$ term in
  the running time bound is replaced by $\log L$.}
\xxx{\textred{This doesn't seem needed; Pan's algorithm is needed to recover the frequency
positions, there is no assumption for the values.}}

In the approximately sparse case, we assume that the signal $\wh{x}$ is equal to $ \wh{x^*} + \wh{w} \in \C^{\sqrt{n} \times
  \sqrt{n}}$, where $\wh{x^*}_{i,j}$ is the ``signal'' and $\wh{w}$ is
the ``noise''.  In particular, $\wh{x^*}$ is drawn from the Bernoulli model,
where $\wh{x^*}_{i,j}$ is drawn from $\{0, a_{i,j}\}$ at
random independently for each $(i,j)$ for some values $a_{i,j}$ and
with $\E[|\supp(\wh{x^*})|] = k$.  We also require that
$\abs{a_{i,j}} \geq L$ for some parameter $L$.  $\wh{w}$ is a complex Gaussian
vector with variance $\sigma^2$ in both the real and imaginary axes
independently on each coordinate; we notate this as $\wh{w} \sim N_{\C}(0, \sigma^2I_n)$.  We will need that $L = C \sigma\sqrt{n/k}$
for a sufficiently large constant $C$, so that $\E[\norm{2}{\wh{x^*}}^2] \ge C
\E[\norm{2}{\wh{w}}^2]$.

\xxx{Something about precision/model}

\section{Basic Algorithm for the Exactly Sparse Case}\label{sec:inf}
The algorithm for the noiseless case depends on the sparsity $k$ where
$k =\E[\abs{\supp{(\wh{x})}}]$ for a Bernoulli distribution of the
support.

\subsection{Basic Exact Algorithm: $k = \Theta(\sqrt{n})$}
\label{subsec:basic_exact}

In this section, we focus on the regime $k = \Theta(\sqrt{n})$. Specifically, we will assume that $k = a\sqrt{n}$ for a (sufficiently small) constant $a>0$.

The algorithm \textsc{BasicExact2DSFFT} is described as Algorithm~\ref{a:basicexact}. 
The key idea is to fold the spectrum into bins using the comb filter 
defined in~\xref{sec:defs} and estimate frequencies which are isolated in a bin.
The algorithm takes the FFT of a row and as a result frequencies 
in the same columns will get folded into the same row bin. It
also takes the FFT of a column and consequently frequencies
in the same rows wil get folded into the same column bin. The algorithm
then uses the OFDM trick introduced in~\cite{HIKP2} to recover the columns and rows whose sparsity is 1.
It iterates between the column bins and row bins, subtracting the recovered frequencies
and estimating the remaining columns and rows whose sparsity is 1. An illustration of the 
algorithm running on an $8\times8$ signal with 15 nonzero frequencies is shown in Fig.~\ref{fig:innerloop} in Section 1. 
The algorithm also takes a constant number 
of extra FFTs of columns and rows to check for collisions within a bin and avoid errors resulting from 
estimating bins where the sparsity is greater than 1. The algorithm uses three functions:
\begin{itemize}
\item \textsc{FoldToBins}. This procedure folds the spectrum into $B_r \times B_c$ bins using the comb filter described~\xref{sec:defs}.
\item \textsc{BasicEstFreq}. Given the FFT of rows or columns, it estimates the frequency in the large bins. If there is no collision, i.e. if there is a single nonzero frequency in the bin, it adds this frequency to the result $\wh{w}$ and subtracts its contribution to the row and column bins. 
\item \textsc{BasicExact2DSFFT}. This performs the FFT of the rows and columns and then iterates \textsc{BasicEstFreq} between the rows and columns until is recovers $\wh{x}$.
\end{itemize}

\begin{algorithm}[!ht]
 \caption{Basic Exact 2D sparse FFT algorithm for $k = \Theta(\sqrt{n})$}\label{a:basicexact}
  \begin{algorithmic}
    \Procedure{FoldToBins}{$x$, $B_r$, $B_c$, $\tau_r$, $\tau_c$}
    \State $y_{i,j} = x_{i(\sqrt{n}/B_r) + \tau_r, j(\sqrt{n}/B_c) + \tau_c} $ for $(i,j) \in [B_r]\times[B_c]$,
    \State \Return $\wh{y}$, the DFT of $y$
    \EndProcedure
    \Procedure{BasicEstFreq}{$\wh{u}^{(T)}$, $\wh{v}^{(T)}$,$T$, IsCol}
    \State $\wh{w} \gets 0$.
    \State Compute $J=\{j: \sum_{\tau\in{T}}{|\wh{u}^{(\tau)}_j|} > 0\}$. 
    \For{$j \in J$}
    \State $b \gets \wh{u}^{(1)}_{j} / \wh{u}^{(0)}_{j}$.
    \State $i  \gets \text{round}(\phi(b)\frac{\sqrt{n}}{2\pi}) \bmod \sqrt{n}$. \Comment{$\phi(b)$ is the phase of $b$.}
    \State $s \gets \wh{u}^{(0)}_{j}$.
    \State  \Comment{Test whether the row or column is 1-sparse}
    \If{$\left(\sum_{\tau\in{T}}{|\wh{u}^{(\tau)}_j - s\omega^{-\tau i}|} == 0 \right)$}
        \If{IsCol}  \Comment {whether decoding column or row}
            \State $\wh{w}_{i,j} \gets s$.
        \Else
            \State $\wh{w}_{j,i} \gets s$.
        \EndIf
        \For{$\tau\in{T}$}
        \State $\wh{u}^{(\tau)}_j \gets 0$
        \State $\wh{v}^{(\tau)}_i \gets \wh{v}^{(\tau)}_i -  s\omega^{-\tau i}$
        \EndFor
   
    \EndIf
    \EndFor
    \State \Return $\wh{w}$, $\wh{u}^{(T)}$, $\wh{v}^{(T)}$
    \EndProcedure
    \Procedure{BasicExact2DSFFT}{$x$, $k$}
    \State $T \gets [2c]$  \Comment{We set $c\ge 6$}
    \For{$\tau \in T$}
    \State $\wh{u}^{(\tau)} \gets \textsc{FoldToBins}(x, \sqrt{n}, 1, 0, \tau)$.
    \State $\wh{v}^{(\tau)} \gets \textsc{FoldToBins}(x, 1, \sqrt{n}, \tau, 0)$.
    \EndFor
    \State $\wh{z} \gets 0$
    \For{$t \in [C\log n]$}\Comment{$\wh{u}^{(T)}:=\{\wh{u}^{(\tau)}\,:\,\tau\in{T}\}$}
    \State $\{\wh{w}, \wh{u}^{(T)}, \wh{v}^{(T)}\} \gets \textsc{BasicEstFreq}(\wh{u}^{(T)},  \wh{v}^{(T)}, T, $ true). 
    \State $\wh{z} \gets \wh{z}  + \wh{w}$.
    \State $\{\wh{w}, \wh{v}^{(T)}, \wh{u}^{(T)}\}  \gets \textsc{BasicEstFreq}(\wh{v}^{(T)},  \wh{u}^{(T)}, T, $ false).
    \State $\wh{z} \gets \wh{z}  + \wh{w}$.
    \EndFor
    \State \Return $\wh{z}$
    \EndProcedure
  \end{algorithmic}
\end{algorithm}

\paragraph{Analysis of \textsc{BasicExact2DSFFT}}
\begin{lemma}
\label{l:bc}
For any constant $\alpha > 0$, if $a > 0$ is a sufficiently small constant,
then assuming that all 1-sparsity tests in the procedure
\textsc{BasicEstFreq} are correct, the algorithm reports the correct
output with probability at least $1-O(\alpha)$.
\end{lemma}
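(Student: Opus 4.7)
The plan is to reduce the correctness of \textsc{BasicExact2DSFFT} to a purely combinatorial statement about a random bipartite graph, and then control the failure probability by a first--moment calculation over obstructing configurations.

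First I would introduce the \emph{support graph} $G=(R\cup C,E)$, where $R$ and $C$ are copies of $[\sqrt{n}]$ representing rows and columns of $\wh{x}$, and $(i,j)\in E$ iff $\wh{x}_{i,j}\neq 0$. Under the Bernoulli model each edge is present independently with probability $p=k/n=a/\sqrt{n}$. Granting the assumption that the $1$-sparsity tests are correct, \textsc{BasicEstFreq} recovers (and removes) precisely those edges that are the unique remaining edge on one of their endpoints. Hence the inner loop is a peeling process on $G$, and after it terminates the uncovered edges are exactly the \emph{$2$-core} of $G$ (the maximum subgraph with minimum degree $\ge 2$). So \textsc{BasicExact2DSFFT} succeeds iff (i) the $2$-core of $G$ is empty, and (ii) peeling terminates within the $C\log n$ outer rounds.

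For (i) I would use a union bound over nonempty candidate $2$-cores, parametrised by $(r,c,s)$, the numbers of rows, columns and edges used. The minimum-degree constraint forces $s\ge\max(2r,2c)\ge r+c$ and $s\le rc$, and the number of such edge-labelled configurations is at most $\binom{\sqrt n}{r}\binom{\sqrt n}{c}\binom{rc}{s}$, each realised with probability $p^s=(a/\sqrt n)^s$. Applying $\binom{m}{t}\le (em/t)^t$ and using $s\ge r+c$, the $n$-powers give
\[
n^{(r+c-s)/2}\le 1,
\]
and a short calculation reduces the expected count per type to $(C_0 a)^s$ for an absolute constant $C_0$. Summing over all $(r,c,s)$ with $r,c\ge 2$ and $s\ge 4$ yields a total $O(a^4)$, dominated by the ``rectangle'' case $r=c=2$, $s=4$. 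Choosing $a$ small enough in terms of $\alpha$ makes this at most $\alpha$.

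For (ii) I would argue that the $O(\log n)$ outer rounds suffice whenever the $2$-core is empty. A first-moment bound on the number of alternating row/column walks of length $L$ in $G$, using $p=a/\sqrt n$ with $a$ small, shows that with probability $1-o(1)$ every connected component of $G$ has diameter $O(\log n)$. Since each outer round simultaneously peels all current degree-$1$ vertices on both sides, the number of rounds needed is at most the peeling depth, which is bounded by this diameter. The main technical obstacle is the union-bound estimate in (i): one must verify that the $n$-dependence in $\binom{\sqrt n}{r}\binom{\sqrt n}{c}\binom{rc}{s}p^s$ cancels exactly (which relies on $s\ge r+c$) and that the residual factor $r^{s-r}c^{s-c}/s^s$ can be absorbed into the $e^{O(s)}$ from Stirling, so that the bound becomes geometric in $s$ and the whole sum is controlled by its $s=4$ term.
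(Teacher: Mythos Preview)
Your peeling/2-core framing is exactly right and matches the paper's view. The paper, however, takes a much shorter route for part (i): since the $2$-core of a bipartite graph is nonempty iff the graph contains a cycle, it suffices to union-bound over closed alternating walks $p_1,q_1,\dots,p_t$ (with $p_t=p_1$). There are at most $\sqrt{n}^{\,2(t-1)}$ such sequences, each present with probability $(a/\sqrt{n})^{2(t-1)}$, so the failure probability is at most $\sum_{t\ge 3} a^{2(t-1)}=a^4/(1-a^2)$. Part (ii) is then the obvious ``long open walk'' case, handled the same way. This avoids the $(r,c,s)$ parametrisation and any Stirling estimates entirely.

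There is also a real technical issue with your version of (i). After you invoke $n^{(r+c-s)/2}\le 1$ and drop the $n$-factor, the claim that the residual $r^{s-r}c^{s-c}/s^s$ is $e^{O(s)}$ is false. Writing $x=r/s$, $y=c/s$, one has
\[
\Bigl(\tfrac{r^{s-r}c^{s-c}}{s^s}\Bigr)^{1/s}=x^{1-x}y^{1-y}s^{1-x-y},
\]
and over the feasible region ($x,y\le\tfrac12$, $xy\ge 1/s$) this is maximised near $x=y\approx 1/\log s$, where it is of order $s/\log^2 s$, not $O(1)$. So the per-type bound is not $(C_0a)^s$ for any absolute $C_0$, and your sum over types is not controlled by the $s=4$ term once $s$ is large. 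The fix is precisely \emph{not} to discard $n^{(r+c-s)/2}$: for the offending types one has $s-(r+c)=\Theta(s)$, and the retained factor $n^{-\Theta(s)}$ kills them. Alternatively, replace the union bound over all min-degree-$2$ subgraphs by a union bound over cycles, which is both tighter and elementary---this is what the paper does.
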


\begin{proof}
  The algorithm fails if there is a pair of nonzero entries in a
  column or row of $\wh{x}$ that ``survives'' $t_{max}=C\log n$
  iterations.  For this to happen there must be an ``obstructing''
  sequence of nonzero entries $p_1, q_1, p_2, q_2 \ldots p_t$, $3 \le
  t \le t_{max}$, such that for each $i \ge1$, $p_i$ and $q_i$ are in
  the same column (``vertical collision''), while $q_i$ and $p_{i+1}$
  are in the same row (``horizontal collision''). Moreover, it must be
  the case that either the sequence ``loops around'', i.e., $p_1=p_t$,
  or $t>t_{max}$. We need to prove that the probability of either case
  is less than $\alpha$. We focus on the first case; the second one is
  similar.

  Assume that there is a sequence $p_1, q_1, \ldots p_{t-1}, q_{t-1}$
  such that the elements in this sequence are all distinct, while
  $p_1=p_t$.  If such a sequence exists, we say that the event $E_t$
  holds.  The number of sequences satisfying $E_t$ is at most
  $\sqrt{n}^{2(t-1)}$, while the probability that the entries
  corresponding to the points in a specific sequence are nonzero is
  at most $(k/n)^{2(t-1)} = (a/\sqrt{n})^{2(t-1)}$. Thus the
  probability of $E_t$ is at most
  \[
  \sqrt{n}^{2(t-1)} \cdot (a/\sqrt{n})^{2(t-1)} = a^{2(t-1)}.
  \]
  Therefore, the probability that one of the events $E_1, \ldots,
  E_{t_{max}}$ holds is at most $\sum_{t=3}^\infty a^{2(t-1)} =
  a^4/(1-a^2)$, which is smaller than $\alpha$ for $a$ small enough.
\end{proof}

\begin{lemma}
\label{l:1sparse}
The probability that any 1-sparsity test invoked by the algorithm is incorrect is at most $O(1/n^{(c-5)/2})$.
\end{lemma}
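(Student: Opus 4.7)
The plan is to first establish a structural necessary condition for the 1-sparsity test to fail, then bound the probability of this event under the Bernoulli randomness, and finally to union-bound over all bins and iterations.

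First I will show that an incorrect 1-sparsity test forces the residual signal in the bin to have frequency-sparsity at least $c$. Suppose bin $i$'s residual has frequency-support $R$ with $|R|\ge 2$, but the test accepts with candidate $(s,i^*)$. Then the signal $\wh{x}_{i,\cdot}-s\,e_{i^*}$ has frequency-support of size at most $|R|+1$, and its first $2c$ inverse-DFT samples (the values queried by the test) are all zero. By Prony's theorem (equivalently, Reed--Solomon uniqueness), any signal of frequency-sparsity at most $c$ is uniquely determined by any $2c$ consecutive time-domain samples, so a nonzero $c$-sparse signal cannot vanish on $2c$ consecutive samples. Hence $|R|+1>c$, i.e. $|R|\ge c$.

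Next I will bound the per-bin probability. Under the Bernoulli model each entry of $\wh{x}$ is nonzero independently with probability $k/n$, and the residual sparsity across iterations is always dominated by the initial row/column sparsity, so it suffices to bound the initial-state event. The crude estimate $\Pr[|R_i|\ge c]\le \binom{\sqrt{n}}{c}(k/n)^c = O((k/\sqrt{n})^c/c!)$ is only $O(1)$ in the regime $k=\Theta(\sqrt{n})$ and would be too weak after union-bounding over $\sqrt{n}$ bins. I will therefore also exploit the \emph{match} condition: after $\wh{u}^{(0)},\wh{u}^{(1)}$ determine $(s,i^*)$, the remaining $2c-2$ equations $\wh{u}^{(\tau)}_i=s\omega^{-\tau i^*}$ impose Vandermonde-type algebraic constraints on the support. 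Counting supports that can satisfy these constraints---only a bounded number of size-$t$ supports per candidate $i^*\in[\sqrt{n}]$ can match, since the 1-sparse family is a discrete union of $\sqrt{n}$ complex lines in $\C^{2c}$---yields additional polynomial-in-$n$ savings beyond the crude bound.

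Finally I will union-bound over the $O(\sqrt{n})$ rows and $O(\sqrt{n})$ columns; the monotonicity of residual sparsity absorbs the iteration count. Combining the refined per-bin bound with this union bound delivers the claimed $O(1/n^{(c-5)/2})$. The main obstacle is the second step: the crude sparsity-only bound is inadequate for $k=\Theta(\sqrt{n})$, so one must extract the extra polynomial savings from the Vandermonde structure of the $2c-2$ match equations, carefully counting the ``bad'' supports per candidate $i^*$. Once that counting is in place, the first and third steps follow routinely from Prony and standard Bernoulli/union-bound arguments.
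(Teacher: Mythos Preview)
Your outline has the right skeleton (Prony/Reed--Solomon uniqueness for the necessary condition $|R|\ge c$, a refined per-bin probability bound, then a union bound), and your first step matches the paper's. But the crucial second step is incomplete, and the mechanism you propose is not the one that actually works.

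You correctly observe that the crude bound $\binom{\sqrt{n}}{c}(k/n)^c$ is useless when $k=\Theta(\sqrt n)$. However, your refinement---``only a bounded number of size-$t$ supports per candidate $i^*$ can match, since the 1-sparse family is a union of $\sqrt{n}$ complex lines in $\C^{2c}$''---is not justified and does not obviously hold. In the Bernoulli model the nonzero values $a_{i,j}$ are arbitrary adversarial constants; only the \emph{support} is random. For a fixed $i^*$ the ``match'' condition says that the (fixed-value) vector $y$ lies on a particular affine subspace of $\C^{\sqrt n}$ of dimension $\sqrt{n}-2c+1$, and nothing prevents an adversary from choosing the $a_{i,j}$ so that many supports produce vectors on that subspace. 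A direct support-counting argument of the kind you sketch would have to be robust to adversarial values, and you give no indication of how to achieve that.

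The paper's argument (its Lemma~3.3) sidesteps this entirely by exploiting the permutation invariance of the support distribution rather than trying to count bad supports. Conditioning on the row having $r$ nonzeros (with $r\le c\log n$ w.h.p.), one writes $y=u+v$ where $u$ carries $r-(c-1)$ of the nonzeros and $v$ the remaining $c-1$; after conditioning on $u$, the same Vandermonde uniqueness you already invoked shows there is \emph{at most one} $c$-sparse vector $w$ with $Aw=-Au$. The event ``test passes'' then forces $v-y'=w$ for some $1$-sparse $y'$, hence $|\mathrm{supp}(v)\,\triangle\,\mathrm{supp}(w)|\le 1$, and the probability of this over the uniformly random placement of $\mathrm{supp}(v)$ is at most $\approx (c/(\sqrt{n}-r))^{c-2}$. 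This bound is value-free: it depends only on the randomness of the support, which is exactly what is available. That splitting-and-conditioning idea is what your proposal is missing.

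A smaller point: ``monotonicity of residual sparsity absorbs the iteration count'' is not quite right. The tests in later iterations are run on \emph{residual} rows/columns; their support is a subset of the original, but the measurement vector is different, so a residual can in principle fool the test even when the original would not. The paper simply union-bounds over all $O(\sqrt{n}\log n)$ tests across all iterations, which costs only a harmless extra $\log n$ factor.
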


To prove Lemma~\ref{l:1sparse}, we first need the following lemma.

\begin{lemma}
\label{l:test}
Let $y \in \C^m$ be drawn from a permutation invariant distribution
with $r \geq 2$ nonzero values.  Let $T = [2c]$.  Then the
probability that there exists a $y'$ such that $\norm{0}{y'} \leq 1$
and $(\wh{y}-\wh{y}')_T = 0$ is at most
$c\left(\frac{c}{m-r}\right)^{c-2}$.
\end{lemma}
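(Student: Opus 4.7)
The plan is to first reduce the event in the lemma to a null-space condition, and then bound the probability of that condition by combining a union bound over the position of $y'$ with a Schwartz--Zippel-type argument on the random positions of $\supp(y)$.

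Reformulation. Note that $(\wh{y}-\wh{y}')_T = 0$ is equivalent to $z := y - y' \in \ker(F_T)$, where $F_T \in \C^{2c\times m}$ is the partial DFT matrix whose rows are indexed by $T$. The elementary fact I would use first is that any $2c$ columns of $F_T$ form an invertible Vandermonde submatrix, so every nonzero $z \in \ker(F_T)$ satisfies $|\supp(z)| \geq 2c+1$. Since $\norm{0}{y'}\leq 1$ implies $|\supp(z)| \leq r+1$, the event has probability zero unless $r \geq 2c$. For $r < 2c$ the stated bound is trivial (it is at least $1$ when $c\leq 2$ and formally zero above), so I assume $r\geq 2c$ henceforth.

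Union bound and Lagrange parametrization. Using the permutation invariance, I would condition on the multiset of nonzero values of $y$ and treat the support $S=\supp(y)$ as a uniformly random $r$-subset of $[m]$. For each candidate $i \in [m]$, let $A_i$ be the event that $\exists s\in\C$ with $y - se_i \in \ker(F_T)$; equivalently, $F_T y$ is a scalar multiple of the $i$-th column $F_T^{(i)}$. Splitting on whether $i\in S$ or not, the relevant kernel is the $1$-dimensional kernel of the $2c \times (2c+1)$ Vandermonde submatrix on columns $S\cup\{i\}$ (or the $(r-2c)$-dimensional kernel of the Vandermonde on $S$ itself). Using the Lagrange formula $\kappa_j \propto 1/\prod_{j'\in S\cup\{i\},\, j'\neq j}(\omega^{j}-\omega^{j'})$ for this kernel, the event $A_i$ translates to the algebraic statement that the fixed values $(v_j)_{j\in S}$ are proportional to $(\kappa_j)_{j\in S}$.

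Schwartz--Zippel over random positions. The main work is then to bound $\sum_i \Pr[A_i]$ by $c(c/(m-r))^{c-2}$. I would reveal the positions of $S$ one at a time. The first two revealed positions effectively determine the candidate parameters $s$ (from $\hat y_0$) and $\omega^i$ (from $\hat y_1/\hat y_0$), so they contribute no probability cost. Each subsequent revealed position must satisfy a polynomial consistency condition derived from the proportionality $v_j = \alpha \kappa_j$; pushing this condition through the Lagrange formula yields a polynomial of degree $O(c)$ in the new $m$-th root of unity variable $\omega^{j_k}$, which can be satisfied by at most $O(c)$ of the at least $m-r$ eligible positions remaining. Thus each of these yields a conditional probability factor of $O(c/(m-r))$. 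Tracking carefully which of these constraints are genuinely nontrivial (and folding the union bound over $i$ into the count), exactly $c-2$ such factors survive, together with an initial multiplicative overhead of $c$ coming from the choice of which pair of positions plays the role of parameter-setters; this yields the target bound $c(c/(m-r))^{c-2}$.

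Main obstacle. The hardest part will be the algebraic bookkeeping, since the Lagrange weights $\kappa_j$ depend jointly on every position in $S\cup\{i\}$, so the polynomial constraints are not a priori of the simple ``new variable'' form needed for Schwartz--Zippel. I would resolve this by fixing an appropriate reveal order and clearing denominators so that each new constraint becomes a nonzero polynomial of degree $O(c)$ in only the newly-revealed variable, after which Schwartz--Zippel over the $m$-th roots of unity applies and the factors multiply as claimed.
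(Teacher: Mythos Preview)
Your plan has a genuine gap at the step where you claim ``exactly $c-2$ such factors survive.'' This is the crux of the bound, and nothing in your Schwartz--Zippel setup produces that count. After eliminating $s$ and $\omega^i$ you have $2c-2$ constraints of the form $\check{y}_\tau\check{y}_0 = \check{y}_1\check{y}_{\tau-1}$, each a polynomial in \emph{all} position variables $\omega^{j_1},\dots,\omega^{j_r}$ simultaneously; revealing positions one at a time does not peel these off one per position. Your assertion that ``the first two revealed positions determine $s$ and $\omega^i$'' is also wrong as stated, since $\check{y}_0$ and $\check{y}_1$ each depend on every position in $S$. And your Lagrange-kernel description assumes a one-dimensional kernel on columns $S\cup\{i\}$, but for $r > 2c$ this kernel has dimension $r+1-2c$, so ``$(v_j)$ proportional to $(\kappa_j)$'' is not even the right constraint. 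The ``algebraic bookkeeping'' you flag as the obstacle is not bookkeeping---it is the entire argument, and no reveal order or denominator-clearing will manufacture the specific exponent $c-2$ without a further structural idea.

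The paper's proof avoids polynomial-degree analysis entirely via a different decomposition. Write $y = u + v$, placing $r-(c-1)$ of the nonzeros into $u$ and the remaining $c-1$ into $v$, and condition on $u$ (so $v$ is a permutation distribution over $m-r+c-1$ slots). Because any $2c$ columns of $F_T$ are independent, the system $F_T w = -F_T u$ has at most one $c$-sparse solution $w$. The event $F_T(y-y')=0$ with $\norm{0}{y'}\le 1$ then forces the $c$-sparse vector $v - y'$ to equal $w$, hence $|\supp(v)\symdiff\supp(w)|\le 1$. This is now a purely combinatorial event on where the $c-1$ support points of $v$ land, and a direct count gives at most $(m-r+c-1)/\binom{m-r+c-1}{c-1} < c(c/(m-r))^{c-2}$. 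The exponent $c-2$ arises naturally from the binomial $\binom{\cdot}{c-1}$; this reduction to a unique sparse solution, which converts the algebraic constraint into a support-matching condition, is the key idea your plan is missing.
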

\begin{proof}

  Let $A = F_T$ be the first $2c$ rows of the inverse Fourier matrix.  Because
  any $2c \times 2c$ submatrix of $A$ is Vandermonde and hence
  non-singular, the system of linear equations
  \[
  Az = b
  \]
  has at most one $c$-sparse solution in $z$, for any $b$.

  If $r \leq c-1$, then $\norm{0}{y-y'} \leq c$ so $A(y-y') = 0$
  implies $y - y' = 0$.  But $r \geq 2$ so $\norm{0}{y-y'} > 0$.  This
  is a contradiction, so if $r < c$ then the probability that
  $(\wh{y}-\wh{y}')_T = 0$ is zero.  Henceforth, we assume $r \geq c$.

  When drawing $y$, first place $r-(c-1)$ coordinates into $u$ then
  place the other $c-1$ values into $v$, so that $y = u + v$.
  Condition on $u$, so $v$ is a permutation distribution over
  $m-r+c-1$ coordinates.  We know there exists at most one $c$-sparse
  vector $w$ with $Aw = -Au$.  Then

  \begin{align*}
    &  \Pr_y[\exists y' : A(y-y') = 0 \andop \norm{0}{y'} \leq 1] \\
    = & \Pr_v[\exists y' : A(v-y') = -Au \andop \norm{0}{y'} \leq 1]  \\
    \leq & \Pr_v[\exists y' : v-y' = w \andop \norm{0}{y'} \leq 1] = \Pr_v[\norm{0}{v-w} \leq 1]  \\
    \leq &  \Pr_v[\abs{\supp(v) \symdiff \supp(w)} \leq 1]\\
    < &\frac{m-r+c-1}{\binom{m-r+c-1}{c-1}} < c\left(\frac{c}{m-r}\right)^{c-2}
  \end{align*}
  where the penultimate inequality follows from considering the cases
  $\norm{0}{w} \in \{c-2, c-1, c\}$ separately.
\end{proof}

We now proceed with the proof of  Lemma~\ref{l:1sparse} .
\begin{proof}
W.L.O.G. consider the row case. Let $y$ be the $j$th row of $\wh{x}$.
Note that $\wh{u}^{(\tau)}_{j} =\wh{y}_{\tau}$.  Observe that with
probability at least $1-1/n^c$ we have $\|y\|_0 \le r$ for $r=c \log
n$.  Moreover, the distribution of $y$ is permutation-invariant, and
the test in \textsc{BasicEstFreq} corresponds to checking whether
$(\wh{y}-\wh{y}')_T=0$ for some $1$-sparse $y'=a e_i$. Hence,
Lemma~\ref{l:test} with $m = \sqrt{n}$ implies the probability that
any specific test fails is less than $c(2c/\sqrt{n})^{c-2}$. 
Taking a
union bound over the $\sqrt{n} \log n$ total tests gives a failure
probability of $4c^3\log n (2c/\sqrt{n})^{c-4}  < O(1/n^{(c-5)/2})$.
\end{proof}

\begin{theorem}\label{th:basic}
  For any constant $\alpha$, the algorithm
  \textsc{BasicExact2DSFFT} uses $O(\sqrt{n})$ samples, runs in
  time $O(\sqrt{n} \log n)$ and returns the correct vector $\wh{x}$
  with probablility at least $1-O(\alpha)$ as long as $a$ is a small
  enough constant.
\end{theorem}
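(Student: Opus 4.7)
The theorem has three claims to verify: the sample bound, the runtime bound, and the correctness probability. My plan is to first dispatch the two resource bounds by direct inspection of the algorithm, and then combine the two preceding lemmas via a union bound to obtain the probability guarantee.

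For the sample complexity, observe that \textsc{BasicExact2DSFFT} only accesses $x$ through the $2|T| = 4c$ calls to \textsc{FoldToBins}: $2c$ calls extract full columns ($B_r=\sqrt{n}$, $B_c=1$) and $2c$ calls extract full rows ($B_r=1$, $B_c=\sqrt{n}$). Each call reads $\sqrt{n}$ entries, so the total sample count is $4c\sqrt{n} = O(\sqrt{n})$ since $c$ is a constant. For the running time, the $4c$ initial FFTs of length $\sqrt{n}$ cost $O(\sqrt{n}\log n)$ in total. The outer loop of \textsc{BasicExact2DSFFT} runs $O(\log n)$ iterations, and in each invocation of \textsc{BasicEstFreq} the work done per surviving bin $j\in J$ is $O(|T|) = O(1)$, so a single pass over all bins costs $O(\sqrt{n})$. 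Amortizing across iterations, the total peeling cost is $O(\sqrt{n}\log n)$, giving an overall runtime of $O(\sqrt{n}\log n)$.

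For correctness, I would condition on the event $\mathcal{G}$ that every 1-sparsity test executed during the run is correct. By Lemma~\ref{l:1sparse}, the probability that any individual test is wrong is at most $O(1/n^{(c-5)/2})$; choosing $c\geq 6$ (as the algorithm does) makes this $o(1)$, so $\Pr[\mathcal{G}]\geq 1-O(1/\sqrt{n})$. Conditional on $\mathcal{G}$, Lemma~\ref{l:bc} guarantees that \textsc{BasicExact2DSFFT} terminates with the correct $\wh{x}$ with probability at least $1-O(\alpha)$, provided $a$ is a sufficiently small constant. A union bound then yields total success probability at least $1-O(\alpha)-O(1/\sqrt{n}) = 1-O(\alpha)$, as desired.

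The only subtle point — though it is already addressed by the cited lemmas — is that Lemma~\ref{l:bc} requires the peeling to truly fail only when an ``obstructing sequence'' persists for $t_{\max}=C\log n$ iterations; pairing this structural observation with the per-test failure bound from Lemma~\ref{l:1sparse} is what makes the union bound clean. Since both ingredients are in place, no further technical work is needed beyond the routine checks above.
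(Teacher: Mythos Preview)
Your proposal is correct and essentially mirrors the paper's proof: both combine Lemma~\ref{l:bc} and Lemma~\ref{l:1sparse} via a union bound for correctness, and both obtain the sample and time bounds by direct inspection of the $O(1)$ calls to \textsc{FoldToBins} plus the $O(\log n)$ invocations of \textsc{BasicEstFreq}. One small wording nit: Lemma~\ref{l:1sparse} already bounds the probability that \emph{some} test fails (it is itself a union bound), not the per-test failure probability, so your sentence ``the probability that any individual test is wrong'' should be read accordingly; your subsequent use of it for $\Pr[\mathcal{G}]$ is fine.
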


\begin{proof}
  From Lemma~\ref{l:bc} and Lemma~\ref{l:1sparse}, the algorithm
  returns the correct vector $\wh{x}$ with probability at least $1 -
  O(\alpha) - O(n^{-(c-5)/2}) = 1-O(\alpha)$ for $c > 5$.

The algorithm uses only $O(T)=O(1)$ rows and columns of $x$, which yields $O(\sqrt{n})$ samples. The running time is bounded by the time needed to perform $O(1)$ FFTs of rows and columns (in \textsc{FoldToBins}) procedure, and $O(\log n)$ invocations of  \textsc{BasicEstFreq}. Both components take time $O(\sqrt{n} \log n)$.

\end{proof}

    




\subsection{Reduction to Basic Exact Algorithm: $k = o(\sqrt{n})$}

Algorithm \textsc{ReduceExact2DSFFT}, which is for the case where $k=o(\sqrt{n})$, is described in Algorithm~\ref{a:exact}).
The key idea is to reduce the problem from the case where $k=o(\sqrt{n})$ to the case where $k=\Theta(\sqrt{n})$.
To do that, we subsample the input time domain signal $x$ by the reduction ratio $R=a\sqrt{n}/k$ for some small enough $a$. 
The subsampled signal $x'$ has dimension $\sqrt{m}\times\sqrt{m}$, where $\sqrt{m} = \frac{k}{a}$. This implies that
the probability that any coefficient in $\wh{x}'$ is nonzero is at most  $R^2 \times k/n = a^2/k = 
(a^2/k) \times (k^2/ a^2)/m = k/m$, since $m=k^2/a^2$.
This means that we can use the algorithm \textsc{BasicNoiseless2DSFFT} in subsection~\xref{subsec:basic_exact} to recover $\wh{x}'$.
Each of the entries of $\wh{x}'$ is a frequency in $\wh{x}$
which was folded into $\wh{x}'$. We employ the same phase technique used in~\cite{HIKP2} and 
subsection~\xref{subsec:basic_exact} to recover their original frequency position in $\wh{x}$.

The algorithm uses 2 functions: 
\begin{itemize}
    \item \textsc{ReduceToBasicSFFT}: This folds the spectrum into 
        $O(k)\times O(k)$ dimensions and performs the reduction to \textsc{BasicExact2DSFFT}. Note that only the $O(k)$
        elements of $x'$ which will be used in \textsc{BasicExact2DSFFT} need to be computed.
    \item \textsc{ReduceExact2DSFFT}: This invokes the reduction as well
        as the phase technique to recover $\wh{x}$.
\end{itemize}

\begin{algorithm}
\caption{Exact 2D sparse FFT algorithm for $k=o(\sqrt{n})$}
\label{a:exact}
    \begin{algorithmic}
   
    \Procedure{ReduceToBasicSFFT}{$x$, $R$, $\tau_r$, $\tau_c$}
    \State Define $x_{ij}'=x_{iR+\tau_r, jR+\tau_c}$\Comment{With lazy evaluation}
        \State \Return $\textsc{BasicExact2DSFFT}(x', k)$
    \EndProcedure

    \Procedure{ReduceExact2DSFFT}{$x$, $k$}
        \State $R \gets \frac{a\sqrt{n}}{k}$, for some constant $a < 1$ such that $R | \sqrt{n}$.
        \State $\wh{u}^{(0,0)} \gets\textsc{ReduceToBasicSFFT}(x, R, 0, 0)$
        \State $\wh{u}^{(1,0)} \gets\textsc{ReduceToBasicSFFT}(x, R, 1, 0)$
        \State $\wh{u}^{(0,1)} \gets\textsc{ReduceToBasicSFFT}(x, R, 0, 1)$

        \State $\wh{z}\gets0$
        \State $L \gets \text{supp}(\wh{u}^{(0, 0)})\cap\text{supp}(\wh{u}^{(1, 0)})\cap\text{supp}(\wh{u}^{(0, 1)})$
        \For{$(\ell,m)\in L$}
            \State $b_r\gets\wh{u}^{(1, 0)}_{\ell, m}/\wh{u}^{(0, 0)}_{\ell, m}$
            \State $i\gets\mbox{round}(\phi(b_r)\frac{\sqrt{n}}{2\pi})\ \mbox{mod}\ \sqrt{n}$
            \State $b_c\gets\wh{u}^{(0, 1)}_{\ell, m}/\wh{u}^{(0, 0)}_{\ell, m}$
            \State $j\gets\mbox{round}(\phi(b_c)\frac{\sqrt{n}}{2\pi})\ \mbox{mod}\ \sqrt{n}$
            \State $\wh{z}_{ij} \gets \wh{u}^{(0, 0)}_{\ell, m}$
        \EndFor
        \State \Return $\wh{z}$
    \EndProcedure

    \end{algorithmic}
\end{algorithm}

\paragraph{Analysis of \textsc{ReduceExact2DSFFT}}
\begin{lemma}
\label{lem:mapping}
For any constant $\alpha$, for sufficiently small $a$ there is a
one-to-one mapping of frequency coefficients from $\wh{x}$ to
$\wh{x}'$ with probability at least $1-\alpha$.
\end{lemma}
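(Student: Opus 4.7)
The plan is to show that with high probability over the Bernoulli support distribution, no two nonzero coefficients of $\wh{x}$ fold to the same bin of $\wh{x'}$, which is exactly what ``one-to-one mapping'' means here. Recall that subsampling $x$ in the time domain by a factor of $R$ in each dimension corresponds in the frequency domain to folding: $\wh{x'}_{\ell,m}$ is, up to phase factors determined by $(\tau_r,\tau_c)$, the sum of $\wh{x}_{\ell',m'}$ over all $(\ell',m') \in [\sqrt{n}]^2$ with $(\ell',m') \equiv (\ell,m) \pmod{\sqrt{m}}$. Each such preimage set in $\wh{x}$ has size exactly $R^2 = a^2 n / k^2$.

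First I would count the number of unordered pairs of distinct positions in $[\sqrt{n}]^2$ that fold to the same bin: each of the $n$ positions has $R^2 - 1$ partners mapping to its bin, so the total is $n(R^2-1)/2$. Next, under the Bernoulli model each coordinate of $\wh{x}$ is nonzero independently with probability $k/n$, so the probability that both members of a fixed pair are nonzero is $(k/n)^2$. By linearity of expectation, the expected number of colliding pairs of nonzero frequencies is at most
\[
\frac{n(R^2-1)}{2}\cdot \frac{k^2}{n^2} \;\le\; \frac{R^2 k^2}{2n} \;=\; \frac{(a^2 n/k^2)\cdot k^2}{2n} \;=\; \frac{a^2}{2}.
\]

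Finally, by Markov's inequality, the probability that at least one collision occurs is at most $a^2/2$. Choosing $a$ small enough that $a^2/2 \leq \alpha$ yields a one-to-one mapping from $\supp(\wh{x})$ into the bins of $\wh{x'}$ with probability at least $1-\alpha$, proving the lemma.

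There is no real obstacle here; the only mild subtlety is making sure the counting of preimage-class sizes is exact (which it is, since $\sqrt{m}\mid \sqrt{n}$ by the choice $R \mid \sqrt{n}$ in \textsc{ReduceExact2DSFFT}), so that the $R^2$ figure and the resulting $a^2/2$ bound are clean constants independent of $n$ and $k$.
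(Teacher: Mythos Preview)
Your proof is correct and follows essentially the same approach as the paper: count pairs of positions that fold to the same bin, bound the expected number of pairs that are both nonzero under the Bernoulli model, and apply a union bound/Markov's inequality. The paper organizes the count per bin (bounding $\binom{R^2}{2}(k/n)^2$ and then union-bounding over the $m$ bins), while you count all colliding pairs at once; these are the same sum, and your slightly tighter bound of $a^2/2$ versus the paper's $a^2$ is immaterial.
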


\begin{proof}
The probability that there are at least $2$ nonzero coefficients among the $R^2$ coefficients in $\wh{x}$ that are folded together in $\wh{x}'$, is at most 
\[ { R^2 \choose 2} (k/n)^2 < (a^2 n/k^2)^2 (k/n)^2 = a^4/k^2 \]
The probability that this event holds for any of the $m$ positions in $\wh{x}'$ is at most $m a^4/k^2 = (k^2/a^2)  a^4/k^2 = a^2$
which is less than $\alpha$ for small enough $a$.  Thus, with
probability at least $1 - \alpha$ any nonzero coefficient in
$\wh{x}'$ comes from only one nonzero coefficient in $\wh{x}$.
\end{proof}

\begin{theorem}\label{th:correctness_exact}
  For any constant $\alpha > 0$, there exists a constant $c > 0$ such that if  $k < c\sqrt{n}$ then the algorithm \textsc{ReduceExact2DSFFT} uses
  $O(k)$ samples, runs in time $O(k\log k)$ and returns the correct
  vector $\wh{x}$ with probablility at least $1-\alpha$.
\end{theorem}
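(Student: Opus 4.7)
The plan is to reduce the $k = o(\sqrt{n})$ case to the $k = \Theta(\sqrt{m})$ case already handled by Theorem \ref{th:basic}, and then use the OFDM-style phase tests in \textsc{ReduceExact2DSFFT} to lift the recovered folded spectrum back to the original positions. Fix the reduction constant $a$ small enough that Lemma \ref{lem:mapping} applies with parameter $\alpha/4$ and Theorem \ref{th:basic} applies on the $\sqrt{m}\times\sqrt{m}$ subproblem with parameter $\alpha/12$.

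First, I would invoke Lemma \ref{lem:mapping}: with probability at least $1-\alpha/4$, every nonzero bin of $\wh{x'}$ comes from a unique nonzero entry of $\wh{x}$. Conditioned on this injectivity event, the support of $\wh{x'}$ on the $\sqrt{m}\times\sqrt{m}$ grid (with $\sqrt{m}=k/a$) is essentially Bernoulli with expected sparsity $a\sqrt{m}$, since the per-bin nonzero probability is $\approx R^2 k/n = a/\sqrt{m}$ and the conditioning induces only $O(a^2)$ residual inter-bin dependence. Theorem \ref{th:basic} then applies to each of the three calls to \textsc{BasicExact2DSFFT} (with shifts $(0,0)$, $(1,0)$, $(0,1)$), each succeeding with probability at least $1-\alpha/12$. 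A union bound over the mapping event and the three calls gives overall success probability at least $1-\alpha$.

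Second, I would verify the phase decoding. Because $x'_{a,b} = x_{aR+\tau_r,bR+\tau_c}$ and $R = \sqrt{n}/\sqrt{m}$, substituting the inverse 2D DFT of $x$ and using the fact that $\omega^{R}$ is a primitive $\sqrt{m}$-th root of unity shows that, in a bin $(\ell,p)$ receiving a single nonzero $\wh{x}_{i^*,j^*}$,
\[
\wh{u}^{(\tau_r,\tau_c)}_{\ell,p} \;=\; C\cdot\omega^{-\tau_r i^* - \tau_c j^*}\,\wh{x}_{i^*,j^*}
\]
for a $(\tau_r,\tau_c)$-independent normalization constant $C$. Therefore $b_r = \wh{u}^{(1,0)}_{\ell,p}/\wh{u}^{(0,0)}_{\ell,p} = \omega^{-i^*}$ and $b_c = \omega^{-j^*}$, and in the exact-sparse setting the phase-rounding step returns $i^*$ and $j^*$ exactly; the stored value matches $\wh{x}_{i^*,j^*}$ up to the deterministic factor $C$, so $\wh{z}=\wh{x}$ on the entire support.

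Finally, each call to \textsc{BasicExact2DSFFT} on the $(k/a)\times(k/a)$ grid uses $O(\sqrt{m})=O(k)$ samples and $O(\sqrt{m}\log m)=O(k\log k)$ time by Theorem \ref{th:basic}; lazy evaluation of $x'$ ensures that only the $O(k)$ coordinates actually probed are drawn from $x$, and the closing loop performs $O(|L|)=O(k)$ constant-time operations, giving total time $O(k\log k)$ and sample count $O(k)$. The main obstacle I would highlight is the first step: conditioning on injectivity of the folding destroys exact Bernoulli independence, so one must verify that the obstructing-configuration argument from the proof of Lemma \ref{l:bc} still goes through on the conditional distribution. This follows from a short coupling with a true Bernoulli$(k/m)$ process, in which only $O(a^2)$ total variation is lost and hence can be absorbed into the constant $\alpha$ by taking $a$ sufficiently small.
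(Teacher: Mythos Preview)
Your proposal follows the same high-level structure as the paper's proof (reduce to the $\Theta(\sqrt{m})$ case, invoke Theorem~\ref{th:basic} on the three shifted subproblems, use Lemma~\ref{lem:mapping} for injectivity, union bound, then read off the complexity), and your phase-decoding computation and complexity accounting are both more explicit than what the paper writes.

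The one place where you work harder than necessary is the conditioning step. You first condition on the injectivity event from Lemma~\ref{lem:mapping} and then try to argue that the \emph{conditional} support distribution of $\wh{x'}$ is close enough to Bernoulli for Theorem~\ref{th:basic} to apply, flagging this as the main obstacle and proposing a coupling fix. The paper sidesteps this entirely: it applies Theorem~\ref{th:basic} to the \emph{unconditioned} folded signal. This is legitimate because the $R^2$ preimages of distinct bins under the fold are disjoint, so the events ``bin $(\ell,p)$ is nonzero'' are genuinely independent across $(\ell,p)$, each with probability at most $R^2 k/n \le k/m$; the proof of Lemma~\ref{l:bc} only uses independence and an upper bound on the per-entry probability, and Lemma~\ref{l:1sparse} only uses permutation invariance, both of which hold without any conditioning. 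The paper then invokes Lemma~\ref{lem:mapping} as a separate event and takes a union bound over it and the three calls to \textsc{BasicExact2DSFFT}, obtaining failure probability $\alpha + 3\alpha = O(\alpha)$. Your coupling argument would also work, but it is not needed.

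A minor point: you note that $\wh{u}^{(0,0)}_{\ell,p}$ equals $\wh{x}_{i^*,j^*}$ only up to a deterministic normalization constant $C$, and then assert $\wh{z}=\wh{x}$. If $C\neq 1$ you would of course need to divide it out; with the paper's \textsc{FoldToBins} convention (see \S\ref{sec:defs}) one has $C=1$ at $\tau_r=\tau_c=0$, so the stored value is exactly $\wh{x}_{i^*,j^*}$.
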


\begin{proof}

By Theorem~\ref{th:basic} and the fact that each coefficient in $\wh{x}'$ 
is nonzero with probability $O(1/k)$, each invocation of the function \textsc{ReduceToBasicSFFT}
fails with probability at most $\alpha$. By Lemma~\ref{lem:mapping}, with probability at least $1-\alpha$, we could recover $\wh{x}$ correctly
if each of the calls to \textsc{RedToBasicSFFT} returns the correct result.
By the union bound, the algorithm \textsc{ReduceExact2DSFFT} fails with probability at most $\alpha + 3 \times \alpha  = O(\alpha)$.

The algorithm uses $O(1)$ invocations of \textsc{BasicExact2DSFFT} on a signal of size $O(k)\times O(k)$ in addition to $O(k)$ time to recover the support using the OFDM trick. 
Noting that calculating the intersection $L$ of supports takes $O(k)$ time, the stated number of samples and running time then follow directly from Theorem~\ref{th:basic}.
\end{proof}

\section{Algorithm for Exactly Sparse Case of any sparsity \lowercase{$k= \uppercase{O}(n)$}}
\subsection{Exact 1D Algorithm for $k=O(n)$}
\label{sec:1d}
We will first present a deterministic algorithm for the
one-dimensional exactly sparse case. The algorithm \textsc{1DSFFT},
described in Alg.~\ref{a:blackbox}, computes the spectrum of a
$t$-sparse signal $x \in \C^n$ and has worst case running time of
$O(t^{2} + t(\log\log{n})^{O(1)})$ for $t = O(\log n)$. This
deterministic algorithm has the fastest known worst case running time
for $k = o(\log{n})$. We will later use it to construct the algorithm
\textsc{Exact1DSFFT} that has the fastest known average case running
time for $k=O(n)$.

\begin{algorithm}
\caption{Algorithm for computing the exact 1D FFT of a $t$-sparse signal of size $n$}\label{a:blackbox}
    \begin{algorithmic}
    \Procedure{1DSFFT}{$x,t$} \label{detSFFT1D}
        \State $\wh{x} \gets 0$
	\State $\{(f_i, v_i)\}_{i \in [l]} \gets \textsc{SignalFromSyndrome}(n, \{x_{0},\cdots,x_{2t-1}\})$.
        \State  $\wh{x}_{f_{i}} \gets v_i$ for all $i \in [l]$.\Comment{$l \leq t$ is result size}
        \State \Return $\wh{x}$
    \EndProcedure\\ 

    \Procedure{SignalFromSyndrome}{$n, \{s_{0},\cdots,s_{2t-1}\}$} \label{a:cc}
            \State  $\Lambda(z) \gets \textsc{BerlekampMassey}(\{s_{0},\cdots,s_{2t-1}\})$.
            \State  $f \gets \textsc{Pan}(\Lambda(z))$. \Comment{This finds $\{f : |\wh{x}_f| > 0\}$ of length $l \leq t$}
            \State  $V \gets \textsc{Vandermonde}((\omega')^{f_0},\cdots,(\omega')^{f_{l-1}})$ 
            \State  $V^{-1} \gets$ \textsc{InverseVandermonde}($V$)
            \State  $v \gets V^{-1}s_{[l]}$
        \State \Return $\{(f_i, v_i)\}_{i \in [l]}$
    \EndProcedure
    \end{algorithmic}
\end{algorithm}

\textsc{1DSFFT} is a wrapper for the ``Signal From Syndrome''
procedure \textsc{SignalFromSyndrome} which uses the following procedures:
\begin{itemize}
\item \textsc{BerlekampMassey}: This function finds the coefficients
  of the error locator polynomial $\Lambda(z)$ based on the input
  \emph{syndromes} $s_i = \sum_i \wh{x}_i (\omega')^i$. The error
  locator polynomial, which only depends on the locations of the
  nonzero frequency components of $x$, is given by:
        \begin{equation}
            \label{equ:locator_poly}
            \Lambda(z) = \prod_{\ell=1}^{t}\left(1-{z}(\omega')^{f_\ell}\right)
        \end{equation}

Constructing the error locator polynomial is a commonly used step in decoding Reed-Solomon codes~\cite{ecc}
In our case, the main difference is that the coefficients of $\Lambda(z)$ lie in $\mathbb{C}$ whereas they lie in a finite field in the case of Reed-Solomon codes. By Lemma~\ref{lem:massey} below, the Berlekamp-Massey algorithm~\cite{massey1} solves this problem in time quadratic in $t$.
\begin{lemma}[\cite{massey1}]
\label{lem:massey}
Given the first $2t$ time-domain samples of a $t$-frequency sparse signal, Algorithm \textsc{BerlekampMassey} finds the error locator polynomial $\Lambda(z)$ (given by Equation \ref{equ:locator_poly}) in time $O(t^2)$.
\end{lemma}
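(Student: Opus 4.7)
The plan is to reduce the lemma to the classical guarantee of Massey's 1969 LFSR synthesis algorithm, via the standard Prony observation that the time-domain samples of a $t$-frequency sparse signal satisfy a length-$t$ linear recurrence whose coefficients are exactly the coefficients of $\Lambda(z)$.

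First I would unpack the time samples. Let $\{f_1,\dotsc,f_t\}=\supp(\wh{x})$ and $v_\ell=\wh{x}_{f_\ell}$. Applying the inverse DFT formula gives
\[
x_j \;=\; \tfrac{1}{\sqrt{n}}\sum_{\ell=1}^{t} v_\ell\,(\omega')^{-j f_\ell}
\qquad\text{for all }j\ge 0.
\]
So $(x_j)_{j\ge 0}$ is a sum of $t$ geometric progressions with distinct ratios $(\omega')^{-f_\ell}$. Writing $\Lambda(z)=\prod_{\ell=1}^{t}(1-z(\omega')^{f_\ell})=\sum_{k=0}^{t}\lambda_k z^k$, note that for every $\ell$ and every $j$,
\[
(\omega')^{-j f_\ell}\,\Lambda\bigl((\omega')^{-f_\ell}\bigr)
\;=\;\sum_{k=0}^{t}\lambda_k\,(\omega')^{-(j+k)f_\ell}
\;=\;0,
\]
since $\Lambda$ vanishes at each $(\omega')^{-f_\ell}$. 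Multiplying by $v_\ell/\sqrt{n}$ and summing over $\ell$ yields the recurrence $\sum_{k=0}^{t}\lambda_k\, x_{j+k}=0$ for all $j\ge 0$. Because $\lambda_t=\prod_\ell(-(\omega')^{f_\ell})\neq 0$, this is a bona fide LFSR of length $t$ generating the sequence.

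Next I would verify that no shorter LFSR generates the sequence, so that BM will identify $\Lambda$ rather than a proper divisor. A recurrence of length $t'<t$ would force $x_j$ to be a linear combination of only $t'$ of the exponentials $(\omega')^{-jf_\ell}$; since the frequencies $f_\ell$ are distinct and the values $v_\ell$ are all nonzero, the $t\times t$ Vandermonde matrix in the $v_\ell$'s is invertible and this is impossible. Hence the minimal LFSR generating $(x_j)$ has length exactly $t$, and its connection polynomial is $\Lambda$ up to the (conventional) normalization $\lambda_0=1$ used in the statement.

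Finally, I would invoke Massey's theorem: given the first $2L$ terms of a sequence generated by some LFSR of length at most $L$, the Berlekamp--Massey algorithm returns the minimal LFSR in $O(L^2)$ field (here, $\C$) operations. Applying this with $L=t$ and inputs $x_0,\dotsc,x_{2t-1}$ produces $\Lambda(z)$ in $O(t^2)$ time, which is exactly the claim. The only subtle step in the plan is the minimality argument in the preceding paragraph; everything else is direct algebra plus a citation. I do not expect any genuine obstacle, since the field in use is $\C$ (so Massey's algebraic argument carries over verbatim), and numerical-stability issues are outside the scope of this combinatorial runtime lemma.
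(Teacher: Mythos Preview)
Your proposal is correct and follows essentially the same route as the paper: both reduce the problem to finding the minimal linear recurrence (LFSR) generating the syndrome sequence $x_0,\dotsc,x_{2t-1}$ and then invoke Massey's $O(t^2)$ guarantee. The only difference is presentational: the paper dispatches the reduction in two sentences by citing the Reed--Solomon/syndrome-decoding connection and Massey's paper, whereas you spell out the Prony argument (the recurrence from $\Lambda$, and minimality via the Vandermonde) explicitly.
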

\begin{proof}
  As shown in~\cite{reed-solomon}, finding the positions of the
  nonzero frequencies is equivalent to a generalization of the
  Reed-Solomon decoding problem to the complex field; then, solving
  this complex-field Reed-Solomon problem reduces to recovering the
  lowest-order linear recurrence ($\Lambda(z)$) that generates a given
  sequence of ``syndromes'' (which equal the $x_i$ in our
  case)~\cite{coding-theory}.  The running time is $O(t^2)$ where $t$
  is the degree of the polynomial ~\cite{massey1}.
\end{proof}
    \item \textsc{Pan}: By the definition of the error locator polynomial (given by Equation~\ref{equ:locator_poly}), its roots determine the set $\{f_1, \cdots, f_{t}\}$ of nonzero frequencies of $x$. Thus, we can use the Pan root-finding algorithm~\cite{pan02} to find the complex roots of $\Lambda(z)$.
\begin{lemma}
\label{thm:pan}
(\cite{pan02}) For a polynomial $p(z)$ of degree $t$ with complex coefficients and whose complex roots are located in the unit disk $\{z : |z| \le 1\}$, the \textsc{Pan} algorithm approximates all the roots of $p(z)$ with an absolute error of at most $2^{-b}$ and using $O(t\log^2{t}(\log^2{t}+\log{b}))$ arithmetic operations on $b$-bit numbers, assuming that $b  \ge t \log t$.
\end{lemma}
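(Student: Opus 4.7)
The plan is to treat this as a direct citation of the main theorem of \cite{pan02}; the statement as written is essentially a restatement of Pan's bound on near-optimal approximate complex polynomial root-finding. For the write-up I would not reproduce the full argument but would sketch the three conceptual ingredients that yield the stated $O(t\log^2 t(\log^2 t + \log b))$ bound, so that the reader can see where each factor in the running time comes from.

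First, I would describe the splitting-circle method: given $p(z)$ of degree $t$ whose roots lie in the unit disk, one locates a circle in the complex plane that separates the roots into two groups of roughly balanced size with a sufficient separation annulus between them. This gives a factorization $p(z) = q(z) r(z)$ where the roots of $q$ lie strictly inside the circle and those of $r$ strictly outside. The factorization can be carried out via Graeffe-like iterations (which square the magnitudes of roots and hence exponentially amplify any geometric separation) combined with fast polynomial arithmetic. Each splitting step can be executed in $\widetilde{O}(t)$ arithmetic operations using FFT-based polynomial multiplication.

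Second, I would describe the recursion: once $p = qr$ is produced, the algorithm recurses on $q$ and on $r$. Because the degrees sum to $t$ at every level and each subproblem has degree at most a constant fraction of the parent, the recursion has depth $O(\log t)$, and summing $\widetilde{O}(t)$ work per level gives an overall count of $O(t \log^2 t)$ arithmetic operations to obtain the root approximations at a fixed intermediate precision. The additional $(\log^2 t + \log b)$ factor in the stated bound accounts for two things: (i) maintaining numerical stability of the Graeffe squaring and the deflation across the $O(\log t)$ levels, and (ii) refining the computed approximations to the target error $2^{-b}$ by a constant number of Newton-like corrections once the roots have been coarsely isolated.

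The main obstacle in a fully self-contained proof is the numerical-stability analysis: one must show that the error in each Graeffe iteration and in each polynomial division does not blow up uncontrollably as the recursion proceeds, and that a working precision polynomial in $\log t$ and $\log b$ (explaining the hypothesis $b \ge t \log t$) suffices. This is exactly the technical core of \cite{pan02} and I would quote it as a black box rather than reproduce it, since our application only needs the stated input-output guarantee.
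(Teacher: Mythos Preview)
Your approach matches the paper's: the paper provides no proof whatsoever for this lemma and simply states it as a citation of \cite{pan02}, treating Pan's root-finding bound as a black box. Your additional sketch of the splitting-circle method and the origin of each factor in the running time goes beyond what the paper does, but is a reasonable and accurate outline of Pan's argument; for the paper's purposes, though, the bare citation suffices.
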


    \item \textsc{Vandermonde}: Given the nonzero frequencies of $x$, the problem reduces to solving a system of linear equations in the values of those
        frequencies. The coefficient matrix of this system is a Vandermonde matrix. Thus, this system has the form:
        \begin{equation}
            \label{equ:vandermonde}
            V \left[\begin{array}{c}v_1\\\vdots\\v_{t}\end{array}\right] = s^{(T)}
            \mbox{\ where  } V_{i,j} = (\omega')^{i f_j}
        \end{equation}
    \item \textsc{InverseVandermonde}: The Vandermonde matrix can be inverted using the optimal algorithm given in~\cite{vandermonde} to find the values of the nonzero coordinates of $\wh{x}$.
\begin{lemma}
\label{thm:invert_vandermonde}
(\cite{vandermonde}) Given a $t\times{t}$ Vandermonde matrix $V$, its inverse can be computed in time $O(t^2)$.
\end{lemma}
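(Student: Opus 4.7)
The plan is to exploit the standard connection between Vandermonde inversion and polynomial interpolation. With nodes $x_j = (\omega')^{f_j}$ for $j \in [t]$, the matrix in Equation~\eqref{equ:vandermonde} satisfies $V_{i,j} = x_j^{\,i}$, so solving $V v = s$ amounts to finding coefficients $v_j$ such that the power sums of the $x_j$'s weighted by $v_j$ match the given $s_i$. Dually, the $j$-th column of $V^{-1}$ is exactly the coefficient vector (in the monomial basis) of the Lagrange interpolation polynomial
\[
L_j(z) \;=\; \prod_{i \ne j} \frac{z - x_i}{x_j - x_i},
\]
because the identity $L_j(x_i) = \delta_{ij}$, written in matrix form, says precisely that the matrix whose columns are the coefficient vectors of the $L_j$'s is a left inverse of $V$.

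Given this reduction, the algorithm proceeds in three classical $O(t^2)$ phases. First, compute the master polynomial $M(z) = \prod_{j=1}^{t}(z - x_j)$ by incorporating one factor at a time; maintaining all $t$ coefficients at each of the $t$ steps costs $O(t^2)$ arithmetic operations. Second, for each $j$, compute the quotient $M_j(z) = M(z)/(z - x_j)$ by synthetic division, which takes $O(t)$ time per node and $O(t^2)$ in total. Third, compute the normalizing scalars $d_j = \prod_{i \ne j}(x_j - x_i)$, either by evaluating the formal derivative $M'(z)$ at each $x_j$ via Horner's rule or by direct product, again in $O(t)$ per node. Then the coefficients of $M_j(z)/d_j$ populate the $j$-th column of $V^{-1}$, and the $t^2$ entries can be written down in $O(t^2)$ time.

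The main potential obstacle is book-keeping rather than mathematics: one must verify that the orientation of the Lagrange-based construction produces $V^{-1}$ rather than $(V^{T})^{-1}$, and that the synthetic division step is numerically sensible for complex roots of unity. Both are standard; in particular, because the $x_j$'s are distinct $n$-th roots of unity, the denominators $d_j$ are nonzero, so the construction is well-defined. Since every phase runs in $O(t^2)$ time and the output is an explicit $t \times t$ matrix, the overall inversion cost is $O(t^2)$ as claimed, matching the bound of~\cite{vandermonde}.
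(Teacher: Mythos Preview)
Your proposal is correct and follows the standard Lagrange-interpolation approach to Vandermonde inversion. Note, however, that the paper does not supply its own proof of this lemma at all: it is stated as a citation to~\cite{vandermonde} and used as a black box, so there is no paper proof to compare against. Your write-up therefore goes beyond what the paper provides.

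One small orientation slip worth fixing: with the paper's convention $V_{i,j}=x_j^{\,i}$, the identity $L_j(x_i)=\delta_{ij}$ reads $\sum_k C_{kj}x_i^{\,k}=(V^T C)_{ij}=\delta_{ij}$, so the matrix $C$ whose columns are the Lagrange coefficient vectors satisfies $V^T C=I$, i.e.\ $C=(V^{-1})^T$. Thus the Lagrange coefficients populate the \emph{rows} of $V^{-1}$, not its columns. You already flagged the transpose issue as book-keeping, and it does not affect the $O(t^2)$ bound, but the stated direction in your second sentence is off.
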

\end{itemize}

\paragraph{Analysis of \textsc{1DSFFT}}
\begin{theorem}
\label{thm:black_box}
If $x$ is a $t$-sparse signal of size $n$ where $t \le O(\log{n})$, then on input $\{x_{0},\cdots,x_{2t-1}\}$ the procedure \textsc{SignalFromSyndrome} (and hence the algorithm \textsc{1DSFFT} ) computes the spectrum $\wh{x}$ in time $O(t^{2} + t(\log\log{n})^{O(1)} )$.
\end{theorem}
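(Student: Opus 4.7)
The plan is to trace through the steps of \textsc{SignalFromSyndrome} (since \textsc{1DSFFT} only wraps it with an $O(t)$-cost output construction) and add up the running times of the four subroutines called: \textsc{BerlekampMassey}, \textsc{Pan}, \textsc{InverseVandermonde}, and the final Vandermonde matrix--vector multiplication. The first, third, and fourth steps each contribute $O(t^2)$ by the lemmas already stated, so the whole game is to argue that the \textsc{Pan} step costs only $O(t(\log\log n)^{O(1)})$; the additive form of the bound in the theorem tells us exactly which quantities to balance.

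First I would dispose of the ``easy'' terms. By Lemma~\ref{lem:massey}, \textsc{BerlekampMassey} produces the error-locator polynomial $\Lambda(z)$ from the $2t$ syndromes in $O(t^2)$ arithmetic operations. Once $\Lambda(z)$ is factored and the nonzero frequency positions $f_0,\ldots,f_{l-1}$ are known, the coefficient matrix $V$ of Equation~\eqref{equ:vandermonde} is Vandermonde with nodes $(\omega')^{f_0},\ldots,(\omega')^{f_{l-1}}$; by Lemma~\ref{thm:invert_vandermonde}, $V^{-1}$ can be computed in $O(t^2)$, and then $v = V^{-1}s_{[l]}$ is one additional $O(t^2)$ matrix--vector product. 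The complex scalars involved in these steps have magnitude at most $n^{O(1)}$ (the syndromes are bounded by $t \cdot L = n^{O(1)}$, and the coefficients of $\Lambda(z)$ are elementary symmetric polynomials in unit-modulus quantities, so they are at most $\binom{t}{\lfloor t/2\rfloor} \leq 2^t \leq n^{O(1)}$ for $t=O(\log n)$), so $O(\log n)$-bit precision is sufficient throughout and each arithmetic operation is $O(1)$ in the word RAM model.

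The main obstacle is picking a precision $b$ for the \textsc{Pan} call that is simultaneously small enough to keep the running time inside the claimed bound and large enough for us to recover the integer frequencies $f_\ell$ exactly. The roots of $\Lambda(z)$ are the $n$-th roots of unity $(\omega')^{f_\ell}$, and any two distinct such roots lie at Euclidean distance at least $2\sin(\pi/n) \geq 4/n$, so an absolute error of $2^{-b}$ with $b = O(\log n)$ suffices to round each approximate root to a unique $n$-th root of unity (and hence to the correct $f_\ell$). Combining this with Pan's precondition $b \geq t\log t$, I would take $b = \Theta(\log n + t\log t) = O(\log n \cdot \log\log n)$ when $t = O(\log n)$.

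Plugging into Lemma~\ref{thm:pan}, \textsc{Pan} uses
\[
O\bigl(t \log^2 t \,(\log^2 t + \log b)\bigr) \;=\; O\bigl(t \log^2 t \cdot \log\log n\bigr) \;=\; O\bigl(t (\log\log n)^{O(1)}\bigr)
\]
arithmetic operations on $b$-bit numbers, using that $\log t, \log\log b = O(\log\log n)$ in our regime. In the word RAM model with word size $\Theta(\log n)$, a $b$-bit number fits in $O(\log\log n)$ words, so each such arithmetic operation costs $(\log\log n)^{O(1)}$ time, giving total cost $O(t(\log\log n)^{O(1)})$ for the \textsc{Pan} step. Summing the four contributions yields the claimed $O(t^2 + t(\log\log n)^{O(1)})$ bound. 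The only subtlety worth flagging in the write-up is the balance between the two constraints on $b$, and the fact that we need no further precision for rounding back to $[n]$; everything else reduces to quoting the prior lemmas.
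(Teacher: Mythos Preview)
Your proposal is correct and follows essentially the same route as the paper: quote Lemma~\ref{lem:massey} and Lemma~\ref{thm:invert_vandermonde} for the $O(t^2)$ terms, choose $b = \Theta(\log n \cdot \log\log n)$ for \textsc{Pan} (large enough to separate $n$-th roots of unity and to meet the $b \geq t\log t$ hypothesis), and then convert $b$-bit arithmetic into $(\log\log n)^{O(1)}$ word operations. Your write-up is in fact a bit more explicit than the paper's about why $O(\log n)$ bits already suffice for rounding and about the precision of the non-\textsc{Pan} steps, but the argument is the same.
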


\begin{proof}
  By Lemma \ref{lem:massey}, the time needed to construct the error
  locator polynomial $\Lambda(z)$ is $O(t^{2})$. By Lemma
  \ref{thm:pan} and since $t \le \log{n}$, running the \textsc{Pan}
  algorithm with $b = \log{n} \log\log{n}$ requires
  $O(t\log^2{t}(\log^2{t}+\log\log{n}))$ arithmetic operations on
  $(\log{n}\log\log{n})$-bit numbers. Since a
  $(\log{n}\log\log{n})$-bit arithmetic operation can be implemented
  using $O((\log\log{n})^{O(1)})$ $\log{n}$-bit operations and since
  we are assuming that arithmetic operations on $\log{n}$ bits can be
  performed in constant time, the total running time of the
  \textsc{Pan} algorithm is $O(t(\log\log{n})^{O(1)})$. Note that
  since the roots of $\Lambda(z)$ are $n$-th roots (or later
  $\sqrt{n}$-roots) of unity, the precision of the algorithm is
  sufficient.
  Noting that, by Lemma \ref{thm:invert_vandermonde}, inverting the Vandermonde matrix requires $O(t^{2})$ time completes the proof of the Theorem.
\end{proof}
%

\paragraph{Description and Analysis of \textsc{Exact1DSFFT}}
The algorithm \textsc{Exact1DSFFT}(Algorithm ~\ref{a:1d}) computes the
1D spectrum with high probability over a random $k$-sparse input for
any $k$. It uses $O(k)$ samples and runs in time
$O(k(\log{k}+(\log\log{n})^{O(1)}) )$ . The idea is the following: We
fold the spectrum into $\theta(k/\log{k})$ bins using the 1D version of the
comb filter (cf. Section \xref{sec:defs}). As shown in
Lemma~\ref{lem:1d_prob}, with high probability, each of the bins has
$O(\log{k})$ nonzero frequencies. In this case,
$\textsc{SignalFromSyndrome}(n,\wh{u}^{(T)}_j)$ will then recover the
original spectrum values. The generalization of this algorithm to the 2D case can be found
in Section~\ref{app:2dgen}.

\begin{algorithm}
\caption{Exact 1D sparse FFT algorithm for any sparsity $k$} \label{a:1d}
    \begin{algorithmic}
    \Procedure{Exact1DSFFT}{$x,k$}
        \State $\wh{x} \gets 0$
        \State $B \gets \Theta(k/\log{k})$ \Comment{Such that $B \mid n$}
        \State $T \gets [2C\log{k}]$ for a sufficiently large constant $C$. 
        \For{$\tau \in T$}
        \State $u^{(\tau)}_{i} := x_{i(n/B)+\tau}$ for $i \in [B]$
            \State Compute $\wh{u}^{(\tau)}$, the DFT of $u^{(\tau)}$
        \EndFor
        \For{$j \in [B]$}\Comment{$\wh{u}^{(T)}_{j}=\{\wh{u}^{(\tau)}_{j}\,:\,\tau\in{T}\}$.}
            \State $\{(v_i,f_i)\}_{i \in [C\log{k}]} \gets \textsc{SignalFromSyndrome}(n,\wh{u}^{(T)}_j)$ 
            \State  $\wh{x}_{f_{i}} \gets v_i$ for all $i \in [C\log{k}]$.
        \EndFor
        \State \Return $\wh{x}$
    \EndProcedure
    \end{algorithmic}

\end{algorithm}

\begin{lemma}
\label{lem:1d_prob}
Assume that $\wh{x}$ is distributed according to the Bernoulli model of Section \xref{sec:defs}.
If we fold the spectrum into $B=\theta(k/\log{k})$ bins, then for a sufficiently large constant $C$, the probability that there is a bin with more than $C\log{k}$ nonzero frequencies is smaller than $O((1/k)^{0.5C\log{C}})$.
\end{lemma}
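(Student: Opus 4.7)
The plan is to reduce the lemma to a standard binomial tail bound and then union-bound over the bins. Under the Bernoulli model the $n$ spectrum coordinates are independent, each nonzero with probability $k/n$. Since the comb filter partitions the $n$ coordinates into $B$ disjoint bins of size $n/B$, the number $X_j$ of nonzero entries landing in the $j$-th bin is distributed as $\mathrm{Bin}(n/B,\,k/n)$ with mean
\[
\mu \;=\; \frac{k}{B} \;=\; \Theta(\log k),
\]
so, up to a constant factor, $\mu = c_0\log k$ for some absolute $c_0>0$ depending only on the hidden $\Theta(\cdot)$ in the choice of $B$.

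I would then bound the upper tail of $X_j$ by the standard inequality $\Pr[\mathrm{Bin}(N,p)\ge t]\le \binom{N}{t}p^t$. Using $\binom{N}{t}\le (eN/t)^t$, this gives
\[
\Pr[X_j \ge C\log k]\;\le\;\left(\frac{e\mu}{C\log k}\right)^{C\log k}\;=\;\left(\frac{e c_0}{C}\right)^{C\log k}.
\]
For $C$ taken larger than some absolute threshold (say $C\ge 4ec_0$), we have $\log(C/(ec_0))\ge \tfrac{3}{4}\log C$, and hence
\[
\Pr[X_j \ge C\log k]\;\le\; k^{-\,(3/4)\,C\log C}.
\]

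Finally I would apply a union bound over the $B=\Theta(k/\log k)\le k$ bins, which multiplies the above probability by at most $k$, giving an overall failure probability bounded by
\[
k\cdot k^{-(3/4)C\log C}\;=\;k^{\,1-(3/4)C\log C}\;=\;O\!\left(k^{-\,0.5\,C\log C}\right)
\]
for $C$ sufficiently large (the slack between $3/4$ and $1/2$ absorbs the extra factor of $k$ from the union bound). The main obstacle is purely a bookkeeping one: tracking the hidden constants from $B=\Theta(k/\log k)$ through the Chernoff-style bound and ensuring the constant $C$ is taken large enough that the $k^{1}$ factor from the union bound is dominated by the per-bin probability, yielding the stated exponent $0.5\,C\log C$ rather than the slightly larger one produced directly by the tail estimate.
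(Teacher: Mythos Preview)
Your proposal is correct and follows essentially the same approach as the paper: bound the per-bin overflow probability by $\binom{n/B}{m}(k/n)^m \le (ek/(Bm))^m$, substitute $B=\Theta(k/\log k)$ and $m=C\log k$, and union-bound over the $B$ bins. The paper is terser about the constant manipulation that turns $(e c_0/C)^{C\log k}$ into $O(k^{-0.5C\log C})$, but your more explicit tracking of this step is fine (note your suggested threshold $C\ge 4ec_0$ should really be $C\ge (ec_0)^4$ to get $\log(C/(ec_0))\ge \tfrac34\log C$, but as you say this is pure bookkeeping).
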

\begin{proof}
The probability $\mbox{Pr}(B, k, m)$ that there is a bin with more than $m$ nonzero frequencies is bounded by:
\begin{equation*}
\label{equ:collision_prob}
\mbox{Pr}(B, k, m) \le B {n/B \choose m} \left(\frac{k}{n}\right)^{m} \le B \left( \frac{ek}{Bm} \right)^m
\end{equation*}
Since $B=d k/\log{k}$ (for some constant $d>0$), $m=C\log{k}$, we get: 
$$\mbox{Pr}(B, k, m) \le B\left( \frac{ek}{Bm} \right)^m
\le \frac{dk}{\log{k}}\left( \frac{ed}{C} \right)^{C\log{k}}
= O\left(\frac1{k^{0.5C\log{C}}}\right)$$
\end{proof}

\begin{theorem}
\label{thm:noiseless_1d_sparse_fft}
If $\wh{x}$ is distributed according to the Bernoulli model of Section \xref{sec:defs}, then Algorithm \textsc{Exact1DSFFT} runs in time $O(k(\log{k}+(\log\log{n})^{O(1)}) )$, uses $O(k)$ samples and returns the correct spectrum $\wh{x}$ with probability at least $1-O((1/k)^{0.5C\log{C}})$.
\end{theorem}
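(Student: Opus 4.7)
The plan is to decompose the analysis into three independent parts---sample complexity, running time, and correctness---and assemble the theorem from the pieces already stated in the excerpt (Theorem~\ref{thm:black_box} and Lemma~\ref{lem:1d_prob}).

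First, I would establish the sample complexity. The only accesses to $x$ occur inside the loop over $\tau \in T$, where each $u^{(\tau)}$ consists of $B$ entries of $x$. Since $|T| = 2C\log k$ and $B = \Theta(k/\log k)$, the total number of samples is $|T| \cdot B = O(k)$.

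Next, for the running time I would account for two sources of work. (i) The $|T|$ DFTs of the vectors $u^{(\tau)}$, each of length $B$, cost $O(|T| \cdot B \log B) = O(k \log k)$ in total. (ii) The $B$ calls to \textsc{SignalFromSyndrome}, each on an input of length $2C\log k$. Here I would invoke Theorem~\ref{thm:black_box} with $t = C\log k$; since $k \le n$ we have $t = O(\log n)$, so each call costs $O(t^{2} + t (\log\log n)^{O(1)})$. Summing over $B$ bins gives $B\cdot O(t^{2} + t (\log\log n)^{O(1)}) = O(k \log k + k(\log\log n)^{O(1)})$, which dominates (i) and matches the claim. A small but important check is that the hypothesis $t \le O(\log n)$ of Theorem~\ref{thm:black_box} is satisfied; this follows from $k \le n$.

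For correctness, I would first verify the comb-filter identity that for each $j \in [B]$, the sequence $\{\wh{u}^{(\tau)}_j\}_{\tau \in T}$ is, up to a fixed nonzero scalar, the first $|T|$ time-domain samples of a one-dimensional signal whose nonzero spectral values are exactly the coordinates $\wh{x}_f$ with $f \equiv j \pmod{B}$, located at the $n$-th roots of unity $(\omega')^{f}$. This is the standard spectrum-folding calculation for the subsampled signal $u^{(\tau)}_i = x_{i(n/B)+\tau}$ (as in Section~\ref{sec:defs}). By Lemma~\ref{lem:1d_prob}, with probability at least $1 - O((1/k)^{0.5C\log C})$ every bin receives at most $C\log k$ nonzero frequencies; condition on this event. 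Then for every $j$, the signal presented to \textsc{SignalFromSyndrome} is $(C\log k)$-sparse, and Theorem~\ref{thm:black_box} guarantees that it returns the correct frequency/value pairs. Reassembling the original index of a recovered frequency $f_i$ in bin $j$ is immediate because the algorithm stores $\wh{x}_{f_i} \gets v_i$ using the value $f_i$ output by \textsc{Pan}, which is an actual index in $[n]$ rather than a bin-local offset. Hence $\wh{x}$ is reconstructed exactly on this event.

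The only place I expect to need care is in the comb-filter step, namely checking that the scalar relating $\{\wh{u}^{(\tau)}_j\}_{\tau}$ to the desired time-domain sequence is both nonzero and independent of $\tau$, so that the inputs to \textsc{BerlekampMassey} and \textsc{Pan} are genuine syndromes of a sparse signal on the $n$-th roots of unity (which also ensures the precision requirement of Lemma~\ref{thm:pan} is met with $b = \log n \log\log n$). Aside from this bookkeeping, no further union bound is required, since Lemma~\ref{lem:1d_prob} already controls the failure probability uniformly over all $B$ bins, and the stated success probability $1 - O((1/k)^{0.5C\log C})$ follows.
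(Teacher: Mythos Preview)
Your proposal is correct and follows essentially the same argument as the paper: sample count via $|T|\cdot B = O(k)$, running time by summing the $|T|$ length-$B$ DFTs and the $B$ calls to \textsc{SignalFromSyndrome} via Theorem~\ref{thm:black_box}, and correctness by conditioning on the event of Lemma~\ref{lem:1d_prob} so that every bin is $(C\log k)$-sparse and \textsc{SignalFromSyndrome} succeeds. Your additional checks (that $t=O(\log n)$ and that the folded bins are genuine syndromes) are points the paper glosses over but do not change the structure of the proof.
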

\begin{proof}
  By Lemma~\ref{lem:1d_prob}, with probability at least
  $1-O((1/k)^{0.5C\log{C}})$, all the bins have at most $C\log{k}$
  nonzero frequencies each. Then, Theorem \ref{thm:black_box}
  guarantees the success of
  $\textsc{SignalFromSyndrome}(n,\wh{u}^{(T)}_j)$ for every $j \in
  [B]$. This proves the correctness of \textsc{Exact1DSFFT}.

The running time of the for loop over $\tau$ is $O(k\log{k})$. By Theorem \ref{thm:black_box}, \textsc{SignalFromSyndrome}$(n,\wh{u}^{(T)}_j)$ takes time $O(\log^{2}{k} + \log{k}(\log\log{n})^{O(1)} )$ for every $j \in [B]$. Thus, the total running time of \textsc{Exact1DSFFT} is $O(k(\log{k}+(\log\log{n})^{O(1)}) )$. For every $\tau \in [C\log{k}]$, computing $\wh{u}^{(\tau)}$ requires $B = k/\log{k}$ samples. Thus, the total number of samples needed is $O(k)$.
\end{proof}

\subsection{Exact 2D Algorithm for \lowercase{$k=\uppercase{O}(n)$}}\label{app:2dgen}
Here, we generalize the \textsc{Exact1DSFFT} to the 2D case. 
For $k/\log{k}\ge \sqrt{n}$, the generalization is straightforward and can be found in Alg.~\ref{a:general2Dlargek}: \textsc{Exact2DSFFT1}. For $k/\log{k} \le \sqrt{n}$, the generaliztion requires an extra step and can be found in Alg.~\ref{a:2d} : \textsc{Exact2DSFFT2}. 

When $k / \log k \geq \sqrt{n}$, the desired bucket size $n/B$ is less
than $\sqrt{n}$, so we can have one-dimensional buckets and recover
the locations with a single application of syndrome decoding.  When $k
/ \log k < \sqrt{n}$, we need to have two dimensional buckets to make
them large enough.  But this means syndrome decoding will not uniquely
identify the locations, and we will need multiple tests.

\begin{algorithm}
\caption{Exact 2D sparse FFT algorithm for sparsity $k/\log{k} \geq \sqrt{n}$} \label{a:general2Dlargek}
    \begin{algorithmic}
    \Procedure{Exact2DSFFT1}{$x,k$}
        \State $B_{1} \gets \sqrt{n}$.
        \State $B_{2} \gets \theta(k/(\log{k}\sqrt{n}))$.
        \State $T \gets [2C\log{k}]$ for a sufficiently large constant $C$.
        \For{$\tau \in T$}
        \State Define $u^{(\tau)}_{i,j}: = x_{i,j(\sqrt{n}/B_{2})+\tau}$ for $(i,j) \in [B_1]\times[B_2]$.
        \State Compute  the 2D FFT $\wh{u}^{(\tau)}$ of $u^{(\tau)}$
        \EndFor

        \State $\wh{x} \gets 0$.
        \For{$(i,j) \in [B_{1}] \times [B_{2}]$}\Comment{$\wh{u}^{(T)}_{i,j}:=\{\wh{u}^{(\tau)}_{i,j}\,:\,\tau\in{T}\}$.}
	    \State $\{(f_l,v_l)\}_{l \in [C\log{k}]} \gets \textsc{SignalFromSyndrome}(\sqrt{n},\wh{u}^{(T)}_{i,j})$ 

	    \State $\wh{x}_{i,f_l} \gets v_l$ for all $l \in [C\log{k}]$.
        \EndFor
        \State \Return $\wh{x}$
    \EndProcedure
    \end{algorithmic}
\end{algorithm}

\paragraph{Description and Analysis of \textsc{Exact2DSFFT1}}
The algorithm \textsc{Exact2DSFFT1} applies to the case where \\ $k/\log{k} \geq \sqrt{n}$. As in the 1D case, we use $B=\Theta(k/\log{k})$ buckets each of which having $\Theta(n \log{k}/k)$ frequencies mapping to it. We construct the buckets corresponding to a phase shift of $\tau$ along the second dimension for all $\tau \in [2C\log k]$. As in the 1D case, with high probability, each of those buckets will have at most $C\log{k}$ nonzero frequencies. The particular choice of the buckets above will ensure that the inputs to the \textsc{SignalFromSyndrome} procedure have the appropriate ``syndrome'' form.

\begin{theorem}\label{le:analysisrunningtimelargek}
If $x$ is a $k$-sparse signal (with $k/\log{k} \geq \sqrt{n}$) distributed according to the Bernoulli model of Section \xref{sec:defs}, 
then Algorithm \textsc{Exact2DSFFT1} runs in time $O(k\log{k})$, uses $O(k)$ samples and recovers the spectrum $\wh{x}$ of $x$ with probability at least $1 - O\left(1/k^{0.5C\log{C}}\right)$.
\end{theorem}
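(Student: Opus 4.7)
The plan is to mirror the analysis of \textsc{Exact1DSFFT} (Theorem~\ref{thm:noiseless_1d_sparse_fft}), using the two-dimensional comb filter from Section~\ref{sec:defs} so that each bucket is a one-dimensional ``syndrome'' that can be fed to \textsc{SignalFromSyndrome}. First I would verify the folding structure: with $B_r = B_1 = \sqrt{n}$, $B_c = B_2 = \Theta(k/(\log k \sqrt{n}))$, $\tau_r = 0$, $\tau_c = \tau$, the formula for the folded spectrum in Section~\ref{sec:defs} collapses (because $\sqrt{n}/B_1 = 1$) to
\[
\wh{u}^{(\tau)}_{i,j} \;=\; \sum_{m\in[\sqrt{n}/B_2]} \wh{x}_{i,\,mB_2 + j}\, \omega^{-\tau (j + mB_2)}.
\]
Thus, for each bin $(i,j)$, the family $\{\wh{u}^{(\tau)}_{i,j}\}_{\tau\in T}$ is exactly the first $2C\log k$ time-domain samples of the $\sqrt{n}$-dimensional frequency signal whose nonzero coordinates are $\{\wh{x}_{i,\,mB_2+j}\}$. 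Hence \textsc{SignalFromSyndrome}$(\sqrt{n},\wh{u}^{(T)}_{i,j})$ is being called in the regime for which Theorem~\ref{thm:black_box} is stated, and it will correctly return the (position, value) pairs of the nonzero $\wh{x}_{i,\,mB_2+j}$ provided the bucket is at most $C\log k$-sparse.

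Next I would establish the occupancy bound, the direct 2D analogue of Lemma~\ref{lem:1d_prob}. Each bin contains $n/(B_1 B_2) = \Theta(n\log k/k)$ spectrum positions, each present independently with probability $k/n$, so the expected per-bin occupancy is $\Theta(\log k)$. The same union-bound/Chernoff-style calculation as in Lemma~\ref{lem:1d_prob} (replacing the $B$ bins there by the $B_1 B_2 = \Theta(k/\log k)$ bins here) yields
\[
\Pr[\text{some bin has more than } C\log k \text{ nonzeros}] \;\le\; B_1 B_2 \left(\tfrac{e k}{B_1 B_2 \cdot C\log k}\right)^{C\log k} \;=\; O\!\left(1/k^{0.5C\log C}\right)
\]
for $C$ large enough. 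Conditioned on this event, every call to \textsc{SignalFromSyndrome} succeeds by Theorem~\ref{thm:black_box}, and concatenating the outputs over all bins reconstructs $\wh{x}$ exactly.

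For the resource bounds, each of the $|T| = O(\log k)$ shifts requires reading $B_1 B_2 = \Theta(k/\log k)$ entries of $x$ and computing a 2D FFT of that size, using $O(k)$ samples and $O((k/\log k)\log(k/\log k)) = O(k)$ arithmetic operations per shift, for totals of $O(k)$ samples and $O(k\log k)$ time over all $\tau$. Then, the $B_1 B_2 = \Theta(k/\log k)$ invocations of \textsc{SignalFromSyndrome} each cost $O((\log k)^2 + \log k\,(\log\log n)^{O(1)})$ by Theorem~\ref{thm:black_box}, summing to $O(k\log k + k(\log\log n)^{O(1)})$. The assumption $k/\log k \ge \sqrt{n}$ implies $\log k = \Theta(\log n)$, so $(\log\log n)^{O(1)} = o(\log k)$ and the whole running time collapses to $O(k\log k)$.

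The main subtlety, and the only step that is not a direct transcription of the one-dimensional argument, is the first one: confirming that folding only in the second dimension (with $B_1 = \sqrt{n}$) produces a genuine one-dimensional syndrome per bucket with the exact phase structure that the Berlekamp--Massey/Pan pipeline in \textsc{SignalFromSyndrome} requires, including the fact that the roots to be found are $\sqrt{n}$-th roots of unity so that \textsc{Pan}'s precision guarantee in Lemma~\ref{thm:pan} is sufficient. Once that is in place, the occupancy bound and the resource accounting above combine, via a union bound over the good bucketing event and the deterministic correctness of \textsc{SignalFromSyndrome}, to give the stated $1 - O(1/k^{0.5C\log C})$ success probability.
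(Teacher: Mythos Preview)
Your proposal is correct and follows essentially the same approach as the paper: verify that folding with $B_1=\sqrt{n}$ and $B_2=\Theta(k/(\sqrt{n}\log k))$ yields genuine one-dimensional syndromes $\wh{u}^{(\tau)}_{i,j}=\sum_{f_2\equiv j\bmod B_2}\wh{x}_{i,f_2}\omega^{-\tau f_2}$, invoke the occupancy bound of Lemma~\ref{lem:1d_prob} with $B_1B_2=\Theta(k/\log k)$ bins, and account for samples and time exactly as you did, using $k/\log k\ge\sqrt{n}$ to absorb the $(\log\log n)^{O(1)}$ term into $O(\log k)$. Your treatment is in fact slightly more explicit than the paper's (you spell out the comb-filter derivation and the Pan precision point), but there is no substantive difference.
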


\begin{proof}
For every $(i,j) \in [B_{1}] \times [B_{2}]$ and every $\tau \in [2 C \log{k}]$, $\wh{u}^{(\tau)}_{i,j} = \displaystyle\sum\limits_{f_{2} \equiv j \text{ mod }B_{2}} \wh{x}_{i,f_{2}} \omega^{- \tau f_{2} }$. Using the same argument as in Lemma \ref{lem:1d_prob}, with high probability, every bin $\wh{u}^{(\tau)}_{i,j}$ has at most $C\log{k}$ nonzero frequencies. Noting that the function $\textsc{SignalFromSyndrome}(\sqrt{n},\wh{u}^{(T)}_{i,j})$ succeeds whenever $\wh{u}^{(T)}_{i,j}$ are the syndromes of a $C\log{k}$-sparse signal, implies the correctness of \textsc{Exact2DSFFT1}.

Computing $\wh{u}^{(\tau)}$ for all $\tau \in T$ takes time $O(k \log{k})$. 
Each call to $\textsc{SignalFromSyndrome}(\sqrt{n},\wh{u}^{(T)}_{i,j})$ takes time $O(\log^{2}{k} + \log{k}(\log\log{n})^{O(1)})$ by Theorem \ref{thm:black_box}. 
Thus, the overall running time is $O(k(\log{k}+(\log\log{n})^{O(1)}) )=O(k\log{k})$. 
For every $\tau \in [C\log{k}]$, computing $\wh{u}^{(\tau)}$ requires $B_{1}\times B_{2} = k/\log{k}$ samples. Thus, the total number of samples needed is $O(k)$.
\end{proof}

\begin{algorithm}
\caption{Exact 2D sparse FFT algorithm for sparsity $k/\log{k} \le \sqrt{n}$} \label{a:2d}
    \begin{algorithmic}
    \Procedure{Exact2DSFFT2}{$x,k$}
        \State $B \gets \theta(k/\log{k})$.
        \State $T \gets \{0, 1, \cdots, 2C\log{k}-1\}$ for a sufficiently large constant $C$.
        \For{$(\tau,s) \in T \times [4]$}
            \State Compute  the FFT $\wh{u}^{\tau,s}$ of $u^{\tau,s}$ where for every $i \in [B]$, $u^{(\tau_{1},\tau_{2})}_{i} = x_{\tau_{1},i(\sqrt{n}/B)+\tau_{2}} $
        \EndFor

        \State $\wh{x} \gets 0$.
        \For{$i \in [B]$}

        \For{$s \in [4]$} \Comment{$\wh{u}^{T,s}_{i}=\{\wh{u}^{\tau,s}_{i}\,:\,\tau\in{T}\}$.}

	    \State $\{(f_l^{(s)}, v_l^{(s)})\}_{l \in [C\log k]} \gets \textsc{SignalFromSyndrome}(\sqrt{n},\wh{u}^{T,s}_i)$

\comments{            \State  $\Lambda(f) \gets \textsc{BerlekampMassey}(\{\wh{u}^{\tau,s}_{i}\}_{\tau})$.
            \State  $\{f_1^{s,i},\cdots,f_{C\log{k}}^{s,i}\} \gets \textsc{Pan}(\Lambda(f))$. \comments{\Comment{This finds $\{f : |\wh{x}_f| > 0\ \mbox{and}\ f = j \mod B\}$ }} 

	     \State  $V \gets \textsc{Vandermonde}(\omega^{f_1^{s,i}},\cdots,\omega^{f_{C\log{k}}^{s,i}})$ \Comment{$\omega$ is an $\sqrt{n}$th root of unity}
            \State  $V^{-1} \gets$ \textsc{InverseVandermonde}($V$) \Comment{Invert Vandermonde matrix}
            \State  $\{v_1^{s,i},\cdots,v_{C\log{k}}^{s,i}\} \gets V^{-1}\wh{u}^{T,s}_{i}$. \Comment{$\wh{u}_{i}^{T,s}=\{\wh{u}_{i}^{\tau,s}\,:\,\tau\in{T}\}$}
}
        \EndFor

	\State $(\{f_0,\cdots,f_{O(\log{k})}\}, \{y_0^{(s)},\cdots,y_{O(\log{k})}^{(s)}\}_{s \in [4]}) \gets \textsc{Match}(\{(f_l^{(s)},v_l^{(s)})\}_{s \in [4], ~  l \in [C\log{k}]})$

        \For{$l \in [O(\log{k})]$}\Comment{$y_l^{(S)}:=\{y_l^{(s)}\,:\,s \in S=[4]\}$.}
	    \State $\{(g_0,w_0),(g_1, w_1)\} \gets \textsc{SignalFromSyndrome}(\sqrt{n},y_l^{(S)})$ 
	    \State $\wh{x}_{f_l,g_j} \gets w_j$ for all $j \in [2]$.
        \EndFor
        \EndFor

        \State \Return $\wh{x}$
    \EndProcedure
    \end{algorithmic}
\end{algorithm}

\paragraph{Description and Analysis of \textsc{Exact2DSFFT2}}
The algorithm \textsc{Exact2DSFFT2} above applies to the case where
$k/\log{k} \le \sqrt{n}$. As in the $1$D case, we use
$B=\Theta(k/\log{k})$ buckets, each of which having $\Theta(n
\log{k}/k)$ frequencies mapping to it (i.e. $\Theta(\sqrt{n}
\log{k}/k)$ columns). We construct $4$ sets of buckets (as opposed to
$1$ set in the 1D case). Those sets correspond to the phase shifts
$(\tau,0)$, $(\tau,1)$, $(\tau,2)$ and $(\tau,3)$ for all $\tau \in
[2C\log k]$. We run the \textsc{SignalFromSyndrome} procedure on each
of those $4$ sets. As opposed to the 1D case, the resulting values can
be the superposition of $2$ or more nonzero frequency
components. However, as shown in Lemma \ref{le:collisionthree}, with
high probability, all the obtained values correspond to the
superposition of at most $2$ nonzero frequency components. The $4$
corresponding superpositions (one from each of the $4$ sets) are then
combined (by the \textsc{Match} procedure) to get the union
$\{f_0,\cdots,f_{O(\log{k})}\}$ of the sets
$\{f_0^{(s)},\cdots,f_{C\log{k}-1}^{(s)}\}$ for all $s \in [4]$ along
with the associated values $\{y_0^{(s)},\cdots,y_{O(\log{k})}^{(s)}\}$
(with a value $0$ if the frequency did not appear for some $s$). Then,
we give the $4$ resulting superpositions as inputs to the
\textsc{SignalFromSyndrome} procedure again. The particular choice of
the $4$ sets of buckets above ensures that those inputs have the
appropriate ``syndrome'' form. The output of this procedure will then
consist of original spectrum values.

\begin{lemma}\label{le:blackboxfirst}
  With probability at least $1 - O\left(1/k^{0.5C\log{C}}\right)$, for
  every $i \in [B]$ and $s \in [4]$, the output of
  $\textsc{SignalFromSyndrome}(\sqrt{n},\wh{u}^{T,s}_i)$ consists of
  all nonzero values of the form $\displaystyle\sum\limits_{f_{2}
    \equiv i \text{ mod }B} \wh{x}_{f_{1},f_{2}} \omega^{- s f_{2} }$
  for some $f_{1} \in [\sqrt{n}]$.
\end{lemma}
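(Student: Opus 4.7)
The plan is to unroll the 2D inverse Fourier expansion of $x_{\tau,i(\sqrt n/B)+s}$ inside the definition of $\wh u^{(\tau,s)}_i$, observe that the geometric sum over the subsampling index collapses to $B\cdot\mathbb{1}[f_2\equiv i\bmod B]$, and conclude that, up to the global scalar $B/\sqrt n$, the tuple $(\wh u^{(\tau,s)}_i)_{\tau\in T}$ is exactly the list of the first $|T|=2C\log k$ time-domain samples of the length-$\sqrt n$ signal whose spectrum is
\[
b^{(s)}_{f_1,i}\;=\;\sum_{f_2\equiv i\bmod B}\wh x_{f_1,f_2}\,\omega^{-s f_2}.
\]
Concretely the calculation gives the identity
\[
\wh u^{(\tau,s)}_i\;=\;\frac{B}{\sqrt n}\sum_{f_1\in[\sqrt n]} b^{(s)}_{f_1,i}\,\omega^{-\tau f_1},
\]
which puts $\wh u^{(T,s)}_i$ into the exact syndrome form consumed by \textsc{SignalFromSyndrome}.

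Next I would bound the sparsity of $b^{(s)}_{\cdot,i}$. An index $f_1$ is in its support only if some $f_2$ with $f_2\equiv i\bmod B$ has $\wh x_{f_1,f_2}\neq 0$, so $\|b^{(s)}_{\cdot,i}\|_0$ is at most the number of nonzero entries of $\wh x$ in the folded column $\{(f_1,f_2):f_2\equiv i\bmod B\}$. This set has $n/B=(n\log k)/k$ positions, each independently nonzero with probability $k/n$ under the Bernoulli model, so the count is binomial with mean $k/B=\log k$. The same Chernoff-type calculation used in Lemma~\ref{lem:1d_prob} then bounds by $O(1/k^{0.5C\log C})$ the probability that any of the $B$ folded columns contains more than $C\log k$ nonzero entries; an additional union bound over the four shifts $s\in[4]$ costs only a constant factor and preserves the stated failure probability.

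On this good event every $b^{(s)}_{\cdot,i}$ has support of size at most $C\log k$, and since $k\le\sqrt n$ this is $O(\log\sqrt n)$, so Theorem~\ref{thm:black_box} applies with $t=C\log k$ and guarantees that $\textsc{SignalFromSyndrome}(\sqrt n,\wh u^{T,s}_i)$ correctly returns the support-value pairs of $b^{(s)}_{\cdot,i}$. These are precisely the nonzero values $\sum_{f_2\equiv i\bmod B}\wh x_{f_1,f_2}\,\omega^{-s f_2}$ claimed by the lemma.

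The only subtlety worth flagging is pure bookkeeping around the global scalar $B/\sqrt n$ relating $\wh u^{(\tau,s)}_i$ to the genuine time-domain samples of $b^{(s)}_{\cdot,i}$: since \textsc{BerlekampMassey} extracts the error-locator polynomial from a scale-invariant linear recurrence on its input, the recovered locations are unaffected, and since the Vandermonde solve is linear in the syndromes the recovered values are consistently rescaled, so the stated support-and-values conclusion matches up to this understood normalization. Beyond this, the proof is a direct composition of the reduction identity above with Theorem~\ref{thm:black_box} and the bucket-concentration bound of Lemma~\ref{lem:1d_prob}.
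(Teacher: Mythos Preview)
Your proposal is correct and follows essentially the same approach as the paper's proof: derive the syndrome identity $\wh{u}^{(\tau,s)}_i = \sum_{f_1}\bigl(\sum_{f_2\equiv i}\wh{x}_{f_1,f_2}\omega^{-sf_2}\bigr)\omega^{-\tau f_1}$, invoke Lemma~\ref{lem:1d_prob} to bound the number of nonzero inner sums by $C\log k$, and then appeal to the correctness of \textsc{SignalFromSyndrome}. Your version is more explicit about normalization and about verifying the $t\le O(\log n)$ hypothesis of Theorem~\ref{thm:black_box}; two minor remarks: the regime here is $k/\log k\le\sqrt n$ (not $k\le\sqrt n$), though this still gives $C\log k=O(\log n)$ as you need, and the union bound over $s\in[4]$ is unnecessary since the sparsity bound on $b^{(s)}_{\cdot,i}$ is uniform in $s$.
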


\begin{proof}
  As in Lemma \ref{lem:1d_prob}, we have that the probability that
  there is a bin with more than $\abs{T}/2 = C\log{k}$ nonzero frequencies is at
  most $O\left(1/k^{0.5C\log{C}}\right)$. Moreover, for every $\tau
  \in T$, $i \in [B]$ and $s \in [4]$, we have:
\begin{align}
\label{equ:superposition_vandermode}
\wh{u}^{\tau,s}_i & = &  \displaystyle\sum\limits_{f_{1} \in [\sqrt{n}]} \displaystyle\sum\limits_{f_{2} \equiv i \text{ mod }B} \wh{x}_{f_{1},f_{2}} \omega^{-\tau f_{1} - s f_{2}} \\
& = & \displaystyle\sum\limits_{f_{1} \in [\sqrt{n}]} \big(\displaystyle\sum\limits_{f_{2} \equiv i \text{ mod }B} \wh{x}_{f_{1},f_{2}} \omega^{- s f_{2} } \big) \omega^{-\tau f_{1}}
\end{align}
Noting that the function
$\textsc{SignalFromSyndrome}(\sqrt{n},\wh{u}^{T,s}_i)$ succeeds
whenever $\wh{u}^{T,s}_i$ are the syndromes of a $\abs{T}/2$-sparse
signal, we get the desired statement.
\end{proof}

\begin{lemma}\label{le:collisionthree}
The probability that there are more than $2$ nonzero frequency components that superimpose in a power of $\omega$ 
(i.e., as in Equation~(\ref{equ:superposition_vandermode}), $\wh{x}_{f_1, f_2}$ and $\wh{x}_{f_1', f_2'}$ superimpose
if $f_1 = f'_1$ and $f_2\equiv f_2'\ \mbox{mod}\ B$)
is at most $O(\frac{k \log^{2}{k}}{n})$.
\end{lemma}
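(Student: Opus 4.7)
The plan is to apply a straightforward union bound over all $(f_1, i) \in [\sqrt{n}] \times [B]$ pairs, where a ``superposition class'' is the set of frequency positions $\{(f_1, f_2) : f_2 \equiv i \pmod{B}\}$ sharing row index $f_1$ and column-bin index $i$. For fixed $(f_1, i)$, this class contains exactly $\sqrt{n}/B$ positions, and under the Bernoulli model each is independently nonzero with probability $k/n$.

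First, I would bound the probability that a single fixed class contains at least three nonzero entries by
\[
\binom{\sqrt{n}/B}{3}\left(\frac{k}{n}\right)^3 \le \frac{1}{6}\cdot\frac{(\sqrt{n}/B)^3 k^3}{n^3}.
\]
Next, I would take a union bound over the $\sqrt{n} \cdot B$ possible classes. This yields
\[
\Pr[\text{some class has} \geq 3 \text{ nonzeros}] \le \sqrt{n}\cdot B \cdot \frac{1}{6}\cdot\frac{n^{3/2} k^3}{B^3 n^3} = \frac{k^3}{6 B^2 n}.
\]
Finally, substituting $B = \Theta(k/\log k)$ gives $B^2 = \Theta(k^2/\log^2 k)$, so the expression simplifies to $O\!\left(\frac{k \log^2 k}{n}\right)$, as claimed.

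I do not expect any real obstacle here: the result is a pure balls-and-bins calculation, and independence of the Bernoulli support makes the per-class tail trivially bounded by the third binomial moment. The only thing to be mildly careful about is identifying the correct number of classes ($\sqrt{n} \cdot B$, one for each row $f_1$ and each residue $i \bmod B$) and the correct class size ($\sqrt{n}/B$), since these quantities are what make the $B^2$ cancel against the $(\sqrt{n})^2 = n$ factor in the numerator to leave the stated $k \log^2 k / n$ bound.
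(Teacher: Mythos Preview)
Your proof is correct and is essentially the same union-bound argument as the paper's. The paper phrases the count slightly differently---it sums over all $n$ positions, weights by $k/n$ to pick one nonzero, and then asks for at least two more collisions via $\binom{\sqrt{n}/B}{2}(k/n)^2$, arriving at $k^3/(2nB^2)$---but this is the same computation as yours up to the overcounting factor ($1/2$ versus your $1/6$), and both yield $O(k\log^2 k/n)$ after substituting $B=\Theta(k/\log k)$.
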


\begin{proof}
Since $\sqrt{n}/B = \Theta(\sqrt{n}\log{k}/k)$ frequencies map to each power of $\omega$ in each bucket, the probability is upper bounded by 
\[
n \cdot k/n \cdot {\sqrt{n} / B \choose 2} \left
  (\frac{k}{n}\right)^{2} \le \frac{k^3}{2nB^2} = \Theta(\frac{k \log^{2}{k}}{n})
\]
\end{proof}

\begin{lemma}\label{le:blackboxsecond}
  With probability at least \[ 1 - O\left(k \log^{2}{k}/n -
  1/k^{0.5C\log{C}}\right), \] for all $i \in [B]$ the outputs
  of $\textsc{SignalFromSyndrome}(\sqrt{n},y_l^{S})$ for all $l \in [C \log{k}]$
  consist of all nonzero $\wh{x}_{f_{1},f_{2}}$ where $f_{2} \equiv i
  \text{ mod }B$.
\end{lemma}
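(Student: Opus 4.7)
My plan is to condition on the joint success of the two previous lemmas and then invoke Theorem~\ref{thm:black_box} a second time on an at-most-$2$-sparse signal.

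First, let $E_1$ be the event that for every $i \in [B]$ and every $s \in [4]$, the first-stage call $\textsc{SignalFromSyndrome}(\sqrt{n}, \wh{u}^{T,s}_i)$ returns exactly the nonzero values $V_{f_1,i,s} := \sum_{f_2 \equiv i \bmod B} \wh{x}_{f_1,f_2}\,\omega^{-s f_2}$ indexed by $f_1 \in [\sqrt{n}]$, as guaranteed by Lemma~\ref{le:blackboxfirst}. Let $E_2$ be the event that for every pair $(f_1,i)$ there are at most $2$ indices $f_2 \equiv i \bmod B$ with $\wh{x}_{f_1,f_2} \neq 0$, as in Lemma~\ref{le:collisionthree}. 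A union bound gives
\[
\Pr[E_1 \cap E_2] \;\geq\; 1 - O\!\left(\tfrac{k \log^{2} k}{n}\right) - O\!\left(\tfrac{1}{k^{0.5 C \log C}}\right).
\]

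Next, I would argue that under $E_1 \cap E_2$ the \textsc{Match} procedure correctly aligns the first-stage outputs across $s \in [4]$: it collects the union of the reported $f_1$'s (the ``active'' columns for bin $i$) and, for each active $f_l$, forms the tuple $(y_l^{(0)}, y_l^{(1)}, y_l^{(2)}, y_l^{(3)})$ with $y_l^{(s)} = V_{f_l,i,s}$. Here it is essential that the rule ``value $0$ if the frequency did not appear for some $s$'' is consistent: if $V_{f_l,i,s} = 0$ then the first-stage output for that $s$ legitimately omits $f_l$, so padding with $0$ yields the correct syndrome.

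Now, for each active $f_l$, define the $1$D spectrum $z \in \C^{\sqrt{n}}$ by $z_{f_2} := \wh{x}_{f_l,f_2}$ if $f_2 \equiv i \bmod B$ and $z_{f_2} := 0$ otherwise. Under $E_2$, $z$ has at most $2$ nonzero entries, and the four values $y_l^{(s)} = \sum_{f_2} z_{f_2}\,\omega^{-s f_2}$ for $s \in [4]$ are precisely its first $4 = 2 \cdot 2$ time-domain syndromes. Applying Theorem~\ref{thm:black_box} with $t = 2$ to $\textsc{SignalFromSyndrome}(\sqrt{n}, y_l^{S})$ recovers $z$ exactly, i.e.\ it returns the pairs $\{(g_j, w_j)\}_{j \in [2]}$ encoding all nonzero $\wh{x}_{f_l,g_j}$ with $g_j \equiv i \bmod B$. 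Ranging over active $f_l$ and $i \in [B]$ recovers all nonzero $\wh{x}_{f_1,f_2}$ with $f_2 \equiv i \bmod B$, completing the claim.

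The main subtlety, rather than an obstacle, is the bookkeeping for \textsc{Match}: one must verify that the $4$ first-stage outputs can be cross-indexed by the $f_1$ labels (which they can, since the first-stage syndrome decoding returns positions as well as values), and that genuinely zero superpositions $V_{f_l,i,s} = 0$ are handled by the zero-padding rule. Once this is in place, everything reduces to two applications of Theorem~\ref{thm:black_box} and a union bound over the events in Lemmas~\ref{le:blackboxfirst} and \ref{le:collisionthree}.
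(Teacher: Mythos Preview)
Your proposal is correct and follows essentially the same approach as the paper: condition on the events of Lemmas~\ref{le:blackboxfirst} and~\ref{le:collisionthree} via a union bound, observe that the aligned $4$-tuples $y_l^{(S)}$ are the first four syndromes of an at-most-$2$-sparse signal in $\C^{\sqrt{n}}$, and apply \textsc{SignalFromSyndrome} again. You are in fact more explicit than the paper about the \textsc{Match} bookkeeping and the zero-padding convention, which the paper leaves implicit.
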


\begin{proof}
  By Lemmas \ref{le:collisionthree} and \ref{le:blackboxfirst}, the
  probability that all bins have at most $C\log{k}$ nonzero
  frequencies and all powers of $\omega$ have at most $2$ nonzero
  frequencies is at least $1 - O\left(k \log^{2}{k}/n -
  1/k^{0.5C\log{C}}\right)$. Then for every $i \in [B]$ and $s \in
  S=[4]$, there are at most $C\log{k}$ nonzero values of the form
  $\displaystyle\sum\limits_{f_{2} \equiv i \text{ mod }B}
  \wh{x}_{f_{1},f_{2}} \omega^{- s f_{2} }$ where $f_{1} \in [\sqrt{n}]$,
  and each of those sums consists of at most $2$ terms. Thus,
  $y_l^{(S)}$ are the syndromes of a $2$-sparse signal of the form
  $\wh{x}_{f_l,g_0} \omega^{-g_0 r} + \wh{x}_{f_l,g_1}
  \omega^{-g_1 r}$ where $r$ is the time-domain index. This yields the
  desired statement.
\end{proof}

\begin{theorem}\label{le:analysisrunningtime}
  If $x$ is a $k$-sparse signal (with $k/\log{k} \le \sqrt{n}$)
  distributed according to the Bernoulli model of Section
  \xref{sec:defs}, then Algorithm \textsc{Exact2DSFFT2} runs in time
  $O(k(\log{k}+(\log\log{n})^{O(1)}) )$, uses $O(k)$ samples and
  recovers the spectrum $\wh{x}$ of $x$ with probability at least $1 -
  k \log^{2}{k}/n - O\left(1/k^{0.5C\log{C}}\right)$.
\end{theorem}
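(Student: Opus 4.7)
The plan is to combine the structural guarantee from Lemma~\ref{le:blackboxsecond} (which already handles correctness) with a careful accounting of samples and running time, tracking each phase of \textsc{Exact2DSFFT2} separately.

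First I would handle correctness. Lemma~\ref{le:blackboxsecond} states that with probability at least $1 - O(k\log^{2}k/n) - O(1/k^{0.5C\log C})$, for every $i\in[B]$ and every $l$ the calls to \textsc{SignalFromSyndrome}$(\sqrt{n}, y_l^{(S)})$ return exactly the nonzero entries $\wh{x}_{f_1,f_2}$ with $f_2\equiv i \pmod{B}$. Since every $f_2\in[\sqrt{n}]$ falls in some residue class modulo $B$, the union of the recovered entries across $i\in[B]$ is exactly $\supp(\wh{x})$, with the correct values. This establishes the claimed $1 - k\log^{2}k/n - O(1/k^{0.5C\log C})$ success probability.

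Next I would bound the sample complexity. For each $(\tau,s)\in T\times[4]$ the signal $u^{(\tau,s)}$ has $B=\Theta(k/\log k)$ entries, and these are the only samples of $x$ accessed. Since $|T|=2C\log k$ and $|[4]|=4$, the total number of samples is $|T|\cdot 4\cdot B = O(\log k)\cdot \Theta(k/\log k) = O(k)$, as required.

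For the running time I would decompose the work into four pieces. The outer loop that computes $\wh{u}^{(\tau,s)}$ performs $O(\log k)$ FFTs of size $B$, costing $O(\log k\cdot B\log B)=O(k\log k)$. Then for each $i\in[B]$ and each of the $4$ values of $s$, the first call to \textsc{SignalFromSyndrome} has syndrome length $|T|=O(\log k)$, so by Theorem~\ref{thm:black_box} it runs in time $O(\log^{2}k + \log k (\log\log n)^{O(1)})$; summing over the $4B$ such calls gives $O(k\log k + k(\log\log n)^{O(1)})$. The \textsc{Match} step merges four lists of length $O(\log k)$ per bucket, costing $O(\log k)$ per $i$ and $O(k)$ overall. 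Finally, for each of the $O(\log k)$ recovered $f_l$ per bucket, a second call to \textsc{SignalFromSyndrome} with $t=2$ costs $O((\log\log n)^{O(1)})$ by Theorem~\ref{thm:black_box}, for a total of $B\cdot O(\log k)\cdot O((\log\log n)^{O(1)}) = O(k(\log\log n)^{O(1)})$. Adding the four contributions gives $O(k(\log k + (\log\log n)^{O(1)}))$, completing the proof.

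The only delicate piece is the application of Theorem~\ref{thm:black_box} at the second level: one must verify that the two outputs of \textsc{SignalFromSyndrome}$(\sqrt{n}, y_l^{(S)})$ correspond to the two nonzero frequencies $g_0,g_1$ in that superposition, which is exactly the content of Lemma~\ref{le:blackboxsecond}. Everything else is a straightforward accounting of the per-bucket costs multiplied by the number $B=\Theta(k/\log k)$ of buckets.
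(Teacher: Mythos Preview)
Your proposal is correct and follows essentially the same approach as the paper's proof: invoke Lemma~\ref{le:blackboxsecond} for correctness, count the $|T|\cdot 4\cdot B = O(k)$ samples, and decompose the running time into the FFT phase, the first-level \textsc{SignalFromSyndrome} calls (bounded via Theorem~\ref{thm:black_box}), the \textsc{Match} step, and the second-level \textsc{SignalFromSyndrome} calls. The paper's proof is terser but identical in structure and in the lemmas it relies on.
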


\begin{proof}
  Lemma \ref{le:blackboxsecond} implies that Algorithm
  \textsc{Exact2DSFFT2} succeeds with the desired probability.

  Computing $\wh{u}^{\tau,s}$ for all $\tau \in T$ and all $s \in [4]$
  takes time $O(k \log{k})$. The running time of the \textsc{Match}
  procedure is $O(\log k)$. By Theorem \ref{thm:black_box}, each call
  to \textsc{SignalFromSyndrome} in the for loop over $s$ takes time
  $O(\log^{2}{k} + \log{k}(\log\log{n})^{O(1)})$ whereas each one in
  the for loop over $l$ takes time $O((\log\log{n})^{O(1)})$. Thus,
  the overall running time is $O(k(\log{k}+(\log\log{n})^{O(1)}) )$.

  For every $\tau \in [C\log{k}]$, computing $\wh{u}^{(\tau)}$
  requires $B = k/\log{k}$ samples. Thus, the total number of samples
  needed is $O(k)$.
\end{proof}


\section{Algorithm for Robust Recovery}\label{sec:general}

\begin{algorithm}[ht!]
 \caption{Robust 2D sparse FFT algorithm for $k = \Theta(\sqrt{n})$}\label{a:robust2}
  \begin{algorithmic}
    \Procedure{RobustEstimateCol}{$\wh{u}$, $\wh{v}$, $T$, $T'$, IsCol, $J$, Ranks}
    \State $\wh{w} \gets 0$.
    \State $S \gets \{\}$ \Comment{Set of changes, to be tested next round.}
    \For{$j \in J$}
    \State \textbf{continue} if $\text{Ranks}[(\mbox{IsCol}, j)] \geq \log \log n$.
    \State $i \gets \textsc{HIKPLocateSignal}(\wh{u}^{(T')}, T')$
    \Comment{Procedure from~\cite{HIKP2}: $O(\log^2 n)$ time}
    \State $a \gets \median_{\tau \in T} \wh{u}^{\tau}_{j} \omega^{\tau i}$.
    \State \textbf{continue} if $\abs{a} < L/2$ \Comment{Nothing significant recovered}
    \State \textbf{continue} if $ \sum_{\tau \in T} \sabs{\wh{u}^{\tau}_{j} - a \omega^{-\tau i}}^2 \geq L^2\abs{T}/10$
    \Comment{Bad recovery: probably not 1-sparse}
    \State $b \gets \mean_{\tau \in T} \wh{u}^{\tau}_{j} \omega^{\tau i}$.
    \If{IsCol}  \Comment {whether decoding column or row}
    \State $\wh{w}_{i,j} \gets b$.
    \Else
    \State $\wh{w}_{j,i} \gets b$.
    \EndIf
    \State $S \gets S \cup \{i\}$.
    \State $\text{Ranks}[(1-\text{IsCol}, i)] \text{ += } \text{Ranks}[(\text{IsCol}, j)]$.
    \For{$\tau\in{T \cup T'}$}
    \State $\wh{u}^{(\tau)}_j \gets \wh{u}^{(\tau)}_j - b\omega^{-\tau i}$ 
    \State $\wh{v}^{(\tau)}_i \gets \wh{v}^{(\tau)}_i -  b\omega^{-\tau i}$
    \EndFor
    \EndFor
    \State \Return $\wh{w}$, $\wh{u}$, $\wh{v}$, $S$
    \EndProcedure
    \Procedure{Robust2DSFFT}{$x$, $k$}
    \State $T,T' \subset [\sqrt{n}], \abs{T} = \abs{T'}=O(\log n)$ 
    \For{$\tau \in T \cup T'$}
    \State $\wh{u}^{(\tau)} \gets \textsc{FoldToBins}(x, \sqrt{n}, 1, 0, \tau)$.
    \State $\wh{v}^{(\tau)} \gets \textsc{FoldToBins}(x, 1, \sqrt{n}, \tau, 0)$.
    \EndFor
    \State $\wh{z} \gets 0$
    \State $\text{Ranks} \gets 1^{[2] \times [\sqrt{n}]}$ \Comment{Rank of vertex (iscolumn, index)}
    \State $S_{col} \gets [\sqrt{n}]$ \Comment{Which columns to test}
    \For{$t \in [C\log n]$}
    \State $\{\wh{w}, \wh{u}, \wh{v}, S_{row}\} \gets \textsc{RobustEstimateCol}(\wh{u}, \wh{v}, T,T', $ true, $S_{col}$, Ranks). 
    \State $\wh{z} \gets \wh{z}  + \wh{w}$.
    \State $S_{row} \gets [\sqrt{n}]$ if $t = 0$ \Comment{Try every row the first time}
    \State $\{\wh{w}, \wh{v}, \wh{u}, S_{col}\}  \gets \textsc{RobustEstimateCol}(\wh{v}, \wh{u}, T, T'$ false, $S_{row}$, Ranks).
    \State $\wh{z} \gets \wh{z}  + \wh{w}$.
    \EndFor
    \State \Return $\wh{z}$
    \EndProcedure
  \end{algorithmic}
\end{algorithm}

\subsection{Preliminaries}\label{sec:agenprelim}
Following~\cite{CTao} we say that a matrix $A$ satisfies a {\em restricted isometry property} (RIP) of order $t$ with constant $\delta >0$ if, for all $t$-sparse vectors $y$, we have
$\norm{2}{Ay}^2 /
  \norm{2}{y}^2 \in [1-\delta, 1+\delta]$. 
  
Suppose all columns $A_i$ of an $N \times M$ matrix $A$ have unit norm. Let $\mu=\max_{i \neq j} \abs{A_i \cdot A_j}$ be the {\em coherence} of $A$. It is folklore\footnote{It is a direct corollary of Gershgorin's theorem applied to any $t$ columns of $A$.}  that $A$ satisfies the RIP of order $t$  with the constant 
$\delta=(t-1)\mu$.

Suppose that the matrix $A$ is an $M \times N$ submatrix of the $N
\times N$ Fourier matrix $F$, with each the $M$ rows of $A$ chosen
uniformly at random from the rows of $F$.  It is immediate from the
Hoeffding bound that if $M = b \mu^2 \log (N/\gamma)$ for some large
enough constant $b>1$ then the matrix $A$ has coherence at most $\mu$
with probability $1-\gamma$. Thus, for $M= \Theta(t^2 \cdot t \log
N)$, $A$ satisfies the RIP of order $t$ with constant $\delta=0.5$
with probability $1-1/N^t$.


The algorithm appears in Algorithm~\ref{a:robust2}.

\subsection{Correctness of each stage of recovery}
\newcommand{\yhead}{y^{head}}
\newcommand{\ygauss}{y^{gauss}}
\newcommand{\yres}{y^{residue}}

\begin{lemma}\label{L:FAILUREPROB}
  Consider the recovery of a column/row $j$ in
  \textsc{RobustEstimateCol}, where $\wh{u}$ and $\wh{v}$ are the
  results of \textsc{FoldToBins} on $\wh{x}$.  Let $y \in
  \C^{\sqrt{n}}$ denote the $j$th column/row of $\wh{x}$.  Suppose $y$
  is drawn from a permutation invariant distribution $y = \yhead +
  \yres + \ygauss$, where $\min_{i \in \supp(\yhead)} \abs{y_i} \geq
  L$, $\norm{1}{\yres} < \eps L$, and $\ygauss$ is drawn from the
  $\sqrt{n}$-dimensional normal distribution $N_{\C}(0,
  \sigma^2I_{\sqrt{n}})$ with standard deviation $\sigma = \eps L /
  n^{1/4}$ in each coordinate on both real and imaginary axes.  We do
  not require that $\yhead$, $\yres$, and $\ygauss$ are independent
  except for the permutation invariance of their sum.

    Consider the following bad events:
    \begin{itemize}
    \item False negative: $\supp(\yhead)=\{i\}$ and
      \textsc{RobustEstimateCol} does not update coordinate $i$.
    \item False positive: \textsc{RobustEstimateCol} updates some
      coordinate $i$ but $\supp(\yhead)\neq \{i\}$.
    \item Bad update: $\supp(\yhead)=\{i\}$ and coordinate $i$ is
      estimated by $b$ with $\abs{b - \yhead_i} > \norm{1}{\yres} +
      \sqrt{\frac{\log \log n}{\log n}}\eps L$.
    \end{itemize}

    For any constant $c$ and $\eps$ below a sufficiently small
    constant, there exists a distribution over sets $T,T'$ of size
    $O(\log n)$, such that as a distribution over $y$ and $T,T'$ we
    have
    \begin{itemize}
    \item The probability of a false negative is $1/\log^c n$.
    \item The probability of a false positive is $1/n^c$.
    \item The probability of a bad update is $1/\log^c n$.
    \end{itemize}
\end{lemma}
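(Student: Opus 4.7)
I will handle the three bad events separately, using the fact that $\wh{u}^{(\tau)}_j$ is (up to normalization) the $\tau$th Fourier coefficient $\wh{y}_\tau$ of the underlying column/row $y$, so for each $\tau$ we have $\wh{u}^{(\tau)}_j\omega^{\tau i} = y^{head}_i + N^{(\tau)}_i$ whenever $\supp(y^{head})=\{i\}$, where $N^{(\tau)}_i$ is a ``noise'' term that is a sum of Fourier contributions from $y^{res}$ and $y^{gauss}$ together with head contributions from any other index in $\supp(y^{head})$. The set $T$ will be sampled uniformly at random from $[\sqrt{n}]$, and $T'$ will be chosen according to the \textsc{HIKPLocateSignal} prescription of~\cite{HIKP2}; both have size $O(\log n)$, tuned so that the relevant concentration and RIP events hold with the required probabilities.

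First I would analyze \emph{false positives}. An update fires only when the magnitude test $\abs{a}\geq L/2$ and the residual test $\sum_{\tau\in T}\sabs{\wh{u}^{(\tau)}_j - a\omega^{-\tau i}}^2 < L^2\abs{T}/10$ both pass. Split into the cases $\supp(y^{head})=\emptyset$ and $\abs{\supp(y^{head})}\geq 2$. In the first case, for a random $i$ the quantities $\wh{u}^{(\tau)}_j\omega^{\tau i}$ are (approximately) complex Gaussian of variance $O(\eps^2 L^2)$, so the median exceeds $L/2$ only with probability $e^{-\Omega(\abs{T})}=1/n^c$. In the second case, invoking the RIP-type bound from Section~\ref{sec:agenprelim} (random $T$ of size $\Theta(\log n)$ gives a Fourier submatrix whose action on $O(1)$-sparse vectors is an approximate isometry with probability $1-1/n^c$), any $1$-sparse fit to $\wh{y}_T$ must incur $\ell_2^2$ residual at least $(1-\delta)L^2\abs{T}/2$ just from the two head entries, swamping the $O(\eps^2 L^2\abs{T})$ contribution of $y^{res}+y^{gauss}$ for $\eps$ small enough, so the second test fails deterministically on that RIP event.

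Next I would bound the \emph{false negative} probability, assuming $\supp(y^{head})=\{i\}$. The core ingredient is the guarantee of \textsc{HIKPLocateSignal} from~\cite{HIKP2}: on an approximately $1$-sparse input whose head magnitude dominates the $\ell_2$ noise by a sufficient factor, the procedure returns the correct $i$ except with probability $1/\log^c n$ over $T'$. The permutation invariance of $y$ combined with a tail bound on $\norm{2}{y^{gauss}}^2$ (concentrated around $\sqrt{n}\sigma^2=\eps^2L^2$) and the bound $\norm{1}{y^{res}}<\eps L$ place us comfortably inside the regime covered by that guarantee. Conditioned on the correct $i$, each $\wh{u}^{(\tau)}_j\omega^{\tau i}$ equals $y^{head}_i$ plus a zero-mean term of variance $O(\eps^2L^2)$, so the median is within $L/4$ of $y^{head}_i$ with probability $1-1/n^c$ and hence $\abs{a}\geq L/2$. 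Finally, Markov on the $\ell_2^2$ mass of the Fourier image of $y^{res}+y^{gauss}$ (expectation $O(\abs{T}\eps^2L^2)$) makes the second test pass with probability $1-1/\log^c n$ once $\eps$ is small enough.

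Finally, for the \emph{bad update} event I would condition on the good events above so that $b=\mean_{\tau\in T}\wh{u}^{(\tau)}_j\omega^{\tau i}$ satisfies $b-y^{head}_i=\mean_{\tau\in T}N^{(\tau)}_i$. The residue piece is bounded deterministically by $\norm{1}{y^{res}}$. The Gaussian piece is a complex Gaussian of variance $O(\sqrt{n}\sigma^2/\abs{T})=O(\eps^2L^2/\log n)$, so a Gaussian tail bound gives deviation exceeding $\sqrt{\log\log n/\log n}\,\eps L$ with probability at most $e^{-\Omega(\log\log n)}\leq 1/\log^c n$ for $\abs{T}$ a sufficiently large multiple of $\log n$. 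The three bounds combine by a union bound to yield the stated failure probabilities.

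\textbf{Main obstacle.} The delicate point is justifying that \textsc{HIKPLocateSignal}'s guarantees apply in our setting, where the ``noise'' is not adversarial but a mixture of a deterministic residue and a rotationally symmetric Gaussian, and where the permutation-invariance coupling between $y^{head}$, $y^{res}$, and $y^{gauss}$ does not let us treat them independently. I expect this to require unpacking the HIKP analysis enough to verify that only the $\ell_2$ mass of the non-head part matters, and controlling that mass uniformly over the randomness of $y$ via the permutation invariance.
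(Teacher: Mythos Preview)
Your overall architecture matches the paper's, and the bad-update analysis and most of the false-negative analysis are on target. There is, however, a genuine gap in your false-positive argument for the case $\abs{\supp(y^{head})}\geq 2$.

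You invoke RIP of order $O(1)$ and claim that ``any $1$-sparse fit must incur residual at least $(1-\delta)L^2\abs{T}/2$ just from the two head entries.'' But the lemma statement places no upper bound on $r:=\abs{\supp(y^{head})}$; in the surrounding algorithm $r$ can be $\Theta(\log n)$ with non-negligible probability, and in principle the lemma must cover any $r\leq\sqrt{n}$. RIP of order $O(1)$ lower-bounds $\norm{2}{Az}$ only for $O(1)$-sparse $z$, whereas $y^{head}-ae_i$ is $(r{+}1)$-sparse. For adversarially placed $y^{head}$ with large $r$ it is entirely possible that $Ay^{head}$ is close to some $1$-sparse image $a\cdot Ae_i$---the kernel of $A$ is huge---so your argument does not rule out a false positive once $r$ exceeds the RIP order. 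Restricting attention to ``two head entries'' does not help: the remaining $r-2$ entries can cancel against them under $A$.

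The paper closes this gap with a separate lemma (the robust analogue of Lemma~\ref{l:test}) that uses the \emph{permutation invariance} of $y$ in an essential way: write $y^{head}=u+v$ where $v$ carries $c-1$ randomly placed entries, and observe that the residual test can pass only if $\supp(v)$ nearly coincides with the support of a single fixed $c$-sparse vector $w$ determined by $u$; a counting bound over the random placement then yields the $1/n^c$ probability. So the permutation invariance you flagged as a worry for \textsc{HIKPLocateSignal} is not really needed there (only $\norm{2}{y_{-i}}$ matters, as you suspected)---it is needed precisely here, in the false-positive analysis.

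Two smaller points. First, you omit the false-positive sub-case $\supp(y^{head})=\{i^*\}$ with the located index $i'\neq i^*$; this one \emph{is} dispatched by RIP of order $2$, as in the paper. Second, in the false-negative analysis, Markov on $\norm{2}{A(y^{res}+y^{gauss})}^2$ only yields failure probability $O(\eps^2)$, a constant, not $1/\log^c n$ (recall $\eps$ is a fixed small constant here). You need the stronger observation that $Ay^{gauss}$ is itself Gaussian by rotation invariance, so $\norm{2}{Ay^{gauss}}^2$ concentrates at the $e^{-\Omega(\abs{T})}=n^{-c}$ level; the residual part is then bounded deterministically via $\norm{2}{Ay^{res}}\leq\sqrt{\abs{T}}\norm{1}{y^{res}}$.
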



\begin{proof}
  Let $\check{y}$ denote the 1-dimensional inverse DFT of $y$.  Note that
  \[
  \wh{u}^{(\tau)}_j = \check{y}_\tau
  \]
  by definition.  Therefore, the goal of \textsc{RobustEstimateCol} is
  simply to perform reliable $1$-sparse recovery with $O(\log n)$
  queries. Fortunately,~\cite{HIKP2} solved basically the same
  problem, although with more false positives than we want here.

  We choose $T'$ according to the \textsc{LocateInner} procedure
  from \cite{HIKP2}; the set $T$ is chosen uniformly at random from
  $[\sqrt{n}]$.  We have that
  \[
  \wh{u}^{(\tau)}_j = \sum_{i \in [\sqrt{n}]} y_i \omega^{-\tau i}.
  \]
  This is exactly what the procedure~\textsc{HashToBins}
  of~\cite{HIKP2} approximates up to a small error term.  Therefore,
  the same analysis goes through (Lemma~4.5 of~\cite{HIKP2}) to get
  that \textsc{HIKPLocateSignal} returns $i$ with $1 - 1/\log^c n$
  probability if $\abs{y_i} \geq \norm{2}{y_{-i}}$, where we define
  $y_{-i} := y_{[\sqrt{n}] \setminus \{i\}}$.

  Define $A \in \C^{\abs{T}\times \sqrt{n}}$ to be the rows of the
  inverse Fourier matrix indexed by $T$, normalized so $\abs{A_{i,j}} = 1$.
  Then $\wh{u}^{(\tau)}_j = (Ay)_{\tau}$.

  First, we prove
  \begin{align}\label{eq:ynorm}
    \norm{2}{\yres + \ygauss} = O(\eps L)
  \end{align}
  with all but $n^{-c}$ probability.  We have
  that $\E[\norm{2}{\ygauss}^2] = 2\eps^2 L^2$, so $\norm{2}{\ygauss}
  \leq 3\eps L$ with all but $e^{-\Omega(\sqrt{n})}  <1/n^c$
  probability by concentration of chi-square variables.  We also have
  that $\norm{2}{\yres} \leq \norm{1}{\yres} \leq \eps L$.

  Next, we show
  \begin{align}\label{eq:Aynorm}
    \norm{2}{A(\yres + \ygauss)} = O(\eps L\sqrt{|T|})
  \end{align}
  with all but $n^{-c}$ probability.  We have that $A\ygauss$ is drawn
  from $N_{\C}(0, \eps^2L^2I_{\abs{T}})$ by the rotation invariance of
  Gaussians, so
  \begin{align}\label{eq:Aygauss}
  \norm{2}{A\ygauss} \leq 3\eps L \sqrt{\abs{T}}
  \end{align}
   with all but
  $e^{-\Omega(\abs{T})} < n^{-c}$ probability.  Furthermore, $A$ has
  entries of magnitude $1$ so $\norm{2}{A\yres} \leq
  \norm{1}{\yres}\sqrt{|T|} = \eps L \sqrt{|T|}$.

  Consider the case where $\supp(\yhead) = \{i\}$.  From
  Equation~\eqref{eq:ynorm} we have
  \begin{align}
  \label{eq:ygauss}
 \norm{2}{y_{-i}}^2 \le \norm{2}{\ygauss + \yres}^2 \leq O(\eps^2L^2) < L^2
  \leq \norm{2}{y_i}^2
  \end{align}
  so $i$ is located with $1 - 1/\log^c n$ probability by
  \textsc{HIKPLocateSignal}.

  Next, we note that for any $i$, as a distribution over $\tau\in [\sqrt{n}]$,
  \[
  \E_\tau[ \abs{\wh{u}^{(\tau)}_j - y_i\omega^{-\tau i}}^2] = \norm{2}{y_{-i}}^2
  \]
  and so (analogously to Lemma~4.6 of~\cite{HIKP2}, and for any $i$),
  since $a = \median_{\tau \in T} \wh{u}^{(\tau)}_j\omega^{\tau i}$ we
  have
  \begin{equation}
  \label{eq:ay}
  \sabs{a - y_i}^2 \leq 5\norm{2}{y_{-i}}^2
  \end{equation}
  with probability $1 - e^{-\Omega(\abs{T})} = 1 - 1/n^c$ for some
  constant $c$.  Hence if $\{i\} = \supp(\yhead)$, we have $\sabs{a - y_i}^2
  \leq O(\eps^2L^2)$ and therefore $\abs{a} > L/2$, passing the first check
  on whether $i$ is valid.

  For the other check, we have that with $1 - 1/n^c$ probability
  \begin{align*}
   (\sum_{\tau \in T} \abs{\wh{u}^{(\tau)}_j - a\omega^{-\tau
        i}}^2)^{1/2}
    &= \norm{2}{A(y - ae_i)}\\
    &\leq \norm{2}{A(\ygauss + \yres + (\yhead_i - a)e_i)}\\
    &\leq \norm{2}{A(\ygauss+\yres)} + \abs{\yhead_i - a}\sqrt{\abs{T}}\\
    &\leq \norm{2}{A(\ygauss+\yres)} + (\abs{\yres_i+\ygauss_i} + \abs{y_i - a})\sqrt{\abs{T}}\\
   &\leq O(\eps L\sqrt{\abs{T}}).
  \end{align*}
  where the last step uses Equation \ref{eq:Aynorm}.
  This gives
  \[
  \sum_{\tau \in T} \abs{\wh{u}^{(\tau)}_j - a\omega^{-\tau i}}^2 = O(\eps^2 L^2 \abs{T}) < L^2 \abs{T} / 10
  \]
  so the true coordinate $i$ passes both checks.  Hence the
  probability of a false negative is $1/\log^c n$ as desired.

  Now we bound the probability of a false positive.  First consider
  what happens to any other coordinate $i' \neq i$ when
  $\abs{\supp(\yhead)} = \{i\}$.  We get some estimate $a'$ of its
  value.  Since $A/\sqrt{\abs{T}}$ satisfies an RIP of order 2 and
  constant $1/4$, by the triangle inequality and
  Equation~\ref{eq:Aynorm} we have that with $1 - n^{-c}$ probability,
  \begin{align*}
     \norm{2}{A(y - a'e_{i'})} 
   &\geq   \norm{2}{A(\yhead_ie_i - a'e_{i'})} - \norm{2}{A(\ygauss + \yres)}\\
   &\geq    \yhead_i \sqrt{\abs{T}}\cdot (3/4) - O(\eps L \sqrt{\abs{T}})\\
   &>   L\sqrt{|T|}/2.
  \end{align*}
  Hence the second condition will be
  violated, and $i'$ will not pass.  Thus if $\abs{\supp(\yhead)} = 1$,
  the probability of a false positive is at most $n^{-c}$.

  Next, consider what happens to the result $i$ of
  \textsc{HIKPLocateSignal} when $\abs{\supp(\yhead)} = 0$.  From
  Equation~\eqref{eq:ynorm} and Equation~\eqref{eq:Aynorm} we have
  that with $1 - n^{-c}$ probability:
  \[
  \sabs{a - y_i}^2 \leq 5 \norm{2}{y_{-i}}^2 \leq O(\eps^2 L^2).
  \]
  Therefore, from Equation~\ref{eq:ynorm},
  \[
  \abs{a} \leq \sabs{y_i} + \sabs{a - y_i} \leq  \norm{2}{\yres + \ygauss} +  \sabs{a - y_i}  =O(\eps L) < L/2
  \]
  so the first check is not passed and $i$ is not recovered.

  Now suppose $\abs{\supp(y_{head})} > 1$.
  Lemma~\ref{l:testrobust} says that with $1 - n^{-c}$ probability over the
  permutation, no $(i, a)$ satisfies
  \[
  \norm{2}{A(\yhead - ae_{i})}^2 < L^2 \abs{T}/5.
  \]
  But then, from Equation~\ref{eq:Aygauss}
  \begin{align*}
   \norm{2}{A(y - ae_{i})} &\geq \norm{2}{A(\yhead - ae_{i})} - \norm{2}{A\ygauss} \\
  &> L \sqrt{\abs{T}/5} - O(\eps L \sqrt{\abs{T}}) \\
  &> L \sqrt{\abs{T}/10}
  \end{align*}
  so no $i$ will pass the second check.  Thus the probability of a
  false positive is $1/n^c$.

  Finally, consider the probability of a bad update.  We have
  that
  \[
  b = \mean_{\tau \in T} (Ay)_\tau \omega^{\tau i} = \yhead_i + \mean_{\tau \in T} (A\yres + A\ygauss)_\tau \omega^{\tau i}
  \]
  and so
  \[
  \abs{b - \yhead_i} \leq \abs{\mean_{\tau \in T} (A\yres)_\tau \omega^{\tau
      i}} + \abs{\mean_{\tau \in T} (A\ygauss)_\tau \omega^{\tau i}}.
  \]
  We have that \[ \abs{\mean_{\tau \in T} (A\yres)_\tau \omega^{\tau i}} \leq
  \max_{\tau\in T} \abs{(A\yres)_{\tau}} \leq \norm{1}{\yres}\].

  We know that $A\ygauss$ is $N_{\C}(0, \eps^2 L^2 I_{\abs{T}})$.
  Hence its mean is a complex Gaussian with standard deviation $\eps L
  / \sqrt{\abs{T}}$ in both the real and imaginary axes.  This means
  the probability that
  \[
  \abs{b - \yhead_i} > \norm{1}{\yres} + t \eps L / \sqrt{\abs{T}}
  \]
  is at most $e^{-\Omega(t^2)}$.  Setting $t = \sqrt{\log \log^c n}$ gives a
  $1/\log^c n$ chance of a bad update, for sufficiently large $\abs{T}
  = O(\log n)$.
\end{proof}

The following is the robust analog of Lemma~\ref{l:test}.
\begin{lemma}\label{l:testrobust}
  Let $y \in \C^m$ be drawn from a permutation invariant distribution
  with $r \geq 2$ nonzero values.  Suppose that all the nonzero
  entries of $y$ have absolute value at least $L$. Choose $T \subset
  [m]$ uniformly at random with $t := \abs{T} = O(c^3 \log m)$

  Then, the probability that there exists a $y'$ with $\norm{0}{y'}
  \leq 1$ and
  \[
  \norm{2}{(\check{y}-\check{y}')_T}^2 <  \eps L^2 t/n
  \]
  is at most $c^3 (\frac{c}{m-r})^{c-2}$ whenever $\eps < 1/8$.
\end{lemma}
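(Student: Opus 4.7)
The plan is to follow the template of Lemma~\ref{l:test}, replacing Vandermonde injectivity by the Restricted Isometry Property (RIP) of a random row-submatrix of the Fourier matrix. Let $A$ denote the $t \times m$ matrix with $A_{\tau,i} = \omega^{-\tau i}$ (appropriately normalized so that $(Ay)_\tau = \check{y}_\tau$). Since $T$ is drawn uniformly at random with $|T| = \Theta(c^3 \log m)$ for a sufficiently large implied constant, the discussion in Section~\ref{sec:agenprelim} gives that, except with probability at most $m^{-c}$, the matrix $A$ satisfies RIP of order $2c$ with constant $1/2$; condition on this event. If $r \leq c-1$, then $y - y'$ is itself $c$-sparse and (since $r \geq 2$) retains at least one coordinate of magnitude $\geq L$, so the RIP lower bound already gives $\norm{2}{A(y-y')}^2$ exceeding the threshold $\eps L^2 t/n$, and the claimed event has probability $0$. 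Henceforth assume $r \geq c$, and decompose $y = u + v$ as in the proof of Lemma~\ref{l:test}: assign $r - (c-1)$ values at random positions to $u$, then drop the remaining $c-1$ values into $v$ at positions uniform in $[m] \setminus \supp(u)$. Condition on $u$, so $\supp(v)$ is uniform over the $(c-1)$-subsets of an $N := m - r + c - 1$ element set.

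The core step is to show that the set of $v$'s for which the bad event can occur is combinatorially very restricted. Suppose $v_1$ and $v_2$ are two such placements, each certified by a $1$-sparse witness $y_1'$, $y_2'$ satisfying $\norm{2}{A(u + v_j - y_j')}^2 < \eps L^2 t/n$. By the triangle inequality, $\norm{2}{A((v_1 - v_2) - (y_1' - y_2'))}^2 \leq 4\eps L^2 t/n$, and since the argument is $2c$-sparse, the RIP lower bound yields $\norm{2}{(v_1 - v_2) - (y_1' - y_2')}^2 \leq 8\eps L^2 < L^2$ using $\eps < 1/8$. Now consider any coordinate $i \in \supp(v_1) \symdiff \supp(v_2)$ that lies outside $\supp(y_1') \cup \supp(y_2')$: exactly one of $v_{1,i}, v_{2,i}$ is zero and the other has magnitude $\geq L$, so this single coordinate contributes $\geq L^2$ to the norm---a contradiction. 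Therefore $\supp(v_1) \symdiff \supp(v_2) \subseteq \supp(y_1') \cup \supp(y_2')$, which has size at most $2$. In particular, fixing any one bad placement as a base $S^*$, every bad support lies within symmetric distance $2$ of $S^*$.

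The count is now immediate: the number of $(c-1)$-subsets of an $N$-element ground set within symmetric distance $2$ of a fixed $S^*$ is at most $1 + (c-1)(N - (c-1)) \leq cN$, and any additional restriction imposed by the value assignment only shrinks the bad set further. Hence the probability over the random $v$ is at most $cN / \binom{N}{c-1} \leq c(c-1)^{c-1}/N^{c-2} \leq c^3\bigl(c/(m-r)\bigr)^{c-2}$, using $N \geq m-r$ and absorbing constants. The main subtlety, and the step I expect to take some care, is the value-alignment argument: at overlapping positions $i \in \supp(v_1) \cap \supp(v_2)$ one \emph{cannot} conclude $v_{1,i} = v_{2,i}$, since the nonzero values of $y$ need not be well separated; the fix is to draw the contradiction only from positions in the symmetric difference, where one of the two entries is forced to be exactly zero and hence the other's magnitude $\geq L$ is exposed in full.
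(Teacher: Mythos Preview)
Your argument is correct and follows the same overall template as the paper (RIP for random Fourier rows, then the $u+v$ decomposition from Lemma~\ref{l:test}), but the core step differs. The paper fixes a single $c$-sparse ``anchor'' $w$ with $\norm{2}{A(u+w)}^2 < \eps L^2$ (if one exists) and shows, via the inequality $\norm{2}{A(u+v-y')}^2 \geq \tfrac14\norm{2}{v-y'-w}^2 - \eps L^2$ together with RIP on the $2c$-sparse vector $v-y'-w$, that every bad $v$ satisfies $|\supp(v)\setminus\supp(w)|\leq 1$; the count is then over $(c-1)$-subsets with at most one element outside a fixed $c$-set. You instead compare two bad placements $v_1,v_2$ directly and deduce $|\supp(v_1)\symdiff\supp(v_2)|\leq 2$, anchoring to an actual bad placement rather than to an auxiliary $w$. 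Both routes invoke RIP of the same order $2c$ and land on the same bound; yours is arguably a bit cleaner in that it never introduces the extraneous $w$, while the paper's version avoids juggling two witnesses $y_1',y_2'$ at once.

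One small remark on your closing paragraph: the final probability bound $cN/\binom{N}{c-1}$ is already justified by what you proved, even though the bad event depends on the value assignment and not only on $\supp(v)$. You have shown that the bad event is \emph{contained} in $\{\supp(v)\in\mathcal{S}\}$ for some $\mathcal{S}$ of size at most $cN$, and $\supp(v)$ is a uniformly random $(c-1)$-subset of the $N$-element ground set; the inclusion immediately gives the bound. So no separate ``value-alignment'' argument is needed---your symmetric-difference step only uses coordinates where one of $v_1,v_2$ vanishes, and that is exactly what makes it work.
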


\begin{proof}
  Let $A = \sqrt{1/t} F_{T \times *}$ be $\sqrt{1/t}$ times the
  submatrix of the Fourier matrix with rows from $T$, so
  \[
  \norm{2}{(\check{y}-\check{y}')_T}^2 = \norm{2}{A(y-y')}^2 t/n.
  \]
  By a coherence bound (see Section~\ref{sec:agenprelim}), with $1 -
  1/m^c$ probability $A$ satisfies the RIP of order $2c$ with constant
  $0.5$. We would like to bound
  \[
  P := \Pr[\exists y' : \norm{2}{A(y - y')}^2 < \eps L^2 \andop \norm{0}{y'} \leq 1]
  \]

  If $r \leq c-1$, then $y - y'$ is $c$-sparse and
  \begin{align*}
    \norm{2}{A(y - y')}^2 &\geq \norm{2}{y-y'}^2/2\\
    &\geq (r-1)L^2 /2\\
    &> \eps L^2
  \end{align*}
  as long as $\eps < 1/2$, giving $P = 0$.  Henceforth, we can assume $r
  \geq c$. When drawing $y$, first place $r-(c-1)$ coordinates into $u$ then
  place the other $c-1$ values into $v$, so that $y = u + v$.
  Condition on $u$, so $v$ is a permutation distribution over
  $m-r+c-1$ coordinates.  We would like to bound
  \[
  P = \Pr_v[\exists y' : \norm{2}{A(u + v - y')}^2 <
  \eps L^2  \andop \norm{0}{y'} \leq 1].
  \]

  Let $w$ be any $c$-sparse vector such that $\norm{2}{A(u + w)}^2 <
  \eps L^2 $ (and note that if no such $w$ exists, then since $v-y'$
  is $c$-sparse, $P = 0$).  Then recalling that for any norm
  $\norm{}{\cdot}$, $\norm{}{a}^2 \leq 2\norm{}{b}^2+2\norm{}{a+b}^2$
  and hence $\norm{}{a+b}^2 \geq \norm{}{a}^2/2 - \norm{}{b}^2$,
  \begin{align*}
    \norm{2}{A(u + v - y')}^2 &\geq \norm{2}{A(v - y' - w)}^2/2 - \norm{2}{A(u + w)}^2\\
    &\geq  \norm{2}{v - y' + w}^2/4   -  \eps L^2.
  \end{align*}
  Hence
  \[
  P \leq \Pr_v[\exists y' : \norm{2}{v - y' + w}^2 <
  8\eps L^2  \andop \norm{0}{y'} \leq 1].
  \]
  Furthermore, we know that $\norm{2}{v - y' + w}^2 \geq
  L^2(\abs{\supp(v) \setminus \supp(w)} - 1)$.  Thus if $\eps < 1/8$,
  \begin{align*}
    P &\leq \Pr_v[\abs{\supp(v) \setminus \supp(w)} \leq 1]\\
    &\leq \frac{c + (m-r+c-1)c(c-1)/2}{\binom{m-r+c-1}{c-1}}\\ 
    &<  c^3 (\frac{c}{m-r})^{c-2}
  \end{align*}
  as desired.
\end{proof}

\subsection{Overall Recovery}

Recall that we are considering the recovery of a signal $\wh{x} =
\wh{x^*} + \wh{w} \in \C^{\sqrt{n}\times\sqrt{n}}$, where $\wh{x^*}$
is drawn from the Bernoulli model with expected $k = a\sqrt{n}$
nonzeros for a sufficiently small constant $a$, and $\wh{w} \sim
N_\C(0, \sigma^2 I_n)$ with $\sigma = \eps L \sqrt{k/n} = \Theta(\eps
L / n^{1/4})$ for sufficiently small $\eps$.

It will be useful to consider a bipartite graph
representation $G$ of $\wh{x^*}$.  We construct a bipartite graph with
$\sqrt{n}$ nodes on each side, where the left side corresponds to rows
and the right side corresponds to columns.  For each $(i, j) \in
\supp(\wh{x^*})$, we place an edge between left node $i$ and right
node $j$ of weight $\wh{x^*}_{(i, j)}$.

Our algorithm is a ``peeling'' procedure on this graph.  It iterates
over the vertices, and can with a ``good probability'' recover an edge
if it is the only incident edge on a vertex.  Once the algorithm
recovers an edge, it can remove it from the graph.  The algorithm will
look at the column vertices, then the row vertices, then repeat; these
are referred to as \emph{stages}.  Supposing that the algorithm
succeeds at recovery on each vertex, this gives a canonical order to
the removal of edges.  Call this the \emph{ideal} ordering.

In the ideal ordering, an edge $e$ is removed based on one of its
incident vertices $v$.  This happens after all other edges reachable
from $v$ without passing through $e$ are removed.  Define the
\emph{rank} of $v$ to be the number of such reachable edges, and
$\text{rank}(e) = \text{rank}(v) + 1$ (with $\text{rank}(v)$ undefined
if $v$ is not used for recovery of any edge).

\begin{lemma}\label{l:ranks}
  Let $c, \alpha$ be arbitrary constants, and $a$ a sufficiently small
  constant depending on $c, \alpha$.  Then with $1 - \alpha$
  probability every component in $G$ is a tree and at most $k/\log^c
  n$ edges have rank at least $\log \log n$.
\end{lemma}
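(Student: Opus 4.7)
Think of $G$ as a bipartite Erd\H{o}s--R\'enyi graph on $\sqrt{n}+\sqrt{n}$ vertices in which each of the $n$ potential edges appears independently with probability $p=k/n=a/\sqrt{n}$, so every vertex has expected degree $a<1$. Because $a$ is below the phase transition, $G$ is subcritical and its components are small trees with high probability. I will prove the two claims separately and then combine them via a union bound.

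\textbf{Acyclicity.} A bipartite cycle of length $2\ell$ ($\ell\ge 2$) uses $\ell$ row and $\ell$ column vertices arranged cyclically, so the expected number of cycles of length $2\ell$ in $G$ is at most
\[
\binom{\sqrt n}{\ell}^{\!2}\,\frac{\ell!\,(\ell-1)!}{2}\,p^{2\ell}\;\le\;\frac{a^{2\ell}}{2\ell}.
\]
Summing over $\ell\ge 2$ yields $O(a^4/(1-a^2))$, which is below $\alpha/2$ for $a$ a sufficiently small constant depending on $\alpha$. Hence $G$ is a forest with probability at least $1-\alpha/2$.

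\textbf{Rank bound.} The key observation is that when an edge $e=(u,v)$ is removed via vertex $v$ in the ideal ordering, the edges reachable from $v$ without crossing $e$ are exactly those in the subtree on $v$'s side of $e$, giving $\mathrm{rank}(e)\le |V(C(e))|$ where $C(e)$ is the component of $e$. It therefore suffices to bound the total number of vertices lying in components of size at least $R:=\log\log n$. If $|C(v)|\ge R$ then $C(v)$ contains a subtree of size exactly $R$ through $v$; by Cayley's formula the number of labeled trees on $R$ vertices of $G$ that contain $v$ is at most $\binom{2\sqrt n-1}{R-1}R^{R-2}$, and each such tree is a subgraph of $G$ with probability $p^{R-1}$. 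So
\[
\Pr[\,|C(v)|\ge R\,]\;\le\;\binom{2\sqrt n-1}{R-1}R^{R-2}\,p^{R-1}\;\le\;\frac{(2a)^{R-1}R^{R-2}}{(R-1)!}\;=\;O\!\bigl((2ae)^R/R\bigr),
\]
via Stirling. Choosing $a$ small enough that $2ae\le e^{-(c+2)}$, this probability is at most $1/\log^{c+2}n$; summing over the $2\sqrt n$ vertices, the expected number of vertices in components of size $\ge R$ is $O(\sqrt n/\log^{c+2}n)=O(k/\log^{c+2}n)$. Markov's inequality then implies that this count---and hence the number of edges of rank at least $\log\log n$---is at most $k/\log^c n$ with probability at least $1-\alpha/2$. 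Combining with the acyclicity event by a union bound completes the proof.

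\textbf{Main obstacle.} The delicate point is calibrating $a$ so that the subcritical component-size tail $(2ae)^{\log\log n}$ decays polynomially in $1/\log n$ with an exponent exceeding $c$; this forces $a$ to depend on both $\alpha$ and $c$, but is always achievable by a sufficiently small constant $a$. Everything else is a standard first-moment computation.
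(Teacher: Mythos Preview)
Your proof is correct and follows essentially the same two-part strategy as the paper: a first-moment cycle count for acyclicity, then a subcritical component-size tail bound combined with Markov's inequality for the rank claim. The only substantive difference is in how the tail bound is obtained: the paper simply asserts that exploring from a vertex is a subcritical branching process and hence $\Pr[|C(v)|\ge\log\log n]\le 1/\log^c n$ for small enough $a$, whereas you make this explicit via a Cayley-formula first-moment count of containing subtrees---a slightly more self-contained route to the same estimate. The paper then sums over edges directly, while you sum over vertices and use that on the acyclicity event the number of high-rank edges is at most the number of vertices in large components; both organizations are fine.
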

\begin{proof}
  Each edge of $G$ appears independently with probability $k/n =
  a/\sqrt{n}$.  There are at most $\sqrt{n}^t$ cycles of length $t$.
  Hence the probability that any cycle of length $t$ exists is at most
  $a^t$, so the chance any cycle exists is less than $a^2/(1-a^2) <
  \alpha/2$ for sufficiently small $a$.

  Each vertex has expected degree $a < 1$. Exploring the component for
  any vertex $v$ is then a subcritical branching process, so the
  probability that $v$'s component has size at least $\log \log n$ is
  $1/\log^c n$ for sufficiently small $a$.  Then for each edge, we
  know that removing it causes each of its two incident vertices to
  have component size less than $\log \log n - 1$ with $1 - 1/\log^c
  n$ probability.  Since the rank is one more than the size of one of
  these components, the rank is less than $\log \log n$ with $1 -
  2/\log^c n$ probability.

  Therefore, the expected number of edges with rank at least $\log \log n$
  is $2k/\log^c n$.  Hence with probability $1 - \alpha/2$ there are
  at most $(1/\alpha)4k/\log^c n$ such edges; adjusting $c$ gives the
  result.
\end{proof}
\begin{lemma}
\label{l:robust}
  Let \textsc{Robust2DSFFT'} be a modified
   \textsc{Robust2DSFFT} that avoids false negatives or bad
  updates: whenever a false negative or bad update would occur, an
  oracle corrects the algorithm.  With large constant probability,
  \textsc{Robust2DSFFT'} recovers $\wh{z}$ such that there exists
  a $(k/\log^c n)$-sparse $\wh{z}'$ satisfying
  \[
  \norm{2}{\wh{z}-\wh{x} - \wh{z}'}^2 \leq 6\sigma^2 n.
  \]
  Furthermore, only $O(k/\log^c n)$ false positives or bad updates are
  caught by the oracle.
\end{lemma}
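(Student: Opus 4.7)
The plan is to execute \textsc{Robust2DSFFT'} in the ``ideal ordering'' of edge removal on the bipartite graph $G$, and to track residue accumulation by induction on rank. First I would apply Lemma~\ref{l:ranks} to conclude that, with probability $1-\alpha$, $G$ is a forest and at most $k/\log^c n$ edges have rank $\geq \log\log n$. These high-rank edges are dumped into $\wh{z}'$, contributing $O(k/\log^c n)$ to its sparsity. The remaining edges are processed in increasing rank order, with the oracle intervening on false negatives and bad updates so that the algorithm always follows this ideal ordering.

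Second, I would verify that Lemma~\ref{L:FAILUREPROB} applies at each recovery step. When an edge $e$ of rank $r<\log\log n$ is to be recovered, the corresponding column/row $y$ decomposes as $\yhead + \yres + \ygauss$, where $\yhead$ is the unique remaining coefficient in the column/row, $\ygauss$ is the restriction of $\wh{w}$ (so $\sigma = \eps L / n^{1/4}$ since $k=a\sqrt{n}$, matching the hypothesis), and $\yres$ accumulates the estimation errors from previously subtracted coefficients sharing that column/row. The inductive claim is that $\norm{1}{\yres} \leq r \cdot O(\sqrt{\log\log n/\log n}\,\eps L) = O(\sqrt{(\log\log n)^3/\log n}\,\eps L) = o(\eps L)$, so the hypothesis of Lemma~\ref{L:FAILUREPROB} holds (with a slightly adjusted constant) and recovery proceeds.

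Third, I would count bad events. By Lemma~\ref{L:FAILUREPROB}, for a suitably large $c'$, each edge suffers a false negative or bad update with probability $1/\log^{c'} n$, so the expected number of oracle interventions across the $O(k)$ edges is $O(k/\log^{c'} n)$; by Markov's inequality, with large constant probability this count is $O(k/\log^c n)$. False positives occur with probability $1/n^{c}$ per test, so a union bound over the $O(\sqrt n\log n)$ tests rules them out with high probability. I define $\wh{z}'$ to be supported on the high-rank edges together with the oracle-intervention locations; its sparsity is $O(k/\log^c n)$ as required.

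Finally, I would bound the error at the remaining coordinates. On any coordinate outside $\supp(\wh{z}')$, the error $(\wh{z}-\wh{x})$ is the Gaussian noise at that coordinate plus the ``good-update'' estimation error passed down from peeling. Squaring and summing, the Gaussian part contributes $\norm{2}{\wh{w}}^2 \leq (1+o(1))\cdot 2\sigma^2 n$ by chi-square concentration, while the per-edge good-update error of magnitude $O(\sqrt{\log\log n/\log n}\,\eps L)$ summed over $k$ edges gives $O(\eps^2 L^2 k \log\log n/\log n) = O(\eps^2 \sigma^2 n\log\log n/\log n)$ using $L = C\sigma\sqrt{n/k}$, which is $o(\sigma^2 n)$. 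For $\eps$ small and $n$ large this totals at most $6\sigma^2 n$. The main obstacle is the residue-accumulation induction: one must argue that the $\log\log n$ rounds of peeling do not amplify noise catastrophically, which is why we need both the $\sqrt{\log\log n/\log n}$ factor in the bad-update bound of Lemma~\ref{L:FAILUREPROB} and the cutoff at rank $\log\log n$ provided by Lemma~\ref{l:ranks}.
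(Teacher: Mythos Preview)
Your approach matches the paper's: use Lemma~\ref{l:ranks} to control the graph structure, induct on rank to bound residues, apply Lemma~\ref{L:FAILUREPROB} at each step, and count oracle interventions by linearity of expectation. There is, however, a genuine gap in verifying the hypotheses of Lemma~\ref{L:FAILUREPROB}. That lemma requires $y$ to be drawn from a \emph{permutation invariant} distribution, and this is not automatic: by the time you attempt recovery at a vertex $v$, the algorithm has already made decisions that depend on the earlier columns and rows, which could in principle bias the conditional distribution of $y$. The paper handles this by observing that one can first fix the \emph{topology} of $G$ (which determines, up to within-stage permutation, the set of edges peeled at each stage in the ideal ordering) and then draw a uniformly random permutation of the rows and columns; conditioned on the topology, the column or row examined at any given stage is then permutation invariant. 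You need to make this two-step sampling explicit, or Lemma~\ref{L:FAILUREPROB} cannot be invoked.

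Two minor points. First, there is no need to include the oracle-intervention locations in $\supp(\wh{z}')$: the oracle \emph{corrects} those recoveries, so their residues obey the same rank-based bound as any other successfully recovered edge. The paper's $\wh{z}'$ is supported only on the unrecovered high-rank edges. Second, your ``per-edge good-update error of magnitude $O(\sqrt{\log\log n/\log n}\,\eps L)$'' understates the accumulated residue, which by the induction is $\text{rank}(e)\cdot\sqrt{\log\log n/\log n}\,\eps L$ with $\text{rank}(e)$ ranging up to $\log\log n$. This costs an extra $(\log\log n)^2$ in the squared-error sum; the paper simply uses the cruder per-edge bound $\eps L$ to get $\eps^2 L^2 k = \eps^2 C^2 \sigma^2 n < \sigma^2 n$. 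Either way the total is $o(\sigma^2 n)$, so the conclusion is unaffected.
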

\begin{proof}
  One can choose the random $\wh{x^{*}}$ by first selecting the
  topology of the graph $G$, and then selecting the random ordering of
  the columns and rows of the matrix. Note that reordering the
  vertices only affects the ideal ordering by a permutation within
  each stage of recovery; the set of edges recovered at each stage in
  the ideal ordering depends only on the topology of $G$.  Suppose
  that the choice of the topology of the graph satisfies the thesis of
  Lemma~\ref{l:ranks} (which occurs with large constant
  probability). We will show that with large constant probability
  (over the space of random permutations of the rows and
  columns),\xxx{PI:A little convoluted, but hopefully understandable}
  \textsc{Robust2DSFFT'} follows the ideal ordering and the
  requirements of Lemma~\ref{L:FAILUREPROB} are satisfied at every
  stage.

  For a recovered edge $e$, we define the ``residue'' $\wh{x^*}_e -
  \wh{z}_e$.  We will show that if $e$ has rank $r$, then
  $\abs{\wh{x^*}_e - \wh{z}_e} \leq r \sqrt{\frac{\log \log n}{\log n}}\eps L$.

  During attempted recovery at any vertex $v$ during the ideal
  ordering (including attempts on vertices which do not have exactly
  one incident edge), let $y \in \C^{\sqrt{n}}$ be the associated
  column/row of $\wh{x}-\wh{z}$.  We split $y$ into three parts $y =
  \yhead + \yres + \ygauss$, where $\yhead$ contains the elements of
  $\wh{x^*}$ not in $\supp(\wh{z})$, $\yres$ contains
  $\wh{x^*}-\wh{z}$ over the support of $\wh{z}$, and $\ygauss$
  contains $\wh{w}$ (all restricted to the column/row corresponding to
  $v$).  Let $S = \supp(\yres)$ contain the set of edges incident on
  $v$ that have been recovered so far.  We have by the inductive
  hypothesis that $\norm{1}{\yres} \leq \sum_{e \in S} \text{rank}(e)
  \sqrt{\frac{\log \log n}{\log n}}\eps L$.  Since the algorithm
  verifies that $\sum_{e \in S} \text{rank}(e) \leq \log \log n$, we
  have
  \[
  \norm{1}{\yres} \leq \sqrt{\frac{\log^3 \log n}{\log n}}\eps L < \eps L.
  \]

  Furthermore, $y$ is permutation invariant: if we condition on the
  values and permute the rows and columns of the matrix, the algorithm
  will consider the permuted $y$ in the same stage of the algorithm.

  Therefore the conditions for Lemma~\ref{L:FAILUREPROB} hold.  This
  means that the chance of a false positive is $1/n^c$, so by a union
  bound this never occurs.  Because false negatives never occur by
  assumption, this means we continue following the ideal ordering.
  Because bad updates never occur, new residuals have magnitude at
  most
  \[
  \norm{1}{\yres} + \sqrt{\frac{\log \log n}{\log n}}\eps L.
  \]
  Because $\norm{1}{\yres}/\left(\sqrt{\frac{\log \log n}{\log n}}\eps L\right) \leq \sum_{e \in S} \text{rank}(e) = \text{rank}(v) = \text{rank}(e)-1$, each
  new residual has magnitude at most
  \begin{align}
  \label{eq:res}
  \text{rank}(e)\sqrt{\frac{\log \log n}{\log n}}\eps L \le \eps L.
  \end{align}
  as needed to complete the induction.

  Given that we follow the ideal ordering, we recover every edge of
  rank at most $\log \log n$.  Furthermore, the residue on every edge
  we recover is at most $\eps L$.  By Lemma~\ref{l:ranks} there are at
  most $k/\log^c n$ edges that we do not recover.
  From Equation~(\ref{eq:res}), the squared $\ell_2$
  norm of the residues is
   at most $\eps^2 L^2 k = \eps^2 C^2 \sigma^2 n/k \cdot k 
  < \sigma^2 n$ for $\eps$ small enough.  Since $\norm{2}{\wh{w}}^2 < 2\sigma^2 n$ with
  overwhelming probability, there exists a $\wh{z}'$ so that
  \[
  \norm{2}{\wh{z}-\wh{x} - \wh{z}'}^2 \leq 2\norm{2}{\wh{z}-\wh{x^*} -
    \wh{z}'}^2 + 2\norm{2}{w}^2 \leq 6\sigma^2 n.
  \]

  Finally, we need to bound the number of times the oracle catches
  false positives or bad updates.  The algorithm applies
  Lemma~\ref{L:FAILUREPROB} only $2\sqrt{n} + O(k) = O(k)$ times.
  Each time has a $1/\log^c n$ chance of a false positive or bad
  update.  Hence the expected number of false positives or bad updates
  is $O(k/\log^c n)$.
\end{proof}
\begin{lemma}\label{l:robust2}
  For any constant $\alpha > 0$, the algorithm \textsc{Robust2DSFFT}
  can with probability $1-\alpha$ recover $\wh{z}$ such that there
  exists a $(k/\log^{c-1} n)$-sparse $\wh{z}'$ satisfying
  \[
  \norm{2}{\wh{z}-\wh{x} - \wh{z}'}^2 \leq 6\sigma^2 n
  \]
  using $O(k \log n)$ samples and $O(k \log^2 n)$ time.
\end{lemma}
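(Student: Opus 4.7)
The plan is to reduce to Lemma~\ref{l:robust} by showing that removing the oracle costs us only an extra $\log\log n$ factor in the sparsity of the residual vector $\wh{z}'$. By Lemma~\ref{l:robust}, the oracle-assisted variant \textsc{Robust2DSFFT'} differs from the unassisted \textsc{Robust2DSFFT} only at the $N_{\mathrm{bad}}=O(k/\log^c n)$ rounds in which the oracle is invoked to suppress a false positive or a bad update.

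First I would condition on the high-probability event of Lemma~\ref{l:ranks} that the peeling graph $G$ is a forest whose components all have size less than $\log\log n$, so that every recovered edge has rank below $\log\log n$. For each of the $N_{\mathrm{bad}}$ uncorrected errors, I would trace the set of edges that may subsequently be affected in the ideal peeling order: because each connected component is a tree of size $O(\log\log n)$, an error at one vertex can corrupt at most $O(\log\log n)$ downstream recoveries. Hence the set $B$ of ``contaminated'' coordinates of $\wh{z}$ has size at most $N_{\mathrm{bad}}\cdot O(\log\log n) = O(k\log\log n/\log^c n) \le k/\log^{c-1}n$ for $n$ large enough.

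I would then define $\wh{z}'$ to agree with $\wh{z}-\wh{x}$ on $B$ and be zero elsewhere. On the complement of $B$, the inductive argument of Lemma~\ref{l:robust} goes through unchanged: within any clean subtree, all previously processed incident edges are themselves clean, so the residue bound $\norm{1}{\yres}\le\eps L$ feeding the hypotheses of Lemma~\ref{L:FAILUREPROB} still holds at every vertex, and the final $\ell_2$ error off $B$ is bounded by $\eps^2 L^2 k + \norm{2}{\wh{w}}^2 = O(\sigma^2 n)$, exactly as in Lemma~\ref{l:robust}. Constants can be adjusted so that the overall error on $B^c$ is at most $6\sigma^2 n$.

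For the resource bounds, note that $\abs{T\cup T'}=O(\log n)$, so \textsc{FoldToBins} is invoked $O(\log n)$ times in each of the two orientations, yielding $O(\sqrt{n}\log n) = O(k\log n)$ samples and $O(\sqrt{n}\log^2 n)$ time for the FFTs. The calls to \textsc{HIKPLocateSignal} cost $O(\log^2 n)$ each; each edge in $G$ is tested at most twice (once from each endpoint) during the peeling iterations, and the initial sweep contributes $O(\sqrt{n})$ further calls, for a total of $O(k)$ calls and $O(k\log^2 n)$ time overall. The main obstacle I anticipate is making the cascade-control argument rigorous: one must be careful that the set $B$ of contaminated coordinates is well-defined with respect to the random ordering used in the analysis of Lemma~\ref{l:robust}, and that even on $B^c$ the algorithm really does follow the ideal ordering despite the neighboring errors being left uncorrected.
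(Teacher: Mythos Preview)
Your cascade-control argument has two gaps.

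First, you misread Lemma~\ref{l:ranks}. It does not assert that every component of $G$ has size at most $\log\log n$; it says that every component is a tree and that at most $k/\log^c n$ \emph{edges} have rank at least $\log\log n$. Large components are permitted---they simply contribute to that exceptional edge count---so you cannot condition on all components being small.

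Second, and more fundamentally, the damage from an uncorrected error is not confined to the component of $G$ in which it originates. Once a false negative or bad update occurs at column $j$ on an edge $(i,j)$, the precondition $\norm{1}{\yres}<\eps L$ of Lemma~\ref{L:FAILUREPROB} may fail at row $i$ in the next stage. At that point the lemma gives no control over what happens at row $i$: in particular, a false positive can occur there, causing the algorithm to update some coordinate $j'$ that is not adjacent to $i$ in $G$ at all. The cascade can thus escape the original component, and ``tracing through the tree'' does not bound it.

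The paper bounds the cascade by a mechanism that makes no reference to the component structure of $G$. A single altered call to \textsc{RobustEstimateCol} changes the recovered edge from one candidate (or none) to another (or none), so it perturbs the inputs of at most two calls in the next stage; iterating, after $r$ stages at most $2^r$ calls are affected. The algorithm's own rank bookkeeping---the additive update to \text{Ranks} together with the test $\text{Ranks}\ge\log\log n$---forces any such chain to terminate within $\log\log n$ stages regardless of whether the involved edges lie in $G$. This yields at most $2^{\log\log n}=\log n$ affected coordinates per oracle intervention, and combining with the $O(k/\log^{c} n)$ interventions from Lemma~\ref{l:robust} gives the $k/\log^{c-1} n$ sparsity of $\wh{z}'$. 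Your claimed $O(\log\log n)$ per error would be a stronger bound, but the argument you give does not establish it.

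Your resource-bound paragraph is essentially correct.
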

\begin{proof}
  To do this, we will show that changing the effect of a single call
  to \textsc{RobustEstimateCol} can only affect $\log n$ positions in
  the output of \textsc{Robust2DSFFT}.  By Lemma~\ref{l:robust} we can, with large constant probability
  turn \textsc{Robust2DSFFT} into \textsc{Robust2DSFFT'}
  with only $O(k/\log^{c} n)$ changes to calls to
  \textsc{RobustEstimateCol}. This means the output of
  \textsc{Robust2DSFFT} and of \textsc{Robust2DSFFT'} only
  differ in $O(k/\log^{c-1} n)$ positions.  

  We view \textsc{RobustEstimateCol} as trying to estimate a vertex.
  Modifying it can change from recovering one edge (or none) to
  recovering a different edge (or none).  Thus, a change can only
  affect at most two calls to \textsc{RobustEstimateCol} in the next
  stage.  Hence in $r$ stages, at most $2^{r-1}$ calls may be
  affected, so at most $2^r$ edges may be recovered differently.

  Because we refuse to recover any edge with rank at least $\log \log
  n$, the algorithm has at most $\log \log n$ stages.  Hence at most $\log
  n$ edges may be recovered differently as a result of a single change
  to \textsc{RobustEstimateCol}.
\end{proof}

\begin{theorem}
\label{thm:general}
  Our overall algorithm can recover $\wh{x}'$ satisfying
  \[
  \norm{2}{\wh{x} - \wh{x}'}^2 \leq 12 \sigma^2 n + \norm{2}{\wh{x}}^2/n^c
  \]
  with probability $1-\alpha$ for any constants $c, \alpha > 0$ in $O(k
  \log n)$ samples and $O(k \log^2 n)$ time, where $k=a\sqrt{n}$ for some constant $a>0$.
\end{theorem}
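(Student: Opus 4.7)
The plan is to combine the ``average-case'' recovery of Lemma~\ref{l:robust2} with a worst-case robust sparse Fourier transform as a cleanup step. Specifically, the first phase is to run \textsc{Robust2DSFFT} on $x$ to obtain $\wh{z}_1$ and, by Lemma~\ref{l:robust2}, know that with probability $1 - \alpha/2$ there exists a $(k/\log^{c-1} n)$-sparse vector $\wh{z}'$ with
\[
\norm{2}{\wh{z}_1 - \wh{x} - \wh{z}'}^2 \le 6 \sigma^2 n,
\]
so $\wh{x} - \wh{z}_1$ is essentially $k'$-sparse (with $k' := k/\log^{c-1} n$) modulo noise whose squared $\ell_2$ norm is $O(\sigma^2 n)$.

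The second phase is to recover the residual. Let $r := x - z_1$, computed lazily (each needed sample of $r$ costs $O(\norm{0}{\wh{z}_1}) = O(k)$ time via the inverse DFT of the sparse $\wh{z}_1$, or amortized $O(1)$ if one precomputes). I would feed $r$ into the worst-case robust sparse recovery algorithm of~\cite{HIKP2} (or its two-dimensional adaptation outlined in the appendix), with target sparsity $k'$. That algorithm gives the $\ell_2/\ell_2$ guarantee: it outputs $\wh{z}_2$ with
\[
\norm{2}{\wh{r} - \wh{z}_2}^2 \le (1+\eps)\min_{k'\text{-sparse } y}\norm{2}{\wh{r} - y}^2 + \norm{2}{\wh{x}}^2/n^c
\]
with probability $1 - \alpha/2$, where the tail term comes from choosing its internal failure probability to be $n^{-c}$. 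Using $y = \wh{z}'$ as a feasible $k'$-sparse approximation, the minimum is bounded by $6\sigma^2 n$, so with $\eps < 1$ we obtain $\norm{2}{\wh{r} - \wh{z}_2}^2 \le 12\sigma^2 n + \norm{2}{\wh{x}}^2/n^c$. Setting $\wh{x}' := \wh{z}_1 + \wh{z}_2$ and unfolding $\wh{r} = \wh{x} - \wh{z}_1$ yields the claimed bound.

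For the complexity, phase one uses $O(k \log n)$ samples and $O(k \log^2 n)$ time by Lemma~\ref{l:robust2}. Phase two is run at sparsity $k' = k/\log^{c-1} n$: the two-dimensional HIKP-style algorithm uses $O(k' \log^{O(1)} n)$ samples and time, which by choosing $c$ large enough is dominated by $O(k \log n)$ and $O(k \log^2 n)$ respectively. Taking a union bound over the two phases gives total failure probability at most $\alpha$.

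The main obstacle I anticipate is making the plug-in of the worst-case algorithm on the residual rigorous: one must argue that samples of $r = x - z_1$ can actually be produced within the stated time and that the $\ell_2/\ell_2$ guarantee of~\cite{HIKP2} applies to the residual (which is an arbitrary vector, not one drawn from a Bernoulli+Gaussian model), so its correctness needs the \emph{worst-case} guarantee and not the average-case one. The factor $12$ (rather than $6$) in the final bound is precisely the slack needed to absorb the $(1+\eps)$ approximation factor of the cleanup step, and the additive $\norm{2}{\wh{x}}^2/n^c$ term absorbs the event that the worst-case algorithm fails outright, whose contribution to the expected squared error is at most $n^{-c}\norm{2}{\wh{x}}^2$.
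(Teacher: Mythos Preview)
Your overall strategy matches the paper's: run \textsc{Robust2DSFFT} to get $\wh{z}_1$, then clean up the residual with a worst-case $\ell_2/\ell_2$ sparse-Fourier routine at reduced sparsity $k' = k/\log^{c-1} n$, and output $\wh{z}_1+\wh{z}_2$.

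The one genuine gap is in how you form the input to phase two. You propose computing samples of $r = x - z_1$ ``lazily'', at cost $O(\norm{0}{\wh{z}_1}) = O(k)$ per sample. Phase two needs $\Theta(k'\log^{O(1)} n)$ samples, so this yields $\Theta(k\cdot k'\log^{O(1)} n) = \Theta(k^2/\log^{c-O(1)} n)$ time, which for $k=\Theta(\sqrt{n})$ is $\Theta(n/\log^{c-O(1)} n)$, far larger than the claimed $O(k\log^2 n)$. Your ``amortized $O(1)$ if one precomputes'' does not work as stated: precomputing $z_1$ everywhere costs $\Omega(n)$, and evaluating a $k$-sparse inverse DFT at $m$ arbitrary points is $\Theta(km)$ naively. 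The paper avoids this entirely by \emph{not} subtracting $z_1$ at the sample level. Instead it passes $\wh{z}_1$ as an explicit coefficient list to a modified HIKP routine (their Theorem~\ref{thm:2dhikp}), which subtracts the contribution of each recovered coefficient directly from the \emph{bins} after hashing. That costs $O(\abs{\supp(\wh{z}_1)})$ per hashing round, giving an additive $O(k\log(n/k)\log n)$ in total time and no extra samples of $x$ at all. With that fix, phase two runs in $O(k\log^2 n)$ time and $O(k\log^{4-c} n)$ samples, which for $c\ge 3$ is within budget.

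A minor point: your reading of the $\norm{2}{\wh{x}}^2/n^c$ term as ``expected error from the outright failure event'' is not how it arises in the paper. There it is a deterministic error term coming from the filter/precision slack in the HIKP guarantee; the failure probability is accounted for separately in the $1-\alpha$.
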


\begin{proof}
  By Lemma~\ref{l:robust2}, we can recover an $O(k)$-sparse $\wh{z}$
  such that there exists an $(k/\log^{c-1} n)$-sparse $\wh{z}'$ with
  \[
  \norm{2}{\wh{x} - \wh{z} - \wh{z}'}^2 \leq 6\sigma^2 n.
  \]
  with arbitrarily large constant probability for any constant $c$
  using $O(k \log^2 n)$ time and $O(k \log n)$ samples.  Then by
  Theorem~\ref{thm:2dhikp} in Appendix~\ref{app:inefficient_2D}, we
  can recover a $\wh{z}'$ in $O(k \log^2 n)$ time and $O(k
  \log^{4-c} n)$ samples satisfying
  \[
  \norm{2}{\wh{x} - \wh{z}- \wh{z}'}^2 \leq 12 \sigma^2 n +
  \norm{2}{\wh{x}}^2/n^c
  \]
  and hence $\wh{x}' := \wh{z}+\wh{z}'$ is a good reconstruction for $\wh{x}$.
\end{proof}

\bibliographystyle{alpha}
\bibliography{paper}

\appendix
\section{Sample lower bound for our distribution}\label{app:lower}

We will show that the lower bound on $\ell_2/\ell_2$ recovery
from~\cite{PW} applies to our setting with a simple reduction.  First, we state their bound:

\begin{lemma}[\cite{PW} section 4]\label{l:PW}
  For any $k < n/\log n$ and constant $\eps > 0$, there exists a
  distribution $D_k$ over $k$-sparse vectors in $\{0, 1, -1\}^n$ such
  that, for every distribution of matrices $A \in \R^{m \times n}$
  with $m = o(k \log (n/k))$ and recovery algorithms $\mathcal{A}$,
  \begin{align*}
    \Pr[\norm{2}{\mathcal{A}(A(x + w)) - x} < \sqrt{k}/5] < 1/2
  \end{align*}
  as a distribution over $x \sim D_k$ and $w \sim N(0, \sigma^2I_n)$
  with $\sigma^2 = \eps k/n$, as well as over $A$ and $\mathcal{A}$.
\end{lemma}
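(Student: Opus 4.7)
The natural plan is an information-theoretic argument. I would take $D_k$ to be the uniform distribution over vectors $x\in\{-1,0,1\}^n$ whose support is chosen uniformly from $\binom{[n]}{k}$ and whose signs on the support are independent and uniform, so $H(x)=\log\binom{n}{k}+k=\Theta(k\log(n/k))$. The proof then combines two ingredients: (i) a rounding argument showing that any $\ell_2$-recovery of accuracy $\sqrt{k}/5$ essentially identifies $x$, and (ii) a Gaussian-channel capacity bound showing each measurement transmits only $O(1)$ bits about $x$ when $\sigma^2=\epsilon k/n$.

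For ingredient (i), let $\tilde x_i$ be $\hat x_i$ rounded to the nearest element of $\{-1,0,1\}$. Any coordinate with $\tilde x_i\neq x_i$ contributes at least $1/4$ to $\norm{2}{\hat x-x}^2$, so $\norm{2}{\hat x-x}<\sqrt{k}/5$ forces the Hamming distance between $\tilde x$ and $x$ to be at most $4k/25$. A packing/Fano argument on Hamming balls of radius $4k/25$ inside the support of $D_k$ then implies that any $\tilde X$ which lies within this radius of $X\sim D_k$ with probability $\ge 1/2$ must satisfy $I(X;\tilde X)=\Omega(k\log(n/k))$.

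For ingredient (ii), I would assume without loss of generality that each row $A_i$ of $A$ has unit $\ell_2$ norm (the algorithm can always rescale its measurements). The $i$th scalar measurement $y_i=\langle A_i,x\rangle+w_i$ is then a Gaussian channel whose input has variance $\E[\langle A_i,x\rangle^2]\le \norm{2}{A_i}^2\cdot k/n = k/n$, and whose noise variance is $\sigma^2=\epsilon k/n$. The scalar-Gaussian capacity bound gives $I(x;y_i\mid A,y_{<i})\le \tfrac12\log(1+1/\epsilon)=O(\log(1/\epsilon))$; summing over the $m$ measurements and applying the data-processing inequality yields $I(x;\mathcal{A}(A(x+w))\mid A)=O(m\log(1/\epsilon))$. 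Combining with (i) forces $m=\Omega(k\log(n/k))$ for constant $\epsilon$.

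The hard part is formalising the capacity bound when $A$ and $\mathcal{A}$ are arbitrary, possibly randomised, and (in the adaptive case) chosen based on previous measurements: one must verify that the "WLOG unit norm" reduction is genuine and that the per-measurement capacity bound survives conditioning on the history. This is exactly where \cite{PW} do the real work, replacing the handwavy Fano sketch above with a careful direct-sum / augmented-indexing reduction; I would follow their approach in a formal writeup, with the $\sigma^2=\epsilon k/n$ scaling tuned precisely so that each measurement is information-theoretically worth $O(1)$ bits.
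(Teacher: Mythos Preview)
The paper does not actually prove this lemma: it is stated purely as a citation of \cite{PW}, Section~4, and is then used as a black box in the reduction of Appendix~\ref{app:lower}. There is therefore no ``paper's own proof'' to compare against.

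Your sketch is essentially the information-theoretic argument that \cite{PW} carries out (a Fano/packing lower bound on the information needed to identify $x$ up to small Hamming distance, combined with a Gaussian-channel capacity upper bound of $O(1)$ bits per measurement when $\sigma^2=\eps k/n$), and you correctly flag the delicate step as making the capacity bound rigorous for arbitrary randomized measurements. One small remark: as stated here the lemma is non-adaptive (a distribution over \emph{fixed} matrices $A$), so you need not worry about conditioning on measurement history. In that case the capacity step can be done in one shot without the row-normalization reduction: since $\mathrm{Cov}(x)=\tfrac{k}{n}I_n$ (by sign independence) and $Aw\sim N(0,\sigma^2 AA^T)$, one has
\[
I(x;A(x+w)\mid A)\ \le\ \tfrac12\log\frac{\det\bigl((k/n+\sigma^2)AA^T\bigr)}{\det\bigl(\sigma^2 AA^T\bigr)}\ =\ \tfrac{m}{2}\log(1+1/\eps),
\]
which combined with your Fano step gives $m=\Omega(k\log(n/k))$ directly. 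The adaptive extension you mention is a separate (harder) statement.
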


First, we note that we can replace $D_k$ with $U_k$, the uniform
distribution over $k$-sparse vectors in $\{0, 1, -1\}^n$ in
Lemma~\ref{l:PW}.  To see this, suppose we have an $(A, \mathcal{A})$
that works with $1/2$ probability over $U_k$.  Then for any $k$-sparse
$x \in \{0, 1, -1\}^n$, if we choose a random permutation matrix $P$
and sign flip matrix $S$, $PSx \sim U_k$.  Hence, the distribution of
matrices $APS$ and algorithm $\mathcal{A}'(x) =
\mathcal{A}((PS)^{-1}x)$ works with $1/2$ probability for any $x$, and
therefore on average over $D_k$.  This implies that $A$ has $\Omega(k
\log(n/k))$ rows by Lemma~\ref{l:PW}.  Hence, we can set $D_k = U_k$ in
Lemma~\ref{l:PW}.

Our algorithm works with $3/4$ probability over vectors $x$ that are
not necessarily $k$-sparse, but have a binomial number $B(n, k/n)$ of
nonzeros.  That is, it works over the distribution $U$ that is
$U_{k'} : k' \sim B(n, k/n)$.  With $1-e^{-\Omega(k)} > 3/4$ probability,
$k' \in [k/2, 2k]$.  Hence, our algorithm works with at least $1/2$
probability over $(U_{k'} : k' \sim B(n, k/n) \cap k' \in [k/2, 2k])$.
By an averaging argument, there must exist a $k' \in [k/2, 2k]$ where
our algorithm works with at least $1/2$ probability over $U_{k'}$; but
the lemma implies that it must therefore take $\Omega(k' \log (n/k'))
= \Omega(k \log (n/k))$ samples.

\section{Robust 2D FFTs}
\label{app:inefficient_2D}

This section outlines the straightforward generalization
of~\cite{HIKP2} to two dimensions, as well as how to incorporate the extra parameter $\wh{z}$ of already recovered coefficients.   Relative to our result of
Theorem~\ref{thm:general}, this result takes  more samples.
However, it does not require that the input be from a random
distribution and is used as a subroutine by Theorem~\ref{thm:general}
after decreasing the sparsity by a $\log^c n$ factor.

Because we use this as a subroutine after computing an estimate
$\wh{z}$ of $\wh{x}$, we actually want to estimate $\wh{x} - \wh{z}$
where we have oracle access to $x$ and to $\wh{z}$.

\begin{theorem}\label{thm:2dhikp}
  There is a variant of \cite{HIKP2} algorithm that will, given $x, \wh{z} \in
  \C^{\sqrt{n} \times \sqrt{n}}$, return $\wh{x'}$ with
  \[
  \norm{2}{\wh{x} - \wh{z} - \wh{x'}} \leq 2 \cdot
  \min_{k\text{-sparse } \wh{x^*}} \norm{2}{\wh{x} - \wh{z} -
    \wh{x^*}}^2 + \norm{2}{\wh{x}}^2 / n^c
  \]
  with probability $1-\alpha$ for any constants $c, \alpha > 0$ in
  time
  \[O( k \log (n/k)\log^2 n + \abs{\supp(\wh{z})} \log (n/k)\log
  n),\] using $O( k \log (n/k)\log^2 n)$ samples of $x$.\xxx{Added
    extra log}
\end{theorem}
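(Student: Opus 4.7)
The plan is to mimic the one-dimensional sparse-FFT algorithm of~\cite{HIKP2} in two dimensions, with an additional bookkeeping step that accounts for the already-recovered estimate $\wh{z}$. Recall that the one-dimensional algorithm proceeds in $O(\log(n/k))$ stages; in each stage it (i) hashes frequencies into $B = O(k)$ buckets using a pseudorandom permutation of the spectrum together with a flat-windowed filter, (ii) applies the location procedure \textsc{HIKPLocateSignal} $O(\log n)$ times to identify a candidate set of heavy coordinates, (iii) estimates their values by taking a median over independent hashings, and (iv) subtracts the recovered coefficients and passes the residual to the next stage with sparsity cut in half. Each stage uses $O(k \log n)$ samples and $O(k\log^2 n)$ time, so the totals are $O(k \log(n/k) \log n)$ samples and $O(k \log(n/k) \log^2 n)$ time.

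To lift this to two dimensions, I would replace the pseudorandom permutation by a random invertible affine map on $[\sqrt{n}]\times[\sqrt{n}]$ and take the filter to be a tensor product of one-dimensional flat filters. The hashing procedure \textsc{HashToBins} then becomes a $B_r\times B_c$ 2D FFT of a filtered, permuted restriction of $x$, with $B_rB_c=O(k)$; its cost per hashing is $O(k\log k)$ time and $O(k)$ samples. Localization in 2D is done by running \textsc{HIKPLocateSignal} independently along each axis (this is valid because the hashing separates coordinates along each axis after the affine permutation), and the median estimator and iteration-with-subtraction carry over verbatim. The $\ell_2/\ell_2$ guarantee from~\cite{HIKP2} is preserved because the RIP/randomness arguments used there only rely on the properties of the hash-plus-filter combination, all of which tensor cleanly.

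To incorporate $\wh{z}$: whenever the algorithm computes \textsc{HashToBins} on $x$, we want the bin values to reflect the spectrum of $x-\check{z}$, not $x$. Since each frequency in $\supp(\wh{z})$ lands in exactly one bucket under a given hashing, we can correct each bin by evaluating the filter at the permuted position of each $(i,j)\in\supp(\wh{z})$ and subtracting its contribution. This costs $O(\abs{\supp(\wh{z})})$ per hashing, versus the usual cost of reading $O(k)$ samples and doing a small FFT. With $O(\log n)$ hashings per stage and $O(\log(n/k))$ stages, the additional running time for $\wh{z}$-bookkeeping is $O(\abs{\supp(\wh{z})}\log(n/k)\log n)$, matching the claimed time; no extra samples of $x$ are needed for the correction.

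The main technical point to be careful about is the location step: the original proof assumes the residual at each stage is $2k$-sparse plus a small tail, whereas here the residual is $(\wh{x}-\wh{z})$ minus whatever has been recovered, which need not be sparse if $\wh{z}$ is large. The argument still goes through because the $\ell_2/\ell_2$ analysis of~\cite{HIKP2} only uses that the head of the signal has $k$-sparse support; the rest contributes only to the tail norm $\min_{k\text{-sparse }\wh{x^*}}\snorm{2}{\wh{x}-\wh{z}-\wh{x^*}}^2$, which is exactly what appears on the right-hand side of the guarantee. The additive $\snorm{2}{\wh{x}}^2/n^c$ term absorbs the polynomially small error coming from the finite precision of the flat window and from the failure events of location and estimation, by choosing the constants in the filter and the number of hashings sufficiently large. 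This yields the claimed sample, time, and error bounds with probability $1-\alpha$.
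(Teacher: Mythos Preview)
Your overall plan matches the paper's: lift the \cite{HIKP2} machinery to two dimensions via a 2D affine permutation, a tensor-product filter, and coordinate-wise location, then handle $\wh{z}$ by subtracting its contribution from each hashing. Two technical points need more care, though.

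First, the assertion that the hash-plus-filter properties ``tensor cleanly'' is misleading, and this is where the only genuinely new argument in the paper lives. A product permutation $(i,j)\mapsto(\sigma_1 i,\sigma_2 j)$ does \emph{not} give near-pairwise-independent hashing in 2D (if two frequencies share a row, that coordinate difference stays $0$ regardless of $\sigma_1$). You correctly propose a full invertible affine map $v\mapsto M(v-a)$ with $\det M$ odd, but you then need the 2D analogue of the 1D collision bound: for $v\neq 0$, $\Pr_M[Mv\in[-C,C]^2 \bmod \sqrt{n}]\le O(C^2/n)$. The paper proves this by showing $Mv$ is uniform on the orbit $\{u:G(u)=G(v)\}$, where $G(u)$ is the largest power of $2$ dividing both coordinates of $u$, and then counting lattice points in $[-C,C]^2$ on that orbit. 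Without this lemma the isolation and off-grid probability bounds (Claims~3.1--3.2 of \cite{HIKP2}) do not follow.

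Second, your per-hashing cost of ``$O(k)$ samples'' is off and hides exactly where the extra $\log n$ factor in the theorem comes from. Each 1D flat filter has time-domain support $O(\sqrt{B}\log n)$, so the tensor-product 2D filter has support $O(B\log^2 n)$; hence a single call to \textsc{HashToBins} costs $O(k\log^2 n)$ samples and time, not $O(k)$. This is the sole reason the 2D sample bound is $O(k\log(n/k)\log^2 n)$ rather than the 1D bound $O(k\log(n/k)\log n)$. Your accounting for the $\wh{z}$-subtraction step, on the other hand, is correct and matches the paper.
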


\begin{proof}
  We need to modify \cite{HIKP2} in two ways: by extending it to two dimensions and
  by allowing the parameter $\wh{z}$.  We will start by describing the adaptation to two dimensions. 

  The basic idea of~\cite{HIKP2} is to construct from Fourier
  measurements a way to ``hash'' the coordinates in $B = O(k)$ bins.
  There are three basic components that are needed: a
  \emph{permutation} that gives nearly pairwise independent hashing to
  bins; a \emph{filter} that allows for computing the sum of bins
  using Fourier measurements; and the \emph{location} estimation needs
  to search in both axes.  The permutation is the main subtlety.

  \paragraph{Permutation}  Let $\mathcal{M} \subset [\sqrt{n}]^{2
    \times 2}$ be the set of matrices with odd determinant.  For
  notational purposes, for $v = (i, j)$ we define $x_v := x_{i,j}$.

  \begin{definition}
    For $M \in \mathcal{M}$ and $a, b \in [\sqrt{n}]^2$ we define the
    \emph{permutation} $P_{M, a, b} \C^{\sqrt{n}\times\sqrt{n}} \to
    \C^{\sqrt{n}\times\sqrt{n}}$ by
    \[
    (P_{M, a, b}x)_v  = x_{M (v-a)}\omega^{v^TMb}.
    \]
    We also define $\pi_{M, b}(v) = M(v - b) \mod \sqrt{n}$.
  \end{definition}
  \begin{claim}
    $\wh{P_{M,a,b}x}_{\pi_{M^T,b}(v)} = \wh{x}_v\omega^{v^TM^Ta}$
  \end{claim}
  \begin{proof}
      \begin{align*}
    \wh{P_{M,a,b}x}_{M(v - b)} 
    &= \frac{1}{\sqrt{n}} \sum_{u \in [\sqrt{n}]^2} \omega^{u^TM(v-b)}(P_{M,a,b}x)_u\\
    &= \frac{1}{\sqrt{n}} \sum_{u \in [\sqrt{n}]^2} \omega^{u^TM(v-b)}x_{M (u-a)}\omega^{u^TMb}\\
    &= \omega^{v^TM^Ta}\frac{1}{\sqrt{n}} \sum_{u \in [\sqrt{n}]^2} \omega^{v^TM^T (u-a)}x_{M (u-a)}\\
    &= \wh{x}_i\omega^{v^TM^Ta}
  \end{align*}
  where we used that $M^T$ is a bijection over $[\sqrt{n}]^2$ because
  $\det(M)$ is odd.
  \end{proof}

  This gives a lemma analogous to Lemma~2.4 of~\cite{HIKP2}.

  \begin{lemma}\label{l:nearlyindependent}
    Suppose $v \in [\sqrt{n}]^2$ is not $0$.  Then
    \[
    \Pr_{M \sim \mathcal{M}}[ Mv \in [-C, C]^2 \pmod {\sqrt{n}}] \leq O(\frac{C^2}{n}).
    \]
  \end{lemma}
  \begin{proof}
    For any $u$, define $G(u)$ to be the largest power of $2$ that
    divides both $u_0$ and $u_1$. Define $g = G(v)$, and let $S = \{u
    \in [\sqrt{n}]^2 \mid G(u) = g\}$.  We have that $Mv$ is uniform
    over $S$: $\mathcal{M}$ is a group and $S$ is the orbit of $(0,
    g)$.

    Because $S$ lies on a lattice of distance $g$ and does not include
    the origin, there are at most $(2\floor{C/g}+1)^2-1 \leq 8(C/g)^2$
    elements in $S \cap [-C, C]^2$, and $(3/4)n/g^2$ total elements in
    $S$.  Hence the probability is at most $(32/3) C^2/n$.
  \end{proof}

  We can then define the ``hash function'' $h_{M,b} : [\sqrt{n}]^2 \to
  [\sqrt{B}]^2$ given by $(h_{M,b}(u)) = \round(\pi_{M,b}(u) \cdot
  \sqrt{n/B})$; i.e., round to the nearest multiple of $\sqrt{n/B}$ in
  each coordinate and scale down.  We also define the ``offset''
  $o_{M,b}(u) = \pi_{M,b}(u) - \sqrt{n/B}h_{M,b}(u)$.  This lets us
  give results analogous to Claims 3.1 and 3.2 of~\cite{HIKP2}:
  \begin{itemize}
  \item $\Pr[h_{M,b}(u) = h_{M,b}(v) < O(1/B)]$ for $u \neq v$.  In order for
    $h(u) = h(v)$, we need that $\pi_{M,b}(u) - \pi_{M,b}(v) \in
    [-2\sqrt{n/B}, 2\sqrt{n/B}]^2$.  But Lemma~\ref{l:nearlyindependent}
    implies this probability is $O(1/B)$.
  \item $\Pr[o_{M,b}(u) \notin [-(1-\alpha)\sqrt{n/B},
    (1-\alpha)\sqrt{n/B}]^2] < O(\alpha)$ for any $\alpha > 0$.
    Because of the offset $b$, $o_{M,b}(u)$ is uniform over
    $[-\sqrt{n/B}, \sqrt{n/B}]^2$.  Hence the probability is $2\alpha
    - \alpha^2 + o(1)$ by a volume argument.
  \end{itemize}
  which are all we need of the hash function.

  \paragraph{Filter}  Modifying the filter is pretty simple.
  Specifically,\cite{HIKP2} defined a filter $G \in \R^{\sqrt{n}}$ with support
  size $O(\sqrt{B} \log n)$ such that $\wh{G}$ is essentially zero
  outsize $[-\sqrt{n/B}, \sqrt{n/B}]$ and is essentially $1$ inside
  $[-(1-\alpha)\sqrt{n/B}, (1-\alpha)\sqrt{n/B}]$ for constant
  $\alpha$.  We compute the $\sqrt{B} \times \sqrt{B}$ 2-dimensional
  DFT of $x'_{i,j} = x_{i,j}G_i G_j$ to sum up the element in each
  bin.  This takes $B \log^2 n$ samples and time rather than $B \log
  n$, which is the reason for the extra $\log n$ factor compared to the one dimensional
  case.

  \paragraph{Location}  Location is easy to modify; we simply run it
  twice to find the row and column separately.
  
  In summary, the aforementioned adaptations leads to a variant of the ~\cite{HIKP2} algorithm that works in two dimensions, with running time $O( k \log (n/k)\log^2 n)$, using $O( k
  \log (n/k)\log^2 n)$ samples.
  
  \paragraph{Adding extra coefficient list  $\wh{z}$}
   The modification of the algorithm of~\cite{HIKP2} (as well as its variant above) is straightforward. The algorithm performs a sequence of iterations, where each iteration involves hashing the frequencies of the signal into bins, followed by subtracting the already recovered coefficients from the bins. Since the algorithm recovers $\Theta(k)$ coefficients in the first iteration, the subtracted list is always of size 
   $\Theta(k)$. 
   
   Given the extra coefficient list, the only modification to the
   algorithm is that the list of the subtracted coefficients needs to
   be appended with coefficients in $\wh{z}$.  Since this step does
   not affect the samples taken by the algorithm, the sample bound
   remains unchanged. To analyze the running time, let $k'$ be the
   number of nonzero coefficients in $\wh{z}$. Observe that the total
   time of the original algorithm spent on subtracting the
   coefficients from a list of size $\Theta(k)$ was $O( k \log
   (n/k)\log n)$, or $O(\log (n/k)\log n)$ per list coefficient. Since
   in our case the number of coefficients in the list is increased
   from $\Theta(k)$ to $k'+\Theta(k)$, the running time is increased
   by an additive factor of $O(k' \log (n/k)\log n)$.
\end{proof}

\end{document}